%% file: main.tex
\documentclass[letterpaper,12pt]{article}
\usepackage[T1]{fontenc}
\usepackage{amsmath}

\AddToHook{cmd/appendix/before}{%
    \crefalias{section}{appendix}%
    \crefalias{subsection}{appendix}
}

\newcommand{\vb}{\vec{b}}
\newcommand{\vi}{\vec{i}}
\newcommand{\vt}{\vec{t}}

\input{preamble.tex}

\title{List Decoding, Linear Hashing, and Furstenberg over $\F_q$}
\author{
Vinayak M. Kumar\thanks{Simons Institute for the Theory of Computing, UC Berkeley. \href{mailto:vmkumar@berkeley.edu}{vmkumar@berkeley.edu}. Supported by a Simons Institute Research Fellowship. Part of this work was conducted while the author was a graduate student at UT Austin, supported by a Jane Street Graduate Research Fellowship and a UT Austin Dean's Prestigious Fellowship Supplement.}
\and Geoffrey Mon\thanks{Department of Computer Science, University of Texas at Austin. \href{mailto:gmon@cs.utexas.edu}{gmon@cs.utexas.edu}. Supported by NSF Award CCF-2505865, an NSF Graduate Research Fellowship (DGE-2137420), and a UT Austin Dean’s Prestigious Fellowship Supplement. Part of this work was conducted while the author was visiting the Simons Institute for the Theory of Computing.}
}
\date{}

\begin{document}

\maketitle

\input{proof-pieces}

\end{document}

%% file: proof-pieces.tex
\begin{abstract}
    We give new bounds for
    list sizes of random linear codes at capacity,
    max loads of linear hash functions,
    and Furstenberg sets, over every finite field $\mathbb{F}_q$.
    \begin{enumerate}
        \item \textbf{Random linear codes} over $\mathbb{F}_q$ with rate $1 - H_q(p) - \eps$
        are $(p, O(q H_q(p)/\eps))$-list decodable with high probability
        for all values of $p, q, \eps$,
        including the high error regime.
        This nearly matches
        the list size lower bound of $H_q(p)/\eps$ due to Guruswami, Li, Mosheiff, Resch, Silas, and Wootters [\emph{IEEE Trans.\ Inf.\ Theory} 2022].
        Our bound is the first uniform improvement for $q > 2$
        since Guruswami, H\aa{}stad, and Kopparty [STOC~2010].

        \item \textbf{Linear hash functions} over $\mathbb{F}_q$
        hashing $n$ balls to $n$ bins achieve maximum load 
        $O(q \ln \ln q / {\ln q}) \cdot \ln n / {\ln \ln n}$, 
        both in expectation and with probability $1-o(1)$. 
       This nearly matches the lower bound of $\ln n / {\ln \ln n}$.
        Previously, only a 
        polylogarithmic upper bound was known for $q > 2$,
        due to Alon, Dietzfelbinger, Miltersen, Petrank, and Tardos [\emph{J. ACM} 1999].
    \end{enumerate}
    We reduce list decodability and linear hashing to strong Furstenberg set lower bounds, which we prove using a new polynomial method of multiplicity gaps.
    While previous polynomial methods analyze a set $S$ by studying polynomials that vanish on it, we consider polynomials that vanish everywhere, but with higher multiplicity inside $S$ than outside.
\end{abstract}

\section{Introduction}
\label{sec:intro}

Subspaces are a fundamental primitive in mathematics and theoretical computer science.
In this paper, we make progress on three intertwined questions on subspaces: list decodability of random linear codes, expected max load of linear hashing, and lower bounds on Furstenberg sets.

\subsection{List Decodability of Random Linear Codes}

An error-correcting code (or simply code) over a finite field $\F_q$ 
is a subset $C\subseteq \F_q^n$ of well-separated strings (known as codewords).
The separatedness of $C$ is measured by its \emph{distance} $d$, which is the minimum Hamming distance between two distinct elements of $C$. In an adversarial (Hamming) error model, where a transmitted codeword could be corrupted on any $< d/2$ coordinates, unique decoding is {always} information-theoretically possible
because the closest codeword is unique. 
However, it is easy to see that this promise breaks when we allow at least $d/2$ corruptions to take place.

A revolutionary idea in coding theory is the notion of \emph{list decoding}:
instead of hoping for a unique closest codeword to a noisy input,
we are satisfied if
the number of codewords close to a noisy string is small.
This weaker guarantee has nonetheless found many applications in both theory and practice,
and has spawned fruitful lines of research.
One such program is to find families
of list decodable
codes that achieve the best possible
parameters.

On the one hand, we want to maximize the \emph{rate}, $R\coloneqq \log_q |C|/n$,
which measures how much information each codeword contains,
because we want to minimize the code's redundancy (i.e., inefficiency).
On the other hand, we want to minimize the \emph{list size}, which is the largest 
number of codewords that are close (say, within distance $pn$) to
a single string in $\F_q^n$.
These two goals are in conflict with each other, because we can always maximize rate by blowing up list size and vice versa.
When the list size is at most $L$,
we say $C$ is \emph{$(p,L)$-list decodable}.

\emph{List decoding capacity}
refers to a particular rate threshold $1 - H_q(p)$
below which (for sufficiently large $n$)
list decoding
with constant list size is possible,
and above which constant list size is impossible. 
A code \emph{achieves capacity}
if it has rate close to this threshold
and constant list size,
and we are interested in which families of codes
do so.
Through a successful series of works,
we now know that one such family
is random linear codes.

\paragraph{Random linear codes.}
Random linear codes (RLCs) are one of the most studied classes of practical codes. Linear codes have efficient encoders and representations, as well as useful properties such as symmetry. These properties make random linear codes helpful,
for example as a building block
for constructing good codes
via concatenation.
Random linear codes are also intimately related to notions in cryptography such as LWE and LPN. 
Hence, it is of great interest to understand the properties of random linear codes. Indeed, there has been a long line of work starting in the 1950s studying 
properties of RLCs, and how close they are in strength to fully random codes.

Let us consider the context of list decoding capacity
in detail.
In general, if a code has
rate near capacity,
i.e.,
$1 - H_q(p) - \eps$,
then its list size
will grow as the \emph{capacity gap} $\eps$ shrinks.
Elias~\cite{elias91}
showed that
a fully random code
with rate gap $\eps$ from capacity
will
be $(p,H_q(p)/\eps)$-list decodable 
with high probability, for all
choices of parameters $q,p,\eps$.
Guruswami, Li, Mosheiff, Resch, Silas, and Wootters~\cite{glmrsw}
also show that
an RLC cannot outperform a fully random code,
because its list size
must be at least $H_q(p)/\eps$.
Thus,
researchers have been pushing towards
showing RLCs
have list decodability
which is 
just as strong as fully random codes.

\paragraph{List size bounds.}
In the binary case ($q=2$),
Guruswami, H\aa{}stad, Sudan, and Zuckerman~\cite{ghsz}
showed the existence
(improved by \cite{liwoo} to high probability) of linear codes with capacity gap $\eps$
which are $(p, O(1/\eps))$-list decodable
using potential functions.
This method was developed further by Li and Wootters~\cite{liwoo} to show random linear codes with rate gap $\eps$ are $(p, H_2(p)/\eps + 2)$-list decodable with high probability, thereby matching fully random codes
up to a small additive constant.
This essentially completely resolves
RLC list decoding for $q=2$.

For $q > 2$, 
we know much less,
as summarized in \cref{tab:list-size}.
A bound of Zyablov and Pinsker~\cite{zp81} gives a bound of $q^{O(1/\eps)}$ for all $p$. Guruswami, H\aa stad, and Kopparty~\cite{ghk} showed that random $\F_q$-linear codes with capacity gap $\eps$ are $(p, C_{p,q}/\eps)$-list decodable with high probability, where 
\[C_{p,q} \approx \exp\parens*{O\parens*{\frac{\log_2^2 q}{\min((1-1/q - p)^2, p)}}},\]
as analyzed by \cite[Theorem 7.1]{GM22}. For fixed $p$ and $q$, this 
bound has the optimal $\eps$ dependence in contrast with Zyablov and Pinsker~\cite{zp81}.
However, this bound's dependence on $q$ is quasipolynomial (worse than \cite{zp81}), and the bound exponentially grows as $p$ approaches $1-1/q$ (the \emph{high error regime}). In particular, if $\delta \coloneq 1-1/q-p$,
note that $C_{p,q}$ is exponential in $1/\delta^2$.
The difference 
between known list sizes for 
$q=2$ versus $q=3$ is quite unsettling---intuitively, incrementing the alphabet size should not drastically affect the list size in such a conceptual way!

\begin{table}[]
    \centering
    \caption{Uniform list decodability bounds for $q \ge 3$ and capacity gap $\eps$
    ($\delta \coloneq 1 - 1/q - p$)}
    \begin{tabular}{cll}
    \toprule
     \textbf{Source} 
     &\textbf{List size bound}\\
    \midrule
    \multicolumn{2}{l}{\emph{Fully random codes}}\\
    \cite{elias91}
    & $\sim H_q(p)/\eps$\\
    \midrule
    \multicolumn{2}{l}{\emph{Random linear codes}}\\
    \cite{glmrsw}
    & $\ge H_q(p)/\eps$
    \\
    \cite{zp81}
    & $\le q^{O(1/\eps)}$\\
     \cite{ghk} 
     & $\le \exp(O(\frac{\log_2^2 q}{\min\{p,\delta^2\}}))/\eps$\\
     \textbf{this work}
     & $\le O(q H_q(p)/\eps)$\\
    \bottomrule
    \end{tabular}
    \label{tab:list-size}
\end{table}

\paragraph{High error regime.} This motivated a line of work
proving list size bounds for $\F_q$-RLCs
with $q \ge 3$
in the high error regime \cite{cgv,Woo13,rw14,RW17}.
Recall that a fully random code has a uniform list size bound of $H_q(p)/\eps$,
\emph{even} if we set parameters
$\delta=o(1),\eps = o(1)$. Can we prove the same bound for random $\F_q$-linear codes in the same regime? Some improvements to \cite{ghk} were made with the condition that $\delta$ and $\eps$ decreased to $0$ in a coupled manner \cite{Woo13,RW17}. But the dream of a uniform list bound of $C/\eps$ for any $p,q,\eps$, and absolute constant $C$ remains open.

\subsection{Linear Hashing}

A more general problem is the question of hashing using a random linear map, which has its own history. 
Hashing is the idea of
randomly mapping a massive universe $U$ to 
a relatively tiny set of $n \ll \abs{U}$ bins.
This compression is necessarily very lossy, and many collisions will occur: many items in the universe will map to the same bin simply by the pigeonhole principle.
But if there is a relatively small set $S$ of items that we actually care about, then hopefully by exploiting the randomness,
collisions between items in $S$
will be rare.
Then, we can use the hash to
pretend that
our universe is essentially of size $n$, and reap the benefits in our algorithm, data structure, etc.

The ubiquitousness of hashing in theoretical computer science cannot be understated. Hash functions appear in data structures or in subroutines of algorithms, as ingredients in pseudorandom constructions, and even as convenient mathematical tools for analyzing objects that have nothing to do with hashing.
A common complaint in these applications is that a completely random function would have amazing properties in these contexts, but at an enormous cost.
Hashing research
seeks efficient hash functions which imitate the properties of these expensive random maps.

One such property of great interest is the maximum number of balls in any bin, known as the \emph{max load},
which is a canonical measure of hashing quality that is critical to randomized algorithm design~\cite[Chapter 5]{mitzupfal}.
Suppose that we have $n$ balls (the relevant items from the universe) and a hash table with $n$ bins. 
Max load measures the imbalance of the table,
which is relevant for,
e.g.,
the performance of hash tables
that use chaining.
In this situation, each bin has an associated linked list that stores all balls that hash to it.
To check if a hash table contains a particular key,
we need to compute its would-be bin
and then sweep through that linked list.
Consequently, the worst-case running time of this operation is proportional to the max load,
so we want it to be small. 

A fully random function 
hashing $n$ balls into $n$ bins will have expected max load $\sim {{\ln n}/{\ln\ln n
}}$.
However,
the space and time
required to deal with such functions is enormous---$O(U \log n)$ bits
are required to even describe such a function, which is large enough to defeat the purpose of hashing.
Can we get similar performance
from a simpler and cheaper family of functions?

\paragraph{Linear hash functions.}
A natural candidate is the random linear map.
For a prime power $q$,
let $n = q^\ell$ for an integer $\ell \ge 1$,
let $u\ge \ell$ be arbitrarily large, and let $h: \F_q^u\to\F_q^\ell$ be a random linear map. For \emph{any} set of $n$ balls $S\subset \F_q^u$, what is the average max load when $h$ hashes $S$ into the set of bins $\F_q^\ell$? This question had been studied since the seminal work of Carter and Wegman introduced universal hash functions \cite{cw79}. 
Linear hash functions
are inexpensive to evaluate and
require only 
$O(\log U \log n)$ bits to describe, so they are an attractive alternative to full randomness.

On the one hand,
we know that linear hash functions cannot have better expected max load
than fully random hash functions.
If we pick the set of balls
completely randomly,
this simulates
the behavior of a 
completely random hash function
and shows that
the expected max load for linear hash functions over any $q$
is at least
$(1-o(1)){\ln n}/{\ln\ln n}$.
In addition,
Alon, Dietzfelbinger, Miltersen, Petrank, and Tardos~\cite{admpt} constructed a set of $n$ balls for which the expected max load is $\Omega(q^{1/3})$, 
which is relevant
when $q$ grows with $n$.

On the other hand,
how much worse
are linear hash functions
compared to fully random hash functions?
Regarding expected max load upper bounds, 
Carter and Wegman~\cite{cw79} established an expected max-load of $O(\sqrt{n})$ on all pairwise independent hash functions. Subsequent
improvements, 
summarized in \cref{tab:hashing},
culminated in Alon, Dietzfelbinger, Miltersen, Petrank, and Tardos~\cite{admpt}, 
whose bound was
later optimized by Babka~\cite{babka18}
to $O(q^{O(\log_2 \log_2 q)}(\log_q n)^{\log_2 q})$.\footnote{\label{footnote:admpt}For $q > 3$, \cite{admpt} only claimed a bound of $O_q((\log n\log\log n)^{\log_2 q})$ without proof, and \cite{babka18} removed the $\log \log n$ term for $q=2$ but did not address the $q>3$ case. We obtain a bound with explicit $q$ dependence by reconstructing the argument we predict they had in mind in \cref{sec:admpt}, with optimizations to remove the $\log\log n$.}
This was the state of affairs until
Jaber, Kumar, and Zuckerman~\cite{jkz25} resolved the $q=2$ case with a bound of $O(\ln n/{\ln\ln n})$;
the leading factor was later eliminated by Bshouty~\cite{bshouty26}.
Peculiarly, however, their proof does not appear to extend to
$q > 2$, and gives \emph{no} bound for these cases---not even for $q=3$!
Hence, the state of the art for $q\ge 3$ remained the polylogarithmic bound of \cite{admpt}. 

\begin{table}[]
    \centering
    \caption{Expected max load bounds for random linear hash functions ($\gtrsim$ and $\lesssim$ hide constants)}
    \begin{tabular}{cll}
    \toprule
     \textbf{Source} & \textbf{Bound for $\bm{q > 2}$} & \textbf{Bound for $\bm{q=2}$}\\
    \midrule
    \multicolumn{3}{l}{\emph{Lower bounds}}\\
    folklore
    & $\ge (1-o(1))\frac{\ln n}{\ln\ln n}$
    & $\ge (1-o(1))\frac{\ln n}{\ln\ln n}$\\
    \cite{admpt}
    & $\gtrsim q^{1/3}$
    & $\gtrsim 1$\\
    \midrule
    \multicolumn{3}{l}{\emph{Upper bounds}}\\
     \cite{cw79} 
     & $\lesssim 
     \sqrt{n}
     $
     & $\lesssim 
     \sqrt{n}
     $ \\
     \cite{mcw78} 
     & $\lesssim 
     q^{1/2} n^{1/4}
     $
     & $\lesssim 
     n^{1/4}
     $ \\     
     \cite{mv81} 
     & $\lesssim q^{O(\sqrt{\log_q n})}$
     & $\le 2^{O(\sqrt{\ln n})}$ \\
     \cite{admpt,babka18}\textsuperscript{\ref{footnote:admpt}}
     & $\lesssim 
     {
     q^{O(\log_2 \log_2 q)}(\log_q n)^{\log_2 q}
     }$%
     & $\lesssim \ln n$ \\
     \cite{jkz25,bshouty26}
     &
     ---
     & $\le (1+o(1)){\frac{\ln n}{\ln \ln n}}$\\
     \textbf{this work}
     & {$\lesssim {
     {\frac{q \ln \ln q}{\ln q} \cdot {\frac{\ln n}{\ln \ln n}}}}$}
     & {$\lesssim {
     {\frac{\ln n}{\ln \ln n}}}$} \\
    \bottomrule
    \end{tabular}
    \label{tab:hashing}
\end{table}

As with list decodability, our techniques thus far are too rigid to work over $\F_3$ or other fields.
In fact, these techniques
require
additional properties,
beyond linearity,
which hold over $\F_2$
but vanish over larger fields.
Indeed, Jaber, Kumar, and Zuckerman~\cite{jkz25} 
comment that their argument works for any combinatorial hash function which iteratively groups universe elements in a $q$-wise independent manner.
This property happens to hold for linear maps over $q=2$, where pairwise independence coincides with the binary field,
but this ``happy coincidence'' ends  at $q\ge 3$.
Thus, we need a new a technique that uses \emph{only} the linearity of a hash function to deduce pseudorandom properties,
rather than extra combinatorial properties which happen
to be true over $\F_2$.

\subsection{Furstenberg Sets}
Finally, consider the fundamental question 
underlying the max load 
of linear hash functions
\emph{and}
the list decodability of RLCs:
determining the optimal size
of a \emph{Furstenberg set} over a finite field.
Before giving the definition,
let us begin with a much more classical notion: Kakeya sets over $\R^n$.
These are subsets that contains a unit line segment in every direction.
The famous Kakeya conecture asks whether Kakeya sets over $\R^n$ must have Hausdorff dimension $n$. This was proven for $n=2$ by Davies \cite{Davies}, and for $n=3$ only very recently by an amazing work of Wang and Zahl \cite{wangzahl}. The proof for general $n$ is a longstanding open question.

A sister question to the Kakeya conjecture is the Furstenberg conjecture asked by Wolff \cite{Wol99}.
This asks what the minimum Hausdorff dimension of a subset $S \subset \R^2$ must be such that for every direction, there exists a line whose intersection with $S$ is lower bounded in Hausdorff dimension. This question was recently resolved by Orponen and Shmerkin \cite{orponen-shmerkin} and by Ren and Wang~\cite{renwang}.

Time has made it apparent that this question becomes very hard for large $n$ (even $n= 3, 4, 5, \dots$). In order to stimulate progress for the high-dimensional case, Wolff \cite{Wol99} introduced the finite field analog of the Kakeya conjecture. If one takes the heuristic that the field $\R^n$ can be approximated by $\F_q^n$ with $q\to \infty$, and sets in $\R^n$ with Hausdorff dimension $d$ correspond to sets in $\F_q^n$ of size $\Theta(q^d)$, the finite field version of the Kakeya conjecture asks whether sets $S\subset \F_q^n $ which contain a line of every slope has size $|S|\ge \Omega_n(q^n)$ (since $q\to\infty$, the hidden constant should not depend on $q$). This was resolved by Dvir \cite{Dvir2009}, whose work kicked off the use of polynomial methods for Kakeya and Furstenberg set lower bounds over finite fields. 

Upon the resolution of Wolff's finite-field Kakeya lower bound conjecture by Dvir \cite{Dvir2009},
generalizations of Kakeya and Furstenberg to higher dimensions were studied, where sets must contain
shifts of every $r$-dimension linear subspace (``slope'') for a fixed parameter $r$.  This notion has seen great attention over the reals, \cite{obe,FalconerMattila2016,HeraKeletiMathe2019,Hera19,dov,bright-dhar}. 
Formally:
\begin{definition}[Kakeya sets]
    A set $S\subset \F_q^n$ is
    \emph{$(\delta, n,r)$-Kakeya}
     if for a $\delta$-fraction of dimension-$r$ subspaces $V$, there exists a shift $x\in \F_q^n$ such that $(x+V) \subseteq S$. If $\delta$ is surpressed, assume $\delta=1$.
\end{definition}
The traditional Kakeya sets over $\F_q^n$ correspond to $\delta=1$ and $r=1$. The study of $(n,r)$-Kakeya
with $r > 1$ was initiated
by Ellenberg, Oberlin, and Tao~\cite{eot}. Let us also formalize Furstenberg:

\begin{definition}[Furstenberg sets]
    A set $S\subset \F_q^n$ is \emph{$(\delta, n, r, T)$-Furstenberg} if for a $\delta$-fraction of dimension-$r$ subspaces $V$, there exists a shift $x\in \F_q^n$ such that $|(x+V) \cap S|\ge T$. If $\delta$ is surpressed, assume $\delta=1$.
\end{definition}
Note that $(n,r)$-Kakeya
corresponds to setting $\delta=1$
and $T = q^r$.
$(n,r,T)$-Furstenberg sets were first studied by Ellenberg and Erman \cite{EE16}.
$(n,r)$-Kakeya and $(n,r,T)$-Furstenberg sets  have received continued attention throughout the past decade  \cite{eot,KLSS, EE16,SS08, BuckChao,ddldrizzy, dhardvir, ddl-simple}, and have 
been used in extractor constructions \cite{dkss, dw11} and lattice coverings \cite{orw22,orw25}.
    Extensions of the polynomial method of Dvir \cite{Dvir2009} led to optimal lower bounds for $(n,r)$-Kakeya sets in various regimes \cite{dkss, KLSS}. The polynomial method, along with point-line incidence theory, gives an array of $(n,r, T)$-Furstenberg set lower bounds \cite{eot,ddldrizzy, dhardvir}. One of our focuses is to prove stronger lower bounds on $(n,r, T)$-Furstenberg sets.
\subsection{Our Results}

We prove strong lower bounds for Furstenberg sets in a broad range of parameters. This is the main technical input for our other results on linear hashing and RLC list decodability.

\begin{theorem}[simplified \cref{thm:furst-dim-reduce}]
\label{thm:intro-main}
Let $\delta > 0$,
$n\ge r\ge 1$, $q$ be a prime power, and $1\le T\le q^r$.
Suppose $S$ is $(\delta, n, n-k, T)$-Furstenberg.
Then,
\[
\abs{S} \ge \frac{\delta T}{8}\cdot 
q^k \exp\parens*{-\frac{8q(k+3\log_q T)}{T}}.
\]
\end{theorem}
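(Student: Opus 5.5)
The plan is to translate the Furstenberg condition into a hashing statement, reduce the ambient dimension, and then run a multiplicity-gap polynomial argument. A shift $x+V$ of a codimension-$k$ subspace $V$ is exactly a fiber of a linear surjection $\pi_V:\F_q^n\to\F_q^n/V\cong\F_q^k$. So $S$ being $(\delta,n,n-k,T)$-Furstenberg says: for a $\delta$-fraction of kernels $V$, some fiber of $\pi_V$ contains at least $T$ points of $S$. Equivalently, a uniformly random linear hash into $q^k$ bins has max load at least $T$ on $S$ with probability at least $\delta$. I will argue by contradiction, assuming $|S|$ is below the claimed bound.

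\textbf{Step 1 (dimension reduction).} This step is what the name \cref{thm:furst-dim-reduce} suggests. Compose with a random linear map $\phi:\F_q^n\to\F_q^{m}$, where $m=k+\lceil 3\log_q T\rceil+O(1)$. Each relevant quotient $\pi_V$ should factor through $\phi$ with probability at least $1/2$. If so, $\phi(S)$ is $(\delta/2,m,m-k,T')$-Furstenberg with $T'$ close to $T$, because few of the $T$ collided points merge under $\phi$ by pairwise independence and Markov. This explains why $k+3\log_q T$ appears in the exponent, and where the losses $\delta/8$ and $T/8$ in the constants come from. After this step I may assume $n=k+O(\log_q T)$.

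\textbf{Step 2 (multiplicity-gap polynomial).} Fix an order $s$ and degree $d$. Consider the space $\mathcal{P}$ of polynomials of degree at most $d$ that vanish to order at least $s$ at every point of $\F_q^n$. Such polynomials are generated by products of the $(x_i^q-x_i)$, so $\dim\mathcal{P}$ can be counted explicitly. Requiring order $s+1$ on $S$ imposes at most $|S|\binom{n+s-1}{n-1}$ linear conditions, one for each Hasse derivative of exact order $s$ at each point. If $|S|$ is small, there is a nonzero $P\in\mathcal{P}$ with this multiplicity gap on $S$.

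Now restrict $P$ to a good flat $x+V$. The restriction still vanishes to order $s$ everywhere on the flat. The order-$s$ "leading part" of $P$ along the flat then behaves like a polynomial of degree about $d-sq$ on $\F_q^{n-k}$, and it vanishes at the at least $T$ points of $S\cap(x+V)$. A multiplicity Schwartz--Zippel bound, applied to this leading part, should force it to vanish identically on the flat once $T$ exceeds roughly $q^{n-k}\cdot\frac{d-sq}{q}$ in the appropriate normalization. That gives order $s+1$ on all of $x+V$. Because a $\delta$-fraction of directions is good, I will use an averaging argument over directions, as in the Kakeya method of multiplicities. This should propagate order $s+1$ to all of $\F_q^n$, or to a set large enough that $P$ lies in the next level of the filtration. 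Iterating $s\mapsto s+1$ then contradicts $\deg P\le d$.

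\textbf{Main obstacle.} The hard part is the parameter choice that yields the factor $\exp(-8q(k+3\log_q T)/T)$. Heuristically, each gained order of vanishing costs a factor of about $(1-T/(q\cdot q^{\,\cdot}))$ in the counting, and one needs about $n$ such steps. Optimizing $s$ and $d$ with $n=k+3\log_q T$ should produce $\exp(-qn/T)$. I would first try $s\approx n$ and $d\approx sq(1-1/T)$ and check the binomial estimates. The delicate point is making the propagation step work with only a $\delta$-fraction of good directions without losing more than the factor $\delta$.
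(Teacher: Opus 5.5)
\textbf{Step 2 has a genuine gap.} A multiplicity gap of a single order ($s+1$ on $S$, $s$ elsewhere) yields nothing in the sparse regime. Your own Step 1 puts you in that regime: $r=n-k\approx 2\log_q T$, so $T<q^{r-1}$.

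\begin{itemize}
\item Any nonzero $P\in\mathcal P$ has degree at least $qs$. On a good flat, the only new information is $T$ extra units of multiplicity on top of the $q^r s$ already present. So Schwartz--Zippel can force the order-$s$ leading part to vanish only when $d-qs<T/q^{r-1}<1$. Your own condition $T>q^{r-1}(d-sq)$ says the same thing. In other words, it works only when $d=qs$.
\item But $\mathcal P_{\le qs}$ is spanned by the products $\prod_i(x_i^q-x_i)^{a_i}$ with $\sum_i a_i=s$. At every $b\in\F_q^n$, the order-$s$ Hasse derivatives of $\sum_a c_a\prod_i(x_i^q-x_i)^{a_i}$ are $\pm c_a$. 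So demanding order $s+1$ at even one point of $S$ kills every $c_a$. Your count $\dim\mathcal P_{\le d}>|S|\binom{n+s-1}{n-1}$ can therefore never be met at $d=qs$.
\item For $d>qs$, a nonzero leading part of degree $d-qs\ge 1$ on $\F_q^{r}$ can vanish on $q^{r-1}\ge T$ points. So the propagation step has nothing to work with.
\item The iteration $s\mapsto s+1$ has no fuel either. Once $P$ vanishes to order $s+1$ everywhere, no order-$(s+2)$ information on $S$ remains. A polynomial of degree $\le d$ vanishing to order $s+1$ everywhere is contradictory only in the degenerate case $d<q(s+1)$ above.
\end{itemize}
Consequently, your heuristic of ``$n$ steps, each costing a factor'' has no argument behind it.

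\textbf{The missing idea is a gap proportional to the multiplicity.} Impose order $M$ on $S$ and $(1-\eps)M$ off $S$, with $\eps\approx Cq/(\delta T)$ and $M\to\infty$.
\begin{itemize}
\item On a good flat, the admissible degree slack over $q(1-\eps)M$ becomes $T\eps M/q^{r-1}$, which grows with $M$.
\item For every $\norm{\vi}<m=\lceil(q\beta_\eps M-d)/(q-1)\rceil$, the restriction $P^{(\vi)}(\vb+V\vt)$ vanishes identically.
\item The unshifting step transfers this to $(P_H)^{(\vi)}(V\vt)$.
\item Multiplicity Schwartz--Zippel for $P_H$ over $\F_q(\vt)$, on the product set $\{V\vt\}$, gives $d\ge\delta q^r m$.
\item Comparing with constraint counting, where $\binom{(1-\eps)M+n-1}{n}/\binom{M+n-1}{n}\to(1-\eps)^n$, gives \cref{thm:technical-main}. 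The factor $\exp(-O(qn/T))$ is exactly this $(1-\eps)^n$.
\end{itemize}

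\textbf{Handling $\delta$.} That bound still has $\delta$ inside the exponent. The paper moves it outside, which resolves your ``delicate point,'' without any propagation argument. It takes a union of at most $2/\delta$ random invertible images of $S$ (\cref{lem:random-rotation}). This raises the density of good directions to $1/2$ before the polynomial bound is applied (\cref{cor:furst-improved-delta}).

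\textbf{A smaller issue in Step 1.} The claim that a fixed $\pi_V$ factors through $\phi$ with probability $1/2$ is false, since factoring requires $\ker\phi\subseteq V$. The paper instead:
\begin{itemize}
\item samples $W\le_{2t}\F_q^{k+2t}$ and sets $V=h^{-1}(W)$, which is still uniform;
\item uses the birthday bound (\cref{lem:collision}) on $h|_V$ to keep all $T\le q^t$ points distinct;
\item fixes an $h$ achieving the average fraction $\delta/2$ of good $W$.
\end{itemize}
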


We did not attempt to optimize constants in the interest of clarity.
However, the 
leading constant can be optimized at the cost of blowing up the constant inside the exponent, and vice versa.
For an intuitive way of interpreting this bound, first set $\delta = 1$.  Any set of size $Tq^k$ must be $(1,n, n-k, T)$-Furstenberg by the pigeonhole principle. Our lower bound implies that this is optimal up to a mild factor. Furthermore, this bound is independent of $n$, and shines when $k$ and $T$ are small. In our applications, we will have $n$ arbitrarily large, and $k,T$ small, so this is precisely the bound shape that we need.

Let us compare our bound with two relevant known bounds: one in the regime $T$ is large, and the other when $T$ is small and $\delta = 1$. The lower bound of Dhar and Dvir \cite[Theorem 6.14]{dhardvir} is quite weak when $T$ is bounded away from $1$, but has the strong shape of $|S|/q^n\ge \delta Tq^k\exp(-qn/T)$ when $T = 1-o(1)$.  This nearly matches our bound, with the exception of the $n$ in their exponent rather than our (potentially much) smaller $k+3\log_q T$. The other lower bound is of Dhar, Dvir, and Lund \cite[Theorem 2]{ddl-simple}, who show a bound only when $\delta = 1$, in which the shape of $|S|\ge Tq^k(1-\sqrt{q2^{k+7}/T})$ is obtained.
This gives near-pigeonhole bounds for many values of $T$, and achieves a bound independent of $n$ like \cref{thm:intro-main}.
However, this bound is trivial unless $T \gg 2^k$,
and using the heuristic
\[
1-\sqrt{q 2^{k+7}/T} \approx
\exp(-\sqrt{q 2^{k+7}/T}),
\]
this bound has a exponentially
worse dependence on $k$
inside the exponential term.
Our bound unifies and optimizes these two prior bounds. 

The Furstenberg parameter
regime relevant to our applications is the non-classical regime of fixed $q$ and growing $n$.
In this regime, the ``mild'' exponential factor in our bound is actually necessary, up to constants.
Intriguingly, for these parameters,
\emph{random sets} are optimal Furstenberg sets (see \cref{apx:tight}).  This is in stark contrast to Furstenberg and Kakeya set constructions in the classical regime of fixed $n$ and growing $q$, where constructions are algebraic in nature.

\paragraph{Linear hashing.} For our first application, we achieve optimal expected max load when hashing $n$ balls into $n$ bins using a random $\F_q$-linear map.
 For a function $h: A\to B$ and subset $S\subset A$, define \[M(S,h)\coloneqq \max_{b\in B} |h^{-1}(b)\cap S|\] to be the \emph{max load} when $h$ is viewed as a hash function mapping $S\subset A$ into $B$.
 Then,
\begin{theorem}[simplified \cref{thm:eml-final}]
\label{thm:intro-eml}
For sufficiently large $\ell$, the following holds. For \emph{any} prime power $q$, integers $u\ge \ell$ and $n\coloneqq q^\ell$,  subset $S\subset \F_q^u$ of size $n$, and uniformly random linear map $h:\F_q^u\to\F_q^\ell$,
we have \[\E_h [M(S,h)]\le \frac{11q\ln\ln q}{\ln q} \cdot {\frac{\ln n}{\ln\ln n}}.\]
\end{theorem}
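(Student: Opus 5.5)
We reduce the max load bound to the Furstenberg lower bound of \cref{thm:intro-main} via a union bound over ``bad'' kernels. Set $T \coloneqq \lceil C\frac{q\ln\ln q}{\ln q}\cdot\frac{\ln n}{\ln\ln n}\rceil$ for a suitable constant $C$. The max load is at least $T$ exactly when some fiber $h^{-1}(b)=x+\ker h$ contains $T$ points of $S$. Moreover, $\ker h$ is (after conditioning on $h$ being surjective, which fails with probability $q^{\ell-u}$-type small and contributes negligibly) a uniformly random subspace of codimension $\ell$. So
\[
\Pr_h[M(S,h)\ge T]=\delta_T,
\]
where $\delta_T$ is the fraction of codimension-$\ell$ subspaces $V$ such that some shift $x+V$ meets $S$ in at least $T$ points. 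In other words, $S$ is $(\delta_T,u,u-\ell,T)$-Furstenberg.

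Applying \cref{thm:intro-main} with $k=\ell$ and $|S|=n=q^\ell$ gives
\[
q^\ell\ \ge\ \frac{\delta_T T}{8}q^\ell\exp\!\parens*{-\frac{8q(\ell+3\log_q T)}{T}},
\]
that is,
\[
\delta_T\le \frac{8}{T}\exp\!\parens*{\frac{8q(\ell+3\log_q T)}{T}}.
\]
This alone only gives $\delta_T\lesssim 1/T$ once $T\gtrsim q\ell=q\log_q n$, which is too weak: it is the wrong threshold by a factor $\ln\ln n$, and the tail is not summable. So the plan must instead \emph{boost} the Furstenberg bound by passing to a random sub-structure. This is the main obstacle.

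To get around it, refine the reduction. A $T$-heavy fiber with $T\approx \frac{\ln n}{\ln\ln n}$ should be exponentially rare, roughly $n\cdot\binom{n}{T}q^{-\ell(T-1)}\approx n/T!$. To mimic this, apply the Furstenberg bound not to $S$ but to a random subset $S'\subseteq S$ keeping each point with probability $\rho$. A heavy fiber $x+V$ with $|(x+V)\cap S|\ge T$ retains at least $T'=\rho T/2$ points with constant probability. Hence $S'$ is $(\Omega(\delta_T),u,u-\ell,T')$-Furstenberg, while $\E|S'|=\rho n$. Then \cref{thm:intro-main} gives
\[
\delta_T\lesssim \frac{\rho}{T'}\exp\!\parens*{\frac{8q(\ell+3\log_q T')}{T'}}.
\]
Alternatively, to capture the factorial decay, project further with an independent random linear map $g:\F_q^\ell\to\F_q^{\ell'}$, composing fibers so that $k$ shrinks to $\ell'$ while $|S|$ stays $n$. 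Here the exponent depends on $k=\ell'$ rather than $\ell$, so choosing $\ell'$ small with $T\gg q\ell'$ yields
\[
\delta_T\le \frac{8n}{Tq^{\ell'}}e^{O(q\ell'/T)}.
\]
I would try to optimize $\ell'$ and $T$ jointly: for each threshold $t\ge T$, pick $q^{\ell'}\approx n\cdot t^{\Theta(t\ln q/(q\ln\ln q))}$ (or similar). The aim is to make $\Pr[M\ge t]$ decay geometrically in $t$ beyond $T$, with the $q\ln\ln q/\ln q$ factor emerging from balancing $q\ell'/t$ against $\ell'\ln q$.

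Finally, bound the expectation by summing the tail:
\[
\E[M]\le T+\sum_{t\ge T}\Pr[M\ge t]\le T+o(1)\cdot T.
\]
Rounding and the surjectivity conditioning cost lower-order terms for large $\ell$, and constants are tuned to get $11$. The crux I expect to fight with is the previous step: extracting superpolynomial-in-$t$ tail decay from a Furstenberg bound whose main term is only linear in $\delta$. That will likely require the dimension-reduced version \cref{thm:furst-dim-reduce} applied to auxiliary sets (for example, product or tensored sets encoding $T$-tuples in a fiber) rather than to $S$ directly.
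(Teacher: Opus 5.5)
\textbf{Verdict: the proposal has a genuine gap.} Your identification of $\Pr_h[M(S,h)\ge T]$ with the Furstenberg parameter $\delta$ is exactly the paper's first step (\cref{thm:mn-tail-direct}). But the proposal never actually bounds $\E_h[M(S,h)]$, and the diagnosis that sends you into detours is mistaken.

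The direct bound is not stuck at threshold $q\ell$. With $\abs{S}=q^\ell$ it reads $\Pr[M(S,h)\ge t]\le \frac{9}{t}\exp\bigl(8q(\ell+2\lceil\log_q t\rceil)/t\bigr)$. At $t=9q\ell/\ln\ell$ the exponential factor is only about $\ell^{8/9}$, so the tail there is already $O(\ln\ell/(q\ell^{1/10}))=o(1)$. Since $q\ell/\ln\ell=\frac{q}{\ln q}\cdot\frac{\ln n}{\ln\ln n-\ln\ln q}$, this is precisely the scale of the paper's formal bound $10q\ell/\ln\ell$. The only real defect is that a tail of order $1/t$ is not integrable, and your proposed repairs cannot fix that:
\begin{itemize}
\item Subsampling $S$ at rate $\rho$ leaves the prefactor at $\rho/T'=2/T$ and worsens the exponent, because the Furstenberg bound is linear in both $\abs{S}$ and $T$.
\item Composing with $g$ into $\F_q^{\ell'}$ coarsens the fibers. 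That upper bounds $\Pr[M(S,h)\ge T]$ only when $\ell'\le\ell$, and then the prefactor is $8q^{\ell-\ell'}/T\ge 8/T$.
\item Your later choice $q^{\ell'}>n$ would refine the fibers instead. That controls a smaller load and says nothing about $M(S,h)$.
\end{itemize}
So no way of feeding $S$, its subsamples, or coarsenings of $h$ into a bound of the shape $\abs{S}\gtrsim\delta Tq^k$ beats order $1/T$. The ``tensored sets'' idea is left unspecified, so as written there is no far-tail estimate at all.

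The paper does not try to extract faster decay from Furstenberg. It splits $\E_h[M(S,h)]=\int_0^\infty\Pr[M(S,h)\ge t]\,dt$ at $9q\ell/\ln\ell$ and at $\gamma=\exp(q\ell^{1/20})$:
\begin{itemize}
\item The first piece is at most $9q\ell/\ln\ell$.
\item On the middle range the exponential factor is nonincreasing in $t$, hence at most $\ell^{9/10}$. The non-integrable $1/t$ tail therefore costs only $O(\ell^{9/10}\ln\gamma)=O(q\ell^{19/20})=o(q\ell/\ln\ell)$.
\item Beyond $\gamma$ it imports the much stronger tail bound of Alon--Dietzfelbinger--Miltersen--Petrank--Tardos, as reconstructed in \cref{thm:ADMPT-bound}: $\Pr[M(S,h)>wq^{15+\log_2\log_2 q}\ell^{\log_2 q}]\le 2w^{-\log_q w}$. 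This is proved by an iterated-squaring covering argument (\cref{lem:square-decay-tail} and \cref{lem:covering}), not the polynomial method, and its contribution over $[\gamma,\infty)$ is at most $1$.
\end{itemize}

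That far-tail bound, with superpolynomial decay already in force at $t=\exp(q\ell^{1/20})$, is the missing ingredient. Generic pairwise-independence bounds of order $n/t^2$ would not substitute for it. They only make the remaining integral small from $t=n^{1-o(1)}$ on, and for fixed $q$ the middle range would then cost about $\ell^{9/10}\cdot\ell\ln q\gg q\ell/\ln\ell$.

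A smaller point: when $u-\ell=O(1)$, $h$ fails to be surjective with constant probability, so that case is not negligible. The paper (\cref{thm:remove-surject}) instead pads $S$ with zero coordinates and $h$ with random columns up to a large universe. This leaves the bin assignments, and hence the load distribution, unchanged.
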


Therefore, for constant $q$, we achieve optimal max load (up to constants), resolving a question of \cite{admpt} in this regime. In fact, our techniques also show that optimal max load is achieved with probability $1-o(1)$. For $q=2$, our theorem gives an alternative proof of optimal expected max load to the potential-based argument of Jaber, Kumar, and Zuckerman \cite{jkz25}. 

We note that Alon, Dietzfelbinger, Miltersen, Petrank and Tardos \cite{admpt} show the existence of a subset $S$ of size $q^\ell$ which has max load $\ge q^{1/2}$ when $q$ is a square, and $\ge \Omega(q^{1/3})$ otherwise. Hence, a polynomial dependence on $q$ is unavoidable. However, it is possible the expected max load bound could be improved to $O(q^{1/2} + {\ln n}/{\ln\ln n})$, which would be optimal up to constants for any $q$ and $\ell$. We leave this as an open question. 

\paragraph{List decoding.}
Finally, we show that with high probability, a random $\F_q$-linear code near capacity is list decodable with nearly optimal list size, 
even in the near-maximal error regime.

\begin{theorem}[simplified \cref{thm:list-decoding-final}]
\label{thm:intro-list-decoding}
    For every prime power $q$, $p\in (0,1-1/q)$, $\eps\in (0,1-H_q(p))$, and sufficiently large $n \ge n_0(\eps)$,
    a random linear code $C\le \F_q^n$ of rate $1-H_q(p)-\eps$ is $(p, {O(qH_q(p)}/\eps))$-list decodable with probability $1-q^{-\Omega(\eps n)}$.
\end{theorem}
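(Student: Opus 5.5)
We plan to reduce list decodability to the Furstenberg lower bound of \cref{thm:intro-main}, using the standard first-moment framework. A random linear code $C$ of dimension $m=(1-H_q(p)-\eps)n$ fails to be $(p,L)$-list decodable exactly when some Hamming ball $B(y,pn)$ contains $L+1$ codewords. Translating by a codeword, and using linearity, a bad event yields a small subspace $W=\mathrm{span}(c_1,\dots,c_{L+1})$ and a shift $y$ such that $y+W$ meets the ball $B(0,pn)$ in many points. Whenever such a $W$ has dimension $k$, it must contain at least $L+1\ge T$ points in a translate of a ball. The union bound over $(y,W)$ then sums $\Pr[W\subseteq C]=q^{-(n-m)k}$ over all $k$-dimensional $W$ that are ``heavy,'' meaning some translate $y+W$ meets the ball $B=B(0,pn)$ in at least $T$ points.

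The key step is to count heavy subspaces via duality. View heavy $W$ through their annihilators, or equivalently pass to the quotient. The set of $k$-dimensional $W$ with $|(y+W)\cap B|\ge T$ for some $y$ is controlled by \cref{thm:intro-main} applied in reverse. Suppose a $\delta$-fraction of $k$-dimensional subspaces were heavy for the set $B$ of size about $q^{H_q(p)n}$. Then $B$ would be $(\delta,n,k,T)$-Furstenberg, with $n-k'=k$, so that in the theorem's notation the codimension is $n-k$. The theorem then forces
\[
|B|\ge \tfrac{\delta T}{8}\, q^{n-k}\exp\bigl(-8q(n-k+3\log_qT)/T\bigr).
\]
The exponent here involves $n$, which is bad, so instead we would apply the theorem inside a reduced ambient space. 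We project along a random quotient $\F_q^n\to\F_q^{n'}$ whose dimension is comparable to the code's redundancy. The theorem's independence from the ambient dimension is exactly what makes this step affordable.

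The calculation should run as follows. Choose $T=L\approx Cq H_q(p)/\eps$, so that the factor $\exp(-8q(\cdot)/T)$ costs only $q^{O(\eps n)}$ in place of $q^{\eps n}$. Solving for $\delta$ gives the fraction of heavy $k$-subspaces as at most $8|B|q^{-(n-k)}e^{8q(\cdot)/T}/T$. Multiplying by the number $\approx q^{k(n-k)}$ of subspaces and by $q^{-(n-m)k}$ leaves roughly $q^{k(H_q(p)n-(n-m))+O(\eps n)k/C}=q^{-\eps nk(1-O(1/C))}$. This is summable over $k\le L+1$, and it yields failure probability $q^{-\Omega(\eps n)}$.

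The main obstacle is bookkeeping the dimension $k$ of the span versus the list size. When $k$ is much smaller than $L$, many codewords lie in a low-dimensional subspace, and $T$ must then be taken as the actual intersection count rather than $k$. This needs care, since the intersection is a list of at least $L+1$ points inside $y+W$, giving $T\ge L+1$ regardless of $k$, while the term $3\log_q T\le 3k$ is absorbed. Handling the regime $k\approx\log_q L$ and verifying that the exponent loss stays $O(q k/T)\cdot n$ rather than $O(q/T)\cdot n$ per dimension is where the factor $q$ in the final list size $O(qH_q(p)/\eps)$ enters, and is the step we would check first.
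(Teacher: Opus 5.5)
There is a genuine gap: the union bound over spans of lists does not close, and the arithmetic that seems to close it is wrong.

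Applied to $k$-dimensional directions (codimension $n-k$), the Furstenberg bound gives a heavy fraction of at most $\delta \lesssim |B|\,q^{-(n-k)}e^{O(qn/T)}/T$. Multiply by the $\approx q^{k(n-k)}$ subspaces and by $q^{-(n-m)k}$, with $n-m=(H_q(p)+\eps)n$. The exponent is $H_q(p)n-(n-k)+k(n-k)-(H_q(p)+\eps)nk = n\bigl((k-1)(1-H_q(p))-k\eps\bigr)-k(k-1)$, not the $-\eps nk$ you wrote. For $k=2$ and $\eps<(1-H_q(p))/2$ this is $+\Omega(n)$, so the sum is vacuous; only $k=1$ survives. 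For the union bound to work you would need the heavy fraction to be about $q^{-k(m-k)}$, a $k$-fold saving. A Furstenberg bound only ever saves the single factor $|B|/q^{\mathrm{codim}}$.

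Random projection does not rescue this. The reduction in \cref{thm:furst-dim-reduce} preserves codimension, so any version of the bound for $k$-dimensional directions has the same single-power shape. The low-dimensional-span regime you defer as ``bookkeeping'' is exactly the obstacle that limits Zyablov--Pinsker-style arguments, and your proposal offers no tool for it.

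The missing idea is to avoid union-bounding over spans altogether and take the Furstenberg direction to be the code itself. By \cref{lem:ls-hashing}, $C=\ker h$ fails to be $(p,L)$-list decodable iff some coset of $C$ contains more than $L$ points of the ball $B$. So if $\delta$ is the failure probability, $B$ is $(\delta,n,Rn,L)$-Furstenberg, with codimension equal to the redundancy $(H_q(p)+\eps)n$. Then \cref{thm:furst-dim-reduce}, applied via \cref{thm:remove-surject} to handle non-surjective $h$, gives
$\delta\le \frac{9|B|}{q^{(H_q(p)+\eps)n}L}\exp\bigl(8q((H_q(p)+\eps)n+2\lceil\log_q L\rceil)/L\bigr)\le \frac{9}{q^{\eps n}L}\exp\bigl(8q((H_q(p)+\eps)n+2\lceil\log_q L\rceil)/L\bigr)$.

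Choosing $L=16q(H_q(p)+\eps)/\eps$ makes the exponential $e^{\eps n/2}$ up to lower-order factors, and $q^{-\eps n}$ absorbs this once $n\gtrsim \eps^{-1}\log_q(1/\eps)$. Here the single-factor saving $|B|/q^{\mathrm{codim}}=q^{-\eps n}$ is exactly what is needed. The bound's dependence on codimension rather than ambient dimension is where dimension-independence matters. The factor $q$ in the list size comes from the $8q/T$ in the exponent, not from any case analysis over $k$.
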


For all $p,q,\eps$, this is
within a factor $q$ of the lower bound of $H_q(p)/\eps$ \cite{glmrsw}, and thus within a $q$-factor of a dream uniform bound of $O(1/\eps)$ 
that holds for all parameters.
    We emphasize the implications of this bound in the high error rate regime, that is, when $\delta \coloneqq 1-1/q - p \ll 1$. For fixed $q$,
we obtain list size $O(1/\eps)$ for all parameters, even when $\delta\to 0$. To the best of our knowledge, this is the first bound of this type. Previous bounds either blow up (usually exponentially)~\cite{ghk}, or coupled $\eps$ and $\delta$ together~\cite{Woo13,RW17}.
We leave removing this factor of $q$ as an open problem.
 
\subsection{Concurrent Work}
After our work was completed, we learned of two concurrent and independent works regarding the list decodability of random linear codes which were recently posted on the arXiv \cite{yuanzhu,silas}. Yuan--Zhu and Silas both proved that for any $p,q,\eps$, random linear codes of rate $1-H_q(p) - \eps$ are $(p, \frac{H_q(p)}\eps + C_{p,q})$-list decodable for some $C_{p,q}$. Silas further showed that for any fixed $p,q$, for sufficiently small $\eps < \eps_0(p,q)$, a random linear code of rate $1-H_q(p)-\eps$ will have list size either $\floor{H_q(p)/\eps}$
or $\floor{H_q(p)/\eps} + 1$.

 In the work of Yuan and Zhu, Remark 3.9 of their paper declares that their additive term is \[C_{p,q} = \exp\left(O\left(\frac{H_q(p)\ln^2 q}{(1-\frac{1}q-p)^2}\right)\right).\] Hence, this bound has a quasipolynomial dependence on $q$, and blows up exponentially in the high error regime $p\to 1-1/q$. 
 In the work of Silas, Section 8 of the paper claims that for constant $q$, their additive term is \[C_{p,q} = \exp\left(O_q\left(\frac{1}{(1-\frac{1}q-p)^2}\right)\right),\] and so the list size here does blow up exponentially as $p\to 1-1/q$ at a similar rate as Yuan and Zhu. From tracing the $q$ dependence of Silas's $C_{p,q}$, it appears to be the case $C_{p,q} = \exp(O_p(\log^3 q))$, 
 which is a bit worse than that of Yuan and Zhu and of Guruswami, H\aa stad, and Kopparty \cite{ghk}. %

 In our work, we show that for any $p,q,\eps$, an RLC of rate $1-H_q(p) - \eps$ is $(p, 16q({H_q(p)}/\eps + 1))$-list decodable. Our argument can be adjusted to replace the leading constant of $16$ with $1.01$. Hence, for fixed $p,q$ and $\eps\to 0$, our bound is within a factor of $q$ of optimal, making it worse than the bounds of Yuan and Zhu and of Silas in this regime. However, our bound's dependence on $q$ is only a single factor of $q$, and perhaps more notably, our bound has optimal dependence on $p$. No matter how close $p$ gets to $1-1/q$, our list size bound is always at most, say, $16q/\eps$. To the best of our knowledge, our list bound is the 
 only one in the literature which gives a uniform bound of $O_q(1/\eps)$ in the high error rate regime, even with the concurrent work included. 
\subsection{Proof Overview}
\label{sec:proof-overview}

We will first dive into the question of linear hashing and reduce it to Furstenberg set lower bounds. Next, we outline how our strategy for Furstenberg sets, and finally we relate linear hashing to the list decodability of RLCs.

\subsubsection{Linear Hashing}
For simplicity, we will assume our universe has dimension $2\ell$,
and consider a simpler question
for intuition.
For every set $S$ of $n \coloneq q^\ell$
balls from the $n^2$-size universe $\F_q^{2\ell}$,
does there exist a linear map $h\colon \F_q^{2\ell}\to \F_q^\ell$ such that $M(S,h)\le T$, where $T = \Theta_q(\ell/{\ln \ell})$?
Since each $h$ can be identified with its kernel, and each preimage of $h$ is simply a coset of its kernel,
it is sufficient to find one subspace $V\le_\ell \F_q^{2\ell}$ such that for all $x\in \F_q^{2\ell}$, $|(x+V)\cap S|\le T$.

Suppose for contradiction
that there is a subset $S\subset \F_q^{2\ell}$ for which $M(S,h) > T$ for \emph{all} $h$.
This implies that for all $V\le_\ell \F_q^{2\ell}$, there exists a shift $x$ such that $|(x+V)\cap S| > T$. Note that this is \emph{exactly} the definition of a $(2\ell, \ell, T)$-Furstenberg set.
Hence, if we can show that all $(2\ell, \ell, L)$-Furstenberg sets are of size strictly larger than $ q^\ell = n$, we will reach a contradiction as desired.
We note this translation between Furstenberg sets and max load of linear maps 
has been made previously in work of Dhar, Dvir, and Lund \cite[Lemma 15]{ddldrizzy}

Unfortunately, Furstenberg set lower bounds in the literature are extremely poor in this regime of parameters. The main issue is the fact that $T$ points is a vanishingly small fraction of the $q^\ell$ points in a coset. We call this the ``sparse'' regime. Most arguments in the literature go through the polynomial method, which give optimal lower bounds for Kakeya sets. Since Kakeya sets require a shift of every subspace to be fully contained in it, they can be thought of as the ``densest'' instances of Furstenberg sets. The method, as used traditionally used, severely degrades as the Furstenberg instances get sparser. The best known Furstenberg set lower bound in this regime shows $(2\ell, \ell, L)$-Furstenberg sets have size at least $(L/q^\ell)^{2\ell} q^{2\ell}\exp(-{2\ell}/{q^{\ell - 1}}) = q^{-\Omega(\ell^2)}$ \cite[Theorem 6.14]{dhardvir}, which is astonishingly worse than the $q^\ell$ lower bound we desire.

\subsubsection{Strong Furstenberg Set Lower Bounds}
We now explain our approach to proving strong Furstenberg lower bounds in this sparse regime. To do so, we first give a brief overview of the polynomial method of multiplicities technique used by Kopparty, Lev, Saraf, and Sudan~\cite{KLSS} to show optimal Kakeya set lower bounds. Let $S\subset \F_q^n$ be $(2\ell, \ell)$-Kakeya.
Their argument studies a
flavor of algebraic complexity of a Kakeya set
by encoding it as a polynomial
$P \in \F_q[x_1, \dots, x_n]$
with certain constraints
and studying its properties.

In particular, 
let the \emph{multiplicity complexity}
$d_M(S)$ be the smallest possible degree that $P$ can have
such that it vanishes with
multiplicity $M$ on $S$.
That is, if we take any $M-1$
derivatives on $P$,
the resulting polynomial will still vanish
at every point in $S$.
On the one hand,
we can upper bound $d_M(\cdot)$
for \emph{any} set
as a strictly increasing function
of its size:
small sets have smaller $d_M$,
simply by bounding the number of constraints imposed on $P$.
On the other hand,
we can show that $d_M(S)$
is large when $S$ is Kakeya
by utilizing its structure.

Consider a polynomial $P\in \F_q[x_1,\dots, x_n]$ that vanishes on every $x\in S$ with multiplicity $M$. The goal will be to show that $P$ will vanish with high multiplicity over a very large product set over the rational function field $\F_q(t_1,\dots, t_\ell)$. Multiplicity Schwartz--Zippel then converts the large product set size and large total multiplicity into a large degree lower bound on $P$. 

 For any subspace $V\le_\ell \F_q^n$, there exists some shift $b_V\in \F_q^{2\ell}$ such that $b_V+V\subset S$. $V$ can be parametrized by a linear form in $\ell$ formal variables $t_1,\dots, t_\ell$, and thus the restriction of $P$ to $b_V + V$ can be written as a polynomial in $\F_q(t_1,\dots, t_\ell)$ of total multiplicity $Mq^\ell$. One can gather analogous information over all subspaces $V$ and derivatives of $p$, glue them together using multiplicity Schwartz--Zippel, and derive a lower bound on $d_M$, which will scale with the total multiplicity garnered at each restriction.

What happens when $S$ is $(2\ell, \ell, L)$-Furstenberg rather than Kakeya? It is natural to attempt the same argument, where we look at a minimal degree polynomial $P\in \F_q[x_1,\dots, x_n]$ which vanishes on $S$ with multiplicity $M$. The constraint-counting argument remains the same, but the Schwartz--Zippel argument now suffers. For every $V\le_\ell \F_q^n$, we know there is a shift $b_V$ such that $b_V+V$ intersects $S$ in $\ge L$ vectors. Hence, $P$ restricted to this affine space will be an $\ell$-variate polynomial with total multiplicity $M L\ll Mq^\ell$. This multiplicity degradation causes the gluing procedure to give a much smaller degree lower bound. We remark that this approach recovers the Furstenberg lower bounds of Dhar and Dvir mentioned earlier using a different argument \cite[Theorem 6.14]{dhardvir}.

\paragraph{Polynomial method of multiplicity gaps.}
Our idea to avoid this severe multiplicity degradation is surprisingly simple: enforce multiplicity constraints on $x\notin S$ as well. This initially seems like a horrible idea, as this implies our polynomial will now be zero \emph{everywhere}, rather than on $S$ alone.
How can such a polynomial encode any information about $S$? We crucially set the multiplicity constraints on $x\notin S$ to be 
more relaxed than on $x\in S$. Hence, 
information about our set is encoded solely in the derivatives---in the \emph{gap} between the two classes of multiplicity constraints---rather than in the zero set.

More concretely, for parameters $M' < M$, we define
the \emph{multiplicity gap complexity}
$d_{M,M'}(S)$ to be the smallest nonzero degree of a polynomial $P$ which vanishes with multiplicity $\ge M$ on each $x\in S$, and vanishes with multiplicity $\ge M'$ on $x\notin S$. Then for every $V\le_r \F_q^n$, the restriction of $P$ to $b_V + V$ will now have total multiplicity $M L + M'(q^\ell - L)$, which gets very close to the Kakeya-case quantity $Mq^\ell$ as $M'$ approaches $M$.

However, there is a cost to increasing $M'$. The larger $M'$ is, the more linear constraints each $x\notin S$ will induce in the coefficients, which then degrades the dependence of the degree upper bound on $|S|$. By performing the computation, one observes that the number of linear constraints is dominated by enforcing multiplicity $M$ on $S$, unless $M'$ is extremely close to $M$. We thus set $M' = (1-\eps)M$, compute the lower bound implied by the polynomial method, and optimize the value of $\eps$ to yield our optimal bounds.

 To the best of our knowledge, applications of polynomial method to lower bound sets with structure always consider polynomials that vanish on that set (perhaps with high multiplicity), but never enforce any conditions outside of $S$. In the setting of Furstenberg sets, enforcing multiplicity outside of the set allows us to get optimal lower bounds.  We hope that this technique of enforcing multiplicity outside of $S$ will have further applications.

 \paragraph{Furstenberg acrobatics.}
To generalize our results to arbitrarily large universe and to develop tail bounds strong enough to deduce expectation, we require ways to strengthen parameters of Furstenberg sets. 

Given a $(\delta, n, r, T)$-Furstenberg set $S$, where $\delta \ll 1$, we use a classic ``random rotations'' trick to construct a $(1/2, n, r, T)$-Furstenberg set whose size is not much bigger than $S$. This is done by taking unions of sufficiently many random rotations of $S$ to cover more directions, and was most relevantly used by Ordentlich, Regev, and Weiss \cite[Theorem 2.8]{orw22} for Kakeya sets. This allows us to amplify tail bounds.

We also provide a new way to compress the ambient dimension of a Furstenberg set. Given a $(\delta, n, r, T)$-Furstenberg set $S$ and $n'\ll n$, we can construct a $(\delta, n', n'-n+r, T )$-Furstenberg set of size at most $|S|$ . This is done by randomly mapping $S\subset\F_q^n$ into $\F_q^{n'}$ using a random linear map, and showing that for some choice of linear map, the image of $S$ is the desired Furstenberg set. To the best of our knowledge, this maneuver appears to be new, but is heavily inspired by a compression trick done in the work of Alon, Dietzfelbinger, Miltersen, Petrank, and Tardos \cite{admpt} on linear hashing. This trick allows us to generalize from input dimension $2\ell$ to arbitrarily large dimensions.

\subsubsection{List Decodability of RLCs}
After getting max load bounds for linear hashing, list decoding bounds for RLCs immediately follow. This is because a linear code $C\le \F_q^n$ of rate $R$ being $(p,L)$-list decodable is exactly equivalent to $M(h,B)\le L$, where $h:\F_q^{n}\to\F_q^{(1-R)n}$ is any linear map of kernel $C$, and $B\subset \F_q^n$ is Hamming ball of radius $pn$. Therefore, analyzing the list decodability of an RLC is equivalent to analyzing how well a random linear map hashes Hamming balls (rather than arbitrary sets). RLC list decodability is consequently a simple application of our hashing result.

\section{Tight Bounds for Furstenberg Sets}

\subsection{Preliminaries}
\label{sec:preliminaries}

$U \le_r V$ denotes that $U$ is a dimension-$r$ subspace of $V$.

\paragraph{Hasse derivatives.}
We make use of Hasse derivatives, which are
an analog of traditional derivatives which are useful
for multivariate polynomials over finite fields.
However, all of the properties of Hasse derivatives
that we use have analogs
that hold for traditional derivatives over $\R$. These properties and their proofs can be found in the work of Dvir, Kopparty, Saraf, and Sudan \cite{dkss}.

\begin{definition}[Hasse derivative]
  Let $X = (X_1,\dots, X_n)$ and $P(X)\in \F[X]$ be an $n$-variate polynomial. Let $\vec{i}$ be a non-negative integer vector of dimension $n$. The $\vec{i}$-th Hasse derivative of
  $P$, denoted $P^{(\vec{i})}$, is the 
  unique coefficient of $Z^{\vec{i}}$ in the expansion of $P(X+Z)\in \F[X,Z]$:
  \[
    P(X+Z) = \sum_{\vec{i}} P^{(\vec{i})}(X)Z^{\vec{i}}.
  \]
\end{definition}
From the definition, it is clear that the Hasse derivative is linear. 

\begin{fact}
  Let $P$ be a polynomial.
  Then, every coefficient of every Hasse derivative of $P$ is a linear combination of the coefficients of $P$.
\end{fact}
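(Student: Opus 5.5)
This follows directly from the definition of the Hasse derivative. Write $P(X)=\sum_{\vec{j}} c_{\vec{j}} X^{\vec{j}}$ with coefficients $c_{\vec{j}}\in\F$. Because $P(X+Z)$ is linear in $P$, it suffices to expand a single monomial and then collect terms in $Z^{\vec{i}}$. We use the binomial theorem in each coordinate; since the binomial coefficients are integers, it holds in any characteristic:
\[
(X+Z)^{\vec{j}}=\prod_{m=1}^n (X_m+Z_m)^{j_m}=\sum_{\vec{i}\le \vec{j}} \binom{\vec{j}}{\vec{i}} X^{\vec{j}-\vec{i}} Z^{\vec{i}},
\]
where $\binom{\vec{j}}{\vec{i}}=\prod_m\binom{j_m}{i_m}$ is read as an element of $\F$ through the prime subfield.

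Summing over $\vec{j}$ and reading off the coefficient of $Z^{\vec{i}}$ gives
\[
P^{(\vec{i})}(X)=\sum_{\vec{j}\ge\vec{i}} \binom{\vec{j}}{\vec{i}} c_{\vec{j}}\,X^{\vec{j}-\vec{i}}.
\]
Fix any monomial $X^{\vec{k}}$. Its coefficient in $P^{(\vec{i})}$ is $\binom{\vec{i}+\vec{k}}{\vec{i}}c_{\vec{i}+\vec{k}}$, which is a scalar multiple of one coefficient of $P$. In particular it is a linear combination of the coefficients of $P$, with coefficients in $\F$ that depend only on $\vec{i}$ and $\vec{k}$ and not on $P$.

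The only point needing care is that the expansion of $P(X+Z)$ in the monomials $Z^{\vec{i}}$ is unique, so that $P^{(\vec{i})}$ is well defined. This holds because the monomials form a basis of $\F[X,Z]$ over $\F[X]$. We will use the fact in exactly this form: a condition such as $P^{(\vec{i})}(x)=0$ at a fixed point $x$ is a homogeneous linear constraint on the coefficient vector of $P$. This is what later allows us to count the constraints imposed by the multiplicity conditions.
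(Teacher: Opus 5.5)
Your proof is correct and takes the same route the paper has in mind: the paper states this fact without proof, noting only that it is clear from the definition (via linearity of $P \mapsto P(X+Z)$). Your binomial expansion, which identifies the coefficient of $X^{\vec{k}}$ in $P^{(\vec{i})}$ as $\binom{\vec{i}+\vec{k}}{\vec{i}} c_{\vec{i}+\vec{k}}$, is the standard way of spelling out that remark, and it is exactly what the constraint counting in \cref{lem:constraint-counting} relies on.
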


For a vector $\vec{i}\in \Z_{\ge 0}^n$, let $||\vec{i}|| = \sum_{j=1}^n i_j$. Just as in the reals, taking a derivative reduces the degree of a polynomial by 1.

\begin{fact}[{\cite[Lemma 2.1.4]{saraf_2011}}]
  \[\deg(P^{(\vec{i})}) = \deg(P) - \norm{\vec{i}}.\]
\end{fact}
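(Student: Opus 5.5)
The plan is to prove the inequality $\deg(P^{(\vec{i})}) \le \deg(P) - \norm{\vec{i}}$, which is how this Fact is used when bounding degrees of restricted polynomials in the multiplicity arguments. Equality cannot hold in general. If $\norm{\vec{i}} > \deg P$ the derivative is $0$. Over $\F_q$ the binomial coefficients that arise may vanish mod the characteristic: for example, with $p = \operatorname{char}\F_q$, the first derivative of $X_1^p$ is $pX_1^{p-1}=0$. So I read the statement as an upper bound, with the convention $\deg 0 = -\infty$.

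\textbf{Step 1: reduce to monomials.} The Hasse derivative is linear in $P$, as noted after its definition. Taking the coefficient of $Z^{\vec{i}}$ in $P(X+Z)$ is linear in the coefficients of $P$. A linear combination of polynomials of degree at most $D$ has degree at most $D$. So it suffices to show that for each monomial $X^{\vec{a}}$ with $\norm{\vec{a}} \le \deg P$, its $\vec{i}$-th Hasse derivative has degree at most $\norm{\vec{a}} - \norm{\vec{i}}$.

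\textbf{Step 2: compute the monomial case.} Expand
\[
(X+Z)^{\vec{a}} = \prod_{j=1}^n (X_j+Z_j)^{a_j} = \prod_{j=1}^n \sum_{i_j=0}^{a_j} \binom{a_j}{i_j} X_j^{a_j-i_j} Z_j^{i_j}.
\]
Read off the coefficient of $Z^{\vec{i}}$. It is $\prod_j \binom{a_j}{i_j} X^{\vec{a}-\vec{i}}$ when $\vec{i} \le \vec{a}$ coordinatewise, and $0$ otherwise. In the first case it is a scalar times a monomial of degree $\norm{\vec{a}}-\norm{\vec{i}}$, where the scalar is read in $\F_q$ and may be zero. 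In either case the degree is at most $\norm{\vec{a}} - \norm{\vec{i}} \le \deg P - \norm{\vec{i}}$.

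\textbf{Step 3: combine.} Summing the monomial contributions gives the bound for $P$. There is no substantive obstacle. The only point needing care is the one above: stating the result as an inequality, and noting that equality holds when the top-degree binomial coefficients are nonzero in $\F_q$, for instance when every exponent in $P$ is below $\operatorname{char}\F_q$. The later multiplicity Schwartz--Zippel arguments only need the upper bound, so this is sufficient.
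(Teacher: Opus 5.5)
Your monomial-expansion argument is correct and is the standard one (the paper gives no proof of its own, only a citation, and it uses only the upper bound, e.g.\ in the step $d-w\ge\deg(Q_{\vi,V})$), and you are right that the printed equality fails in general. One small correction to your closing aside: exponents below the characteristic do not guarantee equality, since $P=X_1^2$ with $\vec{i}=(0,1)$ gives $P^{(\vec{i})}=0$ in every characteristic; for nonzero $P$, equality holds exactly when some top-degree monomial $X^{\vec{a}}$ of $P$ has $\vec{a}\ge\vec{i}$ coordinatewise and $\prod_j\binom{a_j}{i_j}\neq 0$ in $\F_q$.
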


For a polynomial $P\in \F_q[x_1,\dots, x_n]$, denote $P_H$ to be the homogeneous part of $P$, i.e., the unique homogeneous polynomial such that $\deg(P-P_H) < \deg P$. The derivative of the homogeneous part is the homogeneous part of the derivative.

\begin{fact}[{\cite[Proposition 2.3]{dkss}}]
\label{fact:hasse-homogeneous}
\[(P^{(\vec{i})})_H \equiv (P_H)^{(\vec{i})}\]
\end{fact}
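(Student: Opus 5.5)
The plan is to compute Hasse derivatives monomial by monomial and then compare the top-degree parts. Let $d = \deg P$. Write $P = P_H + Q$, where $P_H$ is homogeneous of degree $d$ and $\deg Q < d$.

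First I compute the derivative of a single monomial. Expanding $(X+Z)^{\vec a} = \prod_j (X_j + Z_j)^{a_j}$ with the binomial theorem shows that the coefficient of $Z^{\vec i}$ is
\[
(X^{\vec a})^{(\vec i)} = \prod_j \binom{a_j}{i_j} X^{\vec a - \vec i}.
\]
This is zero unless $\vec i \le \vec a$ coordinatewise. So the derivative of a monomial of degree $\norm{\vec a}$ is either $0$ or a scalar multiple of a monomial of degree exactly $\norm{\vec a} - \norm{\vec i}$.

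Next I use the linearity of the Hasse derivative, noted right after its definition, to split
\[
P^{(\vec i)} = (P_H)^{(\vec i)} + Q^{(\vec i)}.
\]
By the monomial computation, $(P_H)^{(\vec i)}$ is homogeneous of degree $d - \norm{\vec i}$ (possibly zero). Every monomial of $Q$ has degree $< d$, so every monomial of $Q^{(\vec i)}$ has degree $< d - \norm{\vec i}$. Hence the degree-$(d-\norm{\vec i})$ component of $P^{(\vec i)}$ is exactly $(P_H)^{(\vec i)}$. If this component is nonzero, it is the top homogeneous part $(P^{(\vec i)})_H$, which proves the claim.

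The only delicate step is the degenerate case where $(P_H)^{(\vec i)} = 0$. This can genuinely happen in positive characteristic because binomial coefficients may vanish; for example, over $\F_2$ we have $(X^2)^{(1)} = 2X = 0$. In that case the true top-degree part of $P^{(\vec i)}$ comes from $Q^{(\vec i)}$. I would handle this by fixing the convention, consistent with the degree fact quoted above from \cite{saraf_2011} (read as the upper bound $\deg(P^{(\vec i)}) \le \deg P - \norm{\vec i}$), that $(P^{(\vec i)})_H$ means the degree-$(\deg P - \norm{\vec i})$ homogeneous component of $P^{(\vec i)}$. With that convention the identity holds in all cases, with both sides equal to zero in the degenerate case, by the argument above. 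This is the form in which the fact is used in multiplicity Schwartz--Zippel arguments.
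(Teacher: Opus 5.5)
Your proof is correct and is the standard monomial-expansion argument behind the cited DKSS property (the paper only cites it), and your caveat is genuinely needed. With the paper's definition of $P_H$ as the top-degree part, the identity is false as literally stated, and not only in positive characteristic. For $P = X_1^2 + X_2$ and $\vec{i} = (0,1)$, one gets $(P^{(\vec{i})})_H = 1$ but $(P_H)^{(\vec{i})} = 0$ over any field. This is exactly your observation that monomials with $\vec{i} \not\le \vec{a}$ are killed.

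Your convention is the right repair: take the degree-$(\deg P - \norm{\vec{i}})$ component, and read the quoted degree fact as an inequality. It costs the paper nothing, because the unshifting claim's proof works verbatim for that component. Even under the literal reading, the only direction the paper uses, $(P^{(\vec{i})})_H(V\vec{t}) \equiv 0 \Rightarrow (P_H)^{(\vec{i})}(V\vec{t}) \equiv 0$, still holds; in the degenerate case $(P_H)^{(\vec{i})} \equiv 0$ it holds trivially.
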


We now define the multiplicity of a polynomial at a value.

\begin{definition}[Multiplicity]
  Let $P\in \F_q[x_1,\dots, x_n]$ and $a\in \F_q^n$. We define $\mult(P, a)$ to be the largest nonnegative integer $m$ such that for all $\vec{i}\in \Z_{\ge 0}^n$, with $||i||< m$, $P^{(\vec{i})}(a) = 0$.
\end{definition}

If a polynomial has a zero with multiplicity at least $m$ at some point, the derivative of the polynomial will have mulitplicity at least $m-1$ at that point.

\begin{fact}[Multiplicity loss {\cite[Proposition 2.3]{dkss}}]
  \label{fact:multiplicity-loss}
  \[
    \mult(P^{(\vec{i})}, \vec{a}) \ge \mult(P, \vec{a}) - \norm{\vec{i}}.
  \]
\end{fact}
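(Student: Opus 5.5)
The plan is to reduce the statement to a composition rule for Hasse derivatives, namely
\[
\bigl(P^{(\vec{i})}\bigr)^{(\vec{j})} \;=\; \binom{\vec{i}+\vec{j}}{\vec{i}}\, P^{(\vec{i}+\vec{j})},
\qquad \binom{\vec{i}+\vec{j}}{\vec{i}} \coloneqq \prod_{t=1}^n \binom{i_t+j_t}{i_t},
\]
where the binomial coefficient is read in $\F_q$. Once this identity is available the fact follows quickly, so most of the work is establishing it from the definition of the Hasse derivative.

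To prove the composition rule, I will expand $P(X+Z+W)$ in $\F[X,Z,W]$ in two ways. First, apply the definition with shift $Z+W$:
\[
P(X+Z+W)=\sum_{\vec{k}} P^{(\vec{k})}(X)\,(Z+W)^{\vec{k}}=\sum_{\vec{k}} P^{(\vec{k})}(X)\sum_{\vec{i}+\vec{j}=\vec{k}}\binom{\vec{k}}{\vec{i}} Z^{\vec{i}}W^{\vec{j}}.
\]
Second, apply the definition with shift $Z$ at the point $X+W$, and then apply it again to each $P^{(\vec{i})}$ with shift $W$:
\[
P(X+Z+W)=\sum_{\vec{i}} P^{(\vec{i})}(X+W)\,Z^{\vec{i}}=\sum_{\vec{i}}\sum_{\vec{j}}\bigl(P^{(\vec{i})}\bigr)^{(\vec{j})}(X)\,W^{\vec{j}}Z^{\vec{i}}.
\]
Comparing the coefficients of $Z^{\vec{i}}W^{\vec{j}}$ gives the identity. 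The only point requiring care is that monomials in $Z,W$ form a basis over $\F[X]$, so the coefficients are uniquely determined. Apart from this, the step is routine.

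Now let $m=\mult(P,\vec{a})$. If $m\le\norm{\vec{i}}$, the claim is trivial, since multiplicities are nonnegative. Otherwise, take any $\vec{j}$ with $\norm{\vec{j}}<m-\norm{\vec{i}}$. Then $\norm{\vec{i}+\vec{j}}<m$, so $P^{(\vec{i}+\vec{j})}(\vec{a})=0$ by the definition of multiplicity. By the composition rule, $(P^{(\vec{i})})^{(\vec{j})}(\vec{a})$ is a scalar multiple of this value, hence also zero. Since $\vec{j}$ was arbitrary, $\mult(P^{(\vec{i})},\vec{a})\ge m-\norm{\vec{i}}$.

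The only potential obstacle is the binomial coefficient, which may vanish in characteristic $p$. This does no harm, because we only need the implication ``$P^{(\vec{i}+\vec{j})}(\vec{a})=0 \Rightarrow (P^{(\vec{i})})^{(\vec{j})}(\vec{a})=0$'', and not its converse.
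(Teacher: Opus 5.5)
Your proof is correct. It is essentially the standard argument from the cited source \cite{dkss}: derive $(P^{(\vec{i})})^{(\vec{j})} = \binom{\vec{i}+\vec{j}}{\vec{i}} P^{(\vec{i}+\vec{j})}$ by expanding $P(X+Z+W)$ in two ways, then note that a vanishing binomial coefficient only helps; the paper itself imports the fact from \cite{dkss} without proof.
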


We finally present a very important lemma, relating the number of zeros (with multiplity) of a polynomial to its degree.

\begin{lemma}[Multiplicity Schwartz--Zippel, {\cite[Lemma 2.7]{dkss}}]
  \label{lem:multiplicity-sz}
  If $P$ is a $n$-variate nonzero\footnote{A polynomial that is not
    the literal zero---$P$ might still evaluate to zero everywhere!}
  polynomial over $\F$
  of (total) degree at most $d$, then
  for all $S \subseteq \F$,
  \[
    \sum_{\vec{x} \in S^n} \mult(P, \vec{x}) \le d \cdot \abs{S}^{n-1}.
  \]
\end{lemma}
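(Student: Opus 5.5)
We prove the lemma by induction on the number of variables $n$, following the standard argument of \cite{dkss}. The idea is to peel off the last variable: we control the multiplicity of $P$ at $(\vec a,b)$ by the multiplicity of the leading coefficient (as a polynomial in $x_n$) at $\vec a$, plus the multiplicity of a suitable univariate restriction at $b$.

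\emph{Base case $n=1$.} Suppose $\mult(P,a)\ge m$. Then all Hasse derivatives $P^{(i)}(a)$ with $i<m$ vanish. By the definition $P(a+Z)=\sum_i P^{(i)}(a)Z^i$, this means $Z^m$ divides $P(a+Z)$, so $(x-a)^m$ divides $P(x)$. Distinct points give coprime factors, so a nonzero $P$ of degree at most $d$ satisfies $\sum_{a\in S}\mult(P,a)\le d$.

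\emph{Inductive step.} Write $P=\sum_{j=0}^{t}P_j(x_1,\dots,x_{n-1})\,x_n^j$ with $P_t\neq 0$. Then $\deg P_t\le d-t$. Fix $\vec a\in S^{n-1}$ and let $m_{\vec a}=\mult(P_t,\vec a)$. By definition of multiplicity there is $\vec i\in\Z_{\ge0}^{n-1}$ with $\norm{\vec i}=m_{\vec a}$ and $P_t^{(\vec i)}(\vec a)\ne 0$. Set $Q=P^{(\vec i,0)}$, the Hasse derivative taken only in the first $n-1$ variables. From the defining expansion, $Q=\sum_j P_j^{(\vec i)}x_n^j$. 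Hence the univariate polynomial $Q(\vec a,x_n)$ has $x_n^t$-coefficient $P_t^{(\vec i)}(\vec a)\neq0$, so it is nonzero of degree at most $t$.

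Now apply \cref{fact:multiplicity-loss}:
\[
\mult(P,(\vec a,b))\le \norm{\vec i}+\mult(Q,(\vec a,b)).
\]
Moreover, $\mult(Q,(\vec a,b))\le\mult(Q(\vec a,\cdot),b)$, since the univariate Hasse derivatives of the restriction are among the multivariate Hasse derivatives $Q^{(0,\dots,0,j)}$ evaluated at $(\vec a,b)$. Summing over $b\in S$ and using the base case gives
\[
\sum_{b\in S}\mult(P,(\vec a,b))\le m_{\vec a}\abs{S}+t .
\]
Summing over $\vec a$ and applying the induction hypothesis to $P_t$, namely $\sum_{\vec a} m_{\vec a}\le (d-t)\abs S^{n-2}$, we get
\[
\sum_{\vec x\in S^n}\mult(P,\vec x)\le (d-t)\abs S^{n-1}+t\abs S^{n-1}=d\abs S^{n-1}.
\]

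The main point needing care is the choice of the derivative $\vec i$ depending on $\vec a$. It is what keeps $Q(\vec a,\cdot)$ nonzero, so that the univariate bound applies even when $P(\vec a,\cdot)$ is identically zero. The rest is routine bookkeeping with the Hasse-derivative expansion.
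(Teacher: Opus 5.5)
Your proof is correct and is essentially the standard argument of \cite[Lemma 2.7]{dkss}: induct on $n$ via the leading coefficient $P_t$ in $x_n$, pick $\vec i$ with $\norm{\vec i}=\mult(P_t,\vec a)$ and $P_t^{(\vec i)}(\vec a)\neq 0$, then combine multiplicity loss with restriction to the univariate line. The paper does not reprove this lemma and simply cites that source, and your argument works over an arbitrary field, including the function field $\F_q(\vt)$ where the paper applies it.
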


\subsection{Method of Multiplicity Gaps}
Let us begin with the a lower bound which holds for every value of a parameter $\eps$,
to pick later.
\begin{theorem}
\label{thm:technical-main}
  \label{thm:furstenberg-bound}
  Let $S \subset \F_q^n$ be $(\delta, r, \alpha q^r)$-Furstenberg.
  For all $0 < \eps \le 1$ and
  $\beta_\eps \coloneq \alpha + (1-\eps)(1-\alpha)$,
  \begin{equation}
    \abs{S}
    \ge q^n \cdot \frac{ \left(\frac{\delta \beta_\eps q^r }{\delta q^r+q-1}\right)^n -
      (1-\epsilon)^n}{1-(1-\epsilon)^n},
    \end{equation}
  or equivalently
  \begin{align}
    \abs{\conj{S}}
    &\le q^n \cdot \frac{ 1-\left(\frac{\delta
          \beta_\eps q^r}{\delta q^r+q-1}\right)^n}{1-(1-\epsilon)^n}\\
    &= q^n\cdot  \frac{ 1-\left(1 - \frac{\delta \eps(1-\alpha) q^r +q-1}{\delta
          q^r+q-1}\right)^n}{1-(1-\eps)^n}.
      \end{align}
\end{theorem}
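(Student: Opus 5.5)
The plan is to run the Kopparty--Lev--Saraf--Sudan multiplicity argument, but with the \emph{multiplicity gap} polynomial described in the overview. Fix a large integer $M$ and set $M' = \lfloor (1-\eps)M \rfloor$. I would look for a nonzero $P \in \F_q[x_1,\dots,x_n]$ of degree at most $d$ such that $\mult(P,x) \ge M$ for every $x \in S$ and $\mult(P,x) \ge M'$ for every $x \notin S$. By the Fact that Hasse derivative coefficients are linear in the coefficients of $P$, these are homogeneous linear constraints on $P$. There are $|S|\binom{M+n-1}{n} + |\conj S|\binom{M'+n-1}{n}$ of them, against $\binom{d+n}{n}$ unknowns. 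So a nonzero $P$ exists whenever the unknowns outnumber the constraints. Letting $M \to \infty$ with $d = cM$ and $n$ fixed, this condition becomes $c^n > |S| + |\conj S|(1-\eps)^n$, up to $1+o(1)$ factors.

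The second half is a degree lower bound that holds for any such nonzero $P$ of degree $d$. Let $P_H$ be its homogeneous part. Take a good direction $V \le_r \F_q^n$ with shift $b_V$, so that $|(b_V+V)\cap S| \ge \alpha q^r$. Parametrize $b_V+V$ by $t \in \F_q^r$. Counting the gap multiplicities, the restriction of $P^{(\vi)}$ has total multiplicity at least
\[
(M-\norm{\vi})\alpha q^r + (M'-\norm{\vi})(1-\alpha)q^r \approx (M\beta_\eps - \norm{\vi})q^r
\]
over $\F_q^r$. Its degree is at most $d-\norm{\vi}$. By \cref{lem:multiplicity-sz}, if $(M\beta_\eps - w)q^r > (d-w)q^{r-1}$ for all $w = \norm{\vi}$ below a threshold $w_0$, then each such restriction is the zero polynomial. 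In particular its top-degree part, which is $(P_H)^{(\vi)}$ evaluated on the linear parametrization of $V$ (using \cref{fact:hasse-homogeneous}), vanishes identically. Hence $P_H$ vanishes with multiplicity at least $w_0$ at every point of the union of the good subspaces. Since $P_H$ is homogeneous of degree $d$, it also has multiplicity $d$ at the origin.

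I then apply \cref{lem:multiplicity-sz} to $P_H$ over $\F_q^n$, which gives $\sum_x \mult(P_H,x) \le d q^{n-1}$. To bound the left side, I lower bound the size of the union of the good subspaces by double counting point--subspace incidences: every nonzero point lies in equally many $r$-subspaces, so the union of a $\delta$-fraction has at least $\delta(q^n-1)$ nonzero points. Combining this with the origin's multiplicity, and optimizing $w_0$ against $d$, should give $d/M \ge q\cdot \delta\beta_\eps q^r/(\delta q^r+q-1)$. Plugging this $c$ into the existence condition $c^n \le |S| + (q^n-|S|)(1-\eps)^n$ and solving the linear inequality for $|S|$ yields exactly the displayed bound. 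The complement form is then pure algebra.

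The main obstacle is the bookkeeping in the last step, namely getting precisely the denominator $\delta q^r + q - 1$. My naive version of the restriction-plus-union count gives $\delta q^2/(\delta q+q-1)$ in place of $\delta q^{r+1}/(\delta q^r+q-1)$. To recover the $q^r$, I would first try extracting more from each restriction. One option is to use that $P_H$ vanishes on the entire subspace $V$, not just along lines, together with its multiplicity $d$ at the origin. Another is to apply the Schwartz--Zippel step on a product set of the form $\F_q^r$ inside a rational function field in the coordinates of $V$, as in the KLSS gluing, rather than on the union directly. A secondary technical point is passing from binomial counts to $n$-th powers via $M \to \infty$ and $M'/M \to 1-\eps$, which should only require a routine limiting argument since the final inequality is $M$-independent.
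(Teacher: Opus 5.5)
There is a genuine gap, and it is the one you flagged: your degree lower bound only reaches $d/M \gtrsim \delta\beta_\eps q^2/(\delta q+q-1)$. That is the $r=1$ case of the theorem, and nothing later in your argument compensates. The information is lost when you pass from ``$(P_H)^{(\vi)}(V\vt)\equiv 0$ as a polynomial in $\vt$'' to ``$\mult(P_H,x)\ge w_0$ at the $\F_q$-points $x$ of the good subspaces.'' For $r\ge 2$ the first statement is strictly stronger. For example, $x_1^qx_2-x_1x_2^q$ vanishes at every point of $\F_q^2$ but not identically. Once you only have pointwise multiplicities on $\F_q^n$, Schwartz--Zippel over $\F_q^n$ cannot give more than about $d\ge q w_0$. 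The bound is $\sum_{x\in\F_q^n}\mult(P_H,x)\le dq^{n-1}$, there are only $q^n$ points, and the origin contributes just $d$. So your first proposed repair (vanishing on whole subspaces plus multiplicity $d$ at $0$, counted over $\F_q^n$) is a dead end.

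The missing idea is your second option, carried out precisely.
\begin{itemize}
\item View $P_H$ over $\F_q(t_1,\dots,t_r)$ and take $T=\{\langle u,\vt\rangle : u\in\F_q^r\}$. Then $|T|=q^r$, and the points of $T^n$ are exactly the vectors $V\vt$ with $V\in\F_q^{n\times r}$.
\item The identities $(P_H)^{(\vi)}(V\vt)\equiv 0$ for all $\norm{\vi}<m$ (your $w_0$) say that $P_H$ has multiplicity at least $m$ at the point $V\vt$ of $T^n$.
\item Multiplicity Schwartz--Zippel on $T^n$ then gives $\delta m q^{rn}\le d q^{r(n-1)}$, i.e.\ $d\ge\delta m q^r$. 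Combined with $m\ge (q\beta_\eps M-d)/(q-1)$, this is exactly $d/M\ge \delta\beta_\eps q^{r+1}/(\delta q^r+q-1)$.
\end{itemize}

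This requires the vanishing for a $\delta$ fraction of \emph{all} $n\times r$ matrices, not just full-rank ones; otherwise you lose the full-rank probability as a factor in $\delta$. The paper handles a rank-$s$ matrix with column span $\Lambda$ in three steps:
\begin{itemize}
\item complete $\Lambda$ to a uniformly random $r$-space $\Lambda'$;
\item average over the $q^{r-s}$ cosets of $\Lambda$ inside a rich shift of $\Lambda'$, to find one containing at least $\alpha q^s$ points of $S$;
\item use that each point of $\Lambda$ has $q^{r-s}$ preimages $\vt$.
\end{itemize}
As a result, $P^{(\vi)}(\vb+V\vt)$ still has total multiplicity at least $q^r(\beta_\eps M-\norm{\vi})$.

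The rest of your outline matches the paper: the constraint counting, passing to the homogeneous part, and the limit $M\to\infty$.
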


We prove this using the polynomial method of \emph{multiplicity gaps},
building on the extended multiplicity method of \cite{dkss, KLSS}
with a new twist.
At a high level,
we will analyze the
\emph{multiplicity gap complexity}
of a set $S$:
what is the smallest $d$
such that there exists a nonzero degree-$d$ polynomial $P$
that encodes $S$ in its multiplicity gaps?
That is, for some parameter $M$,
we want a $P$
with multiplicity $M$ on every point in $S$
and multiplicity $(1-\eps)M$ on every point in $\conj{S}$.

On the one hand, for \emph{any} set $S$,
there is an easy upper
bound on the multiplicity gap complexity
that depends on the size of $S$,
by counting the linear constraints
imposed on $P$
by the multiplicity requirements.
The smaller the set,
the fewer the imposed constraints
and so the
smaller the multiplicity gap complexity.

On the other hand,
we can lower bound the complexity
when $S$ is Furstenberg.
Because $S$ hits many affine subspaces,
restrictions of $P$ to those subspaces
must vanish with high multiplicity.
All of these subspaces can be treated
as distinct roots
when we work over a function field,
and by Schwartz--Zippel
(in particular, the strengthened variant using multiplicities),
only a high degree polynomial
can have this many roots.
Therefore, a Furstenberg set has high multiplicity gap complexity, so it cannot be small.

\begin{lemma}[Constraint counting {\cite[Lemma 12]{ddldrizzy}}]
\label{lem:constraint-counting}
  For any integer parameter $M$,
  subset $S \subseteq \F^n$, and real $0 < \eps \le 1$,
  choose the smallest $d$ such that
  \begin{equation}
    \label{eq:cc}\binom{M+n-1}n |S| + \binom{(1-\eps) M + n - 1}n|\conj{S}|
    \le \binom{d+n}n.
  \end{equation}
  Then, there
  exists a nonzero $n$-variate polynomial $P$ of degree $d$ over $\F$
  such that
  \begin{equation}
    \label{eq:MC}\tag{M}
  \begin{gathered}
    \forall \vec{x} \in S, \mult(P, \vec{x}) \ge M,\\
    \forall \vec{x} \in \conj{S}, \mult(P, \vec{x}) \ge (1-\eps) M.
  \end{gathered}
  \end{equation}
\end{lemma}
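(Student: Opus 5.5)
This is a dimension count: the multiplicity conditions in \eqref{eq:MC} are homogeneous linear constraints on the coefficients of $P$. We then find a nonzero solution because the number of unknowns exceeds the number of constraints.

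Consider the vector space of $n$-variate polynomials over $\F$ of total degree at most $d$. Its dimension is the number of monomials of degree at most $d$ in $n$ variables, which is $\binom{d+n}{n}$. We treat the coefficients of $P$ as unknowns.

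For a point $\vec{x}$ and a target multiplicity $m$, the condition $\mult(P,\vec{x}) \ge m$ says that $P^{(\vec{i})}(\vec{x}) = 0$ for every $\vec{i} \in \Z_{\ge 0}^n$ with $\norm{\vec{i}} < m$. By the Fact that every coefficient of a Hasse derivative is a linear combination of the coefficients of $P$, each evaluation $P^{(\vec{i})}(\vec{x})$ is a linear form in the unknowns. So each requirement is a homogeneous linear constraint.

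Count the constraints at one point. The number of $\vec{i}$ with $\norm{\vec{i}} \le m-1$ is $\binom{m-1+n}{n}$. Each $\vec{x} \in S$ therefore contributes $\binom{M+n-1}{n}$ constraints. Each $\vec{x} \in \conj{S}$ contributes $\binom{m'+n-1}{n}$ constraints, where $m' = \lceil (1-\eps)M \rceil$ is the least integer multiplicity at least $(1-\eps)M$. When $(1-\eps)M$ is an integer, this equals $\binom{(1-\eps)M+n-1}{n}$ exactly. Otherwise we read the binomial in \eqref{eq:cc} with the ceiling, or, as the paper presumably intends, we choose parameters so that $(1-\eps)M$ is integral.

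Summing over all points, the total number of constraints is the left-hand side of \eqref{eq:cc}. By the choice of $d$, this is at most $\binom{d+n}{n}$.

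The one delicate point is that a homogeneous system needs strictly more unknowns than equations to guarantee a nonzero solution. We handle this in one of two ways. If \eqref{eq:cc} holds strictly, we are done directly. If it holds with equality, we add one to $d$ so that the inequality becomes strict; in the paper's convention, taking ``the smallest $d$'' with the inequality strict amounts to the same thing, and this slack does not affect later use.

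In either case, linear algebra produces a nonzero coefficient vector. The corresponding polynomial $P$ has degree at most $d$ and satisfies \eqref{eq:MC}. Beyond this off-by-one between $\le$ and $<$, there is no real obstacle.
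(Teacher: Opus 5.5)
Your argument is correct and is the same dimension count the paper uses: each condition $P^{(\vec{i})}(\vec{x})=0$ is a homogeneous linear constraint on the coefficients, giving $\binom{M+n-1}{n}$ constraints per point of $S$ and $\binom{\lceil(1-\eps)M\rceil+n-1}{n}$ per point outside, against $\binom{d+n}{n}$ unknowns. Your caution about strictness is warranted: the paper's ``the constraints do not outnumber the variables'' is not enough, and the lemma as literally stated fails (for $n=1$, $S=\F_q$, $M\ge 1$ it selects $d=Mq-1$, but no nonzero polynomial of degree at most $Mq-1$ is divisible by $(X^q-X)^M$), so your strict-inequality and ceiling reading is the correct version, and it costs nothing later since only the limit $M\to\infty$ is used.
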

\begin{proof}
  If $\mult(P, \vec{x}) \ge M$ for a point $\vec{x} \in S$,
  then for every $\vec{i}$ with $\norm{\vec{i}} < M$, $P^{(\vec{i})}(\vec{x})=0$.
  The coefficients of $P^{(\vec{i})}$
  are linear combinations of
  coefficients of $P$,
  so this enforces
  a linear constraint on
  the coefficients of $P$.
  Using stars and bars, there are
  $\binom{M+n-1}{n} \abs{S}$ many pairs $(\vec{i}, \vec{x})$ with $\vec{x} \in S$.
  Thus, the total rank
  of the linear constraints imposed on $P$ by $S$ and $\conj{S}$
  is at most the LHS of \eqref{eq:cc}.
  Meanwhile,
  the RHS is the number of coefficient variables in a degree-$d$ $n$-variate polynomial, also by stars and bars.
  Since the constraints do not outnumber the variables of this system,
  there must exist a nontrivial solution,
  i.e., a nonzero $P$
  satisfying the multiplicity constraints.
\end{proof}
\begin{fact}[Multiplicity gap complexity upper bound]
  The degree $d$ chosen in \cref{lem:constraint-counting} satisfies
  \begin{equation}
    \label{eq:degree-upper-bound}
    \binom{M+n-1}n |S| + \binom{(1-\eps) M + n - 1}n|\conj{S}|
    > \binom{d+n-1}n,
  \end{equation}
  because \eqref{eq:cc}
  is strictly increasing in $d$.
\end{fact}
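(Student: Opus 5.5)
The plan is to use only the minimality of $d$ in \cref{lem:constraint-counting}. Write $L$ for the left-hand side of \eqref{eq:cc}. It does not depend on $d$. Let $f(d) \coloneq \binom{d+n}{n}$, the number of monomials of degree at most $d$ in $n$ variables. The claimed inequality \eqref{eq:degree-upper-bound} reads $L > f(d-1)$, since $\binom{(d-1)+n}{n} = \binom{d+n-1}{n}$. So the Fact amounts to saying that $d-1$ does not satisfy \eqref{eq:cc}.

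The first step is to check that $d$ is well defined. The function $f(d)$ is a polynomial in $d$ of degree $n \ge 1$ with positive leading coefficient, and it is strictly increasing on integers $d \ge 0$, because $f(d) - f(d-1) = \binom{d+n-1}{n-1} \ge 1$ by Pascal's rule. Hence $f(d) \to \infty$, so some $d$ satisfies $L \le f(d)$, and the smallest such $d$ exists.

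The second step splits on the value of $d$. If $d \ge 1$, then $d-1$ is a nonnegative integer smaller than the minimal choice, so it violates \eqref{eq:cc}, which gives $L > f(d-1)$ directly. If $d = 0$, then $f(-1) = \binom{n-1}{n} = 0$. So we only need $L > 0$. This holds because $S \cup \conj{S} = \F^n$ is nonempty, and whichever of $S$ or $\conj{S}$ is nonempty contributes a positive term. The only care needed is this degenerate case. It requires the standard convention $M \ge 1$, so that $\binom{M+n-1}{n} \ge 1$. If $(1-\eps)M$ rounds to $0$ and $S$ is empty, one uses instead that the inequality is trivial for $d = 0$ in that case. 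Beyond this bookkeeping I expect no real obstacle: strict monotonicity of $f$ together with minimality of $d$ is the whole argument.
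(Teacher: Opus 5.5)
Your argument is the paper's: minimality of $d$ forces $d-1$ to violate \eqref{eq:cc}, and $\binom{(d-1)+n}{n}=\binom{d+n-1}{n}$. Your $d=0$ check fills in what the paper leaves implicit: there $\binom{n-1}{n}=0$, and the left side is at least $1$ whenever $M\ge 1$ and $S\neq\emptyset$.

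One correction to your bookkeeping concerns the case $S=\emptyset$ and $(1-\eps)M=0$ (e.g.\ $\eps=1$). There the left side is $0$ and $d=0$, so the strict inequality reads $0>0$ and is false, not trivial. That case cannot be proved and must simply be excluded. This is harmless, because the paper applies the Fact only to nonempty Furstenberg sets with $M\to\infty$.
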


\begin{proposition}[Multiplicity gap complexity lower bound]
  For all $0 < \eps \le 1$ and $\beta_\eps \coloneq \alpha + (1-\eps)(1-\alpha)$,
  let $S \subseteq \F_q^n$ be $(\delta, r, \alpha q^r)$-Furstenberg,
  and let $P$ be nonzero and degree-$d$
  satisfying \eqref{eq:MC}.
  Then,
  \begin{equation}
  \label{eq:degree-lower-bound}
    d > \delta q^r \ceil*{\frac{q \beta_\eps M - d}{q-1}}.
  \end{equation}
\end{proposition}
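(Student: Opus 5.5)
The plan is to follow the extended method of multiplicities of \cite{dkss,KLSS}. First I restrict derivatives of $P$ to the good affine subspaces and read off that the top homogeneous part $P_H$ has high multiplicity along the corresponding linear subspaces. Then I apply \cref{lem:multiplicity-sz} to $P_H$ to force $d$ to be large. Throughout, set
\[
w_0 \coloneq \ceil*{\frac{q\beta_\eps M - d}{q-1}}.
\]
If $w_0 \le 0$ the claim reads $d>0$ or less, which I would handle directly; $d\ge 1$ is needed since constants cannot have positive multiplicity everywhere.

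\textbf{Step 1 (restriction).} Fix a good direction $V\le_r\F_q^n$ with shift $b_V$ satisfying $\abs{(b_V+V)\cap S}\ge\alpha q^r$. Choose a matrix $A_V$ of full rank $r$ with image $V$. Take any $\vec i$ with $w\coloneq\norm{\vec i}<w_0$, and put $Q\coloneq P^{(\vec i)}$ and $g(t)\coloneq Q(b_V+A_Vt)\in\F_q[t_1,\dots,t_r]$. By the Fact on the degree of Hasse derivatives, $\deg g\le d-w$. By \cref{fact:multiplicity-loss}, together with the fact that multiplicity does not drop under affine substitution (from the chain rule for Hasse derivatives in \cite{dkss}), $g$ has multiplicity at least $M-w$ at the $\ge \alpha q^r$ parameters landing in $S$, and at least $(1-\eps)M-w$ at the others. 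Summing over $\F_q^r$ gives total multiplicity at least $q^r(\beta_\eps M-w)$. If $g$ were nonzero, \cref{lem:multiplicity-sz} with the set $\F_q$ would give $q^r(\beta_\eps M-w)\le (d-w)q^{r-1}$, that is, $w\ge (q\beta_\eps M-d)/(q-1)$, contradicting $w<w_0$. Hence $g\equiv 0$.

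\textbf{Step 2 (homogeneous part).} Since $g\equiv0$, its degree-$(d-w)$ homogeneous component vanishes. That component is $Q_H(A_Vt)$, and by \cref{fact:hasse-homogeneous} it equals $(P_H)^{(\vec i)}(A_Vt)$. This holds for every $\vec i$ with $\norm{\vec i}<w_0$. So $(P_H)^{(\vec i)}$ vanishes identically on $V$, and $\mult(P_H,x)\ge w_0$ for every $x$ in the union $U$ of the good subspaces. This is an identity in $t$, so it also holds over any extension field, which Step 3 may need. Also, $P_H$ is nonzero of degree $d$.

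\textbf{Step 3 (counting; main obstacle).} Now I need \cref{lem:multiplicity-sz} for $P_H$ to produce the factor $\delta q^r$. The naive route is to double count pairs $(V,x)$ with $x\in V\setminus\{0\}$. Every nonzero $x$ lies in the same number of $r$-subspaces, so this gives $\sum_{x\in\F_q^n}\mult(P_H,x)\gtrsim\delta q^n w_0$. Comparing with $dq^{n-1}$ then yields only $d\gtrsim\delta q w_0$, which loses a factor $q^{r-1}$. To recover $q^r$, I would first try to exploit homogeneity. Pass to $\F_q(t_1,\dots,t_r)$ as in \cite{KLSS}, so that each good $V$ contributes a single generic point $A_Vt$ at which $(P_H)^{(\vec i)}$ vanishes. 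The hope is to pick a product set over a larger field, containing these generic points together with their scalings, so that each good subspace contributes $q^r$ points of multiplicity $w_0$ and the densities compare as $\delta q^r w_0$ versus $d$; strictness would come from the origin or from the integrality slack in $w_0$. Alternatively I would use the fact that the multiplicity of $P_H$ is constant along punctured lines through $0$. Writing $P_H$ in a basis adapted to $V$, vanishing of all derivatives of order $<w_0$ on all of $V$ means every monomial has degree at least $w_0$ in the complementary coordinates. I would then run a Schwartz--Zippel count on the projective quotient to gain $q^{r-1}$. Making this rigorous is the step I expect to need the most care.
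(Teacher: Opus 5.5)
Your Steps 1 and 2 are correct and match the first half of the paper's proof: restrict $P^{(\vi)}$ to a good shift, kill it by multiplicity Schwartz--Zippel in $r$ variables when $\norm{\vi}<m$, then pass to the top homogeneous part. The gap is Step 3. That is exactly where the factor $\delta q^r$ comes from, and you leave it unproved.

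The missing idea is the choice of point set. Work over $\F_q(\vt)$ with
\[
T=\{\langle u,\vt\rangle : u\in\F_q^r\},\qquad \abs{T}=q^r,\qquad T^n=\{B\vt : B\in\F_q^{n\times r}\},
\]
so that $T^n$ is in bijection with \emph{all} $n\times r$ matrices. Suppose the column span of $B$ lies in a good subspace $V$. Then $B=A_VC$ for some $C\in\F_q^{r\times r}$. Substituting $\vt\mapsto C\vt$ into the identity $(P_H)^{(\vi)}(A_V\vt)\equiv 0$ gives $(P_H)^{(\vi)}(B\vt)=0$. Hence $\mult(P_H,B\vt)\ge m$ over $\F_q(\vt)$, and this includes rank-deficient $B$.

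Now take $B$ uniformly random and complete its column span to a uniformly random $r$-dimensional superspace. That superspace is a uniformly random $r$-dimensional subspace, so at least a $\delta$ fraction of all $B$ are of the above kind. Applying \cref{lem:multiplicity-sz} over $\F_q(\vt)$ with the set $T$ gives $\delta q^{rn}m\le d\,q^{r(n-1)}$, i.e.\ $d\ge\delta q^r m$.

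The paper handles rank-deficient $B$ differently, by a tiling argument applied to $P^{(\vi)}(\vb+B\vt)$; the substitution above is an equivalent shortcut. If you prefer a finite field, specializing $\vt$ to an $\F_q$-basis of $\F_{q^r}$ turns $T^n$ into $\F_{q^r}^n$.

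Neither of your sketches reaches this. In (a), a good subspace does not contribute ``a single generic point'' plus scalings, nor $q^r$ points. It contributes the $q^{r^2}$ points $B\vt$ whose columns lie in $V$. The gain over your naive count comes from the product structure of $T^n$, with $\abs{T}=q^r$ in place of $\abs{\F_q}=q$.

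In (b), any Schwartz--Zippel count over $\F_q$-rational (projective) points sees only pointwise multiplicities, and for $r\ge 2$ these cannot give the bound. Take $n=r=2$ and $\delta=1$. The homogeneous polynomial $(x_1^qx_2-x_1x_2^q)^m$ has multiplicity at least $m$ at every point of $\F_q^2$, yet its degree is $(q+1)m<q^2m$. So the identities in $\vt$ from your Step 2 must be used as roots over an extension field, which is what $T^n$ does.

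Strictness is not the real issue. The paper's own chain ends with $d\ge\delta m\abs{T}=\delta q^r m$, and only this non-strict form is used afterwards. Your origin idea would upgrade it to a strict inequality whenever $d>m$: the zero matrix lies in $T^n$, and $\mult(P_H,0)=d$.
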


\begin{proof}
  We will lower bound
  the multiplicity of the homogeneous part of $P$
  on different subspaces
  and
  use multiplicity Schwartz--Zippel
  to get a degree lower bound.
  The first step is to show that on many affine subspaces,
  many
  derivatives of $P$
  vanish.
  \begin{claim}
    For a $\delta$ fraction of $n \times r$ matrices $V$,
    there exists a shift $\vec{b}$
    such that
    for every $\vec{i}$ with $w \coloneq \norm{\vec{i}}$,
    the $r$-variate polynomial
    $Q_{\vi, V}(\vec{t})
      \coloneq P^{(\vec{i})}(\vec{b} + V\vec{t})$
    has total multiplicity at least
    \begin{equation}
      \label{eq:3}
      \sum_{\vec{t} \in \F_q^r} \mult(Q_{\vi,V}, \vec{t})
      \ge \alpha q^r (M - w) + (1-\alpha) q^r ((1-\eps) M - w)
      = q^r (\beta_\eps M - w).
    \end{equation}
  \end{claim}
  \begin{proof}[Claim proof]
  First, consider a random $V$ with full column rank.
  Its columns span a uniformly random
  $r$-dimension subspace,
  so with probability $\delta$,
  there must be a shift $\vb + \colSpan(V)$
  that intersects $S$ in at least $\alpha q^r$ points.
  Every $\vt$ is mapped to a unique point in this shift,
  and using \cref{fact:multiplicity-loss}
  with the multiplicities enforced by
  \eqref{eq:MC} gives \eqref{eq:3}.

  Now, let us extend this to
  {general} $n \times r$ matrices.
  For each $s < r$, take a
  uniformly random rank-$s$ matrix $V$ and
  let $\Lambda \coloneq \colSpan(V)$,
  and complete $\Lambda$
  to a random rank-$r$ subspace $\Lambda'$.
  With probability $\delta$ over $V$ and $\Lambda'$,
  there is a good shift $\vec{y} + \Lambda'$
  that intersects $S$
  at $\alpha q^r$ points.
  This means that for at
  least a $\delta$ fraction of $V$,
  there exists a suitable completion $\Lambda'$
  with such a good shift.

  Consider such $V$ and
  completely tile the good completion
  $\vec{y} + \Lambda'$
  with $q^{r-s}$ shifts of $\Lambda$.
  By averaging, at least one of these shifts
  $\vb + \Lambda$
  hits $S$ at $\alpha q^s$ points.
  Finally, notice that for every $\vec{x} \in \Lambda$,
  there are $q^{r-s}$
  solutions $\vt$ to $V\vt = \vec{x}$.
  Thus, for at least $\delta$ fraction of $V$, there exists $\vec{b}$ so that
  \begin{align*}
    \sum_{\vt \in \F_q^r} \mult(Q_{\vi,V}, \vt)
    &= \sum_{\vec{x} \in \Lambda} q^{r-s} \mult(P^{(\vec{i})}, \vb + \vec{x})
      \geq q^{r-s} \cdot q^s (AM - w),
  \end{align*}
  by \cref{fact:multiplicity-loss}
  and the multiplicity constraints \eqref{eq:MC}.
\end{proof}

Using the contrapositive of multiplicity Schwartz--Zippel,
if $Q_{\vi, V}$ satisfies \eqref{eq:3}
and
\begin{equation}
    q(\beta_\eps M - w) > d - w \ge \deg(Q_{\vi,V}),
\end{equation}
then $Q_{\vi,V}$ is the zero polynomial.
This is true whenever
\[w < m \coloneq \ceil*{\frac{q \beta_\eps M - d}{q-1}}.\]
If $Q_{\vi,V}
\equiv 0$,
we can
show that part of $P^{(\vec{i})}$
must also be zero.
\begin{claim}[Unshifting]
  If $Q_{\vi,V}$ is zero,
  then so is $(P^{(\vec{i})})_H(V\vec{t})$.
\end{claim}
\begin{proof}[Claim proof]
  Recall that $Q_{\vi,V}(\vec{t}) \coloneq P^{(\vec{i})}(\vb + V\vec{t})$.
  Plug $\vec{b} + V\vec{t}$ into each monomial of $(P^{(\vec{i})})_H$
  and expand;
  each such monomial (say of degree $\ell$)
  will expand to
  \[
    (\vb + V\vt)_{j_1}(\vb + V\vt)_{j_2} \dots (\vb + V\vt)_{j_{\ell}}
    = (V\vt)_{j_1}(V\vt)_{j_2} \dots (V\vt)_{j_\ell}
    + (\text{products of $<\ell$ terms}).\]
  The first term on the RHS
  is the same monomial but with $V\vt$ substituted
  instead of $\vb+V\vt$,
  so the sum over these terms across monomials
  will appear in both $Q_{\vi,V}$ and $(P^{(\vec{i})})_H(V\vt)$.
  This is the only possible source of degree-$\ell$
  monomials in $Q_{\vi,V}$,
  so if $(P^{(\vec{i})})_H(V\vt)$ is nonzero,
  then so is $Q_{\vi,V}$.
\end{proof}

We have shown so far that for a $\delta$ fraction of $n \times r$ matrices $V$,
every $\norm{\vec{i}} < m$
will give $(P^{(\vec{i})})_H(V\vt) \equiv 0$
(equivalently, $(P_H)^{(\vec{i})}(V\vt) \equiv 0$ via \cref{fact:hasse-homogeneous}).
Let us now treat $V\vt$ itself as the parameter
by viewing $(P_H)^{(\vec{i})}$
as a polynomial over the $r$-variate {function field}
$\F_q(\vt)$---now,
$(P_H)^{(\vi)}(V\vt) \equiv 0$
means that $V\vt$
itself is a root.
Define the set of $n$-output linear functions on $r$ variables,
which is a product set:
\[
T^n \coloneq \{V\vt : V \in \F_q^{n \times r}\}, \quad
T \coloneq \{\angles{\vec{u}, \vt} : \vec{u} \in \F_q^r\} \subseteq \F_q(\vt).
\]
On a $\delta$ fraction of $T^n$,
for all $\vi$ with $\norm{\vec{i}} < m$,
$(P_H)^{(\vec{i})}$
is the zero polynomial, which is the zero element
of the function field.
This is saying that
$P_H$ (as a polynomial over $\F_q(\vt)$)
has multiplicity $m$
on a $\delta$ fraction of $T^n$.
$P_H$ overall is nonzero so
by multiplicity Schwartz--Zippel over $\F_q(\vt)$,
\[d \ge \delta m \abs{T} >
\delta q^r\left\lceil\frac{q\beta_\eps M-d}{q-1}\right\rceil. \qedhere \]
\end{proof}

At last, let us prove \cref{thm:furstenberg-bound}
by combining our bounds for $d$.

\begin{proof}[Proof of \cref{thm:furstenberg-bound}]
As a reminder, $M$ is a parameter that we can set to be whatever we want.
With that in mind,
let us compare our upper and lower bounds on $d$ which hold for all $M$.
\eqref{eq:degree-upper-bound} gives
\begin{align*}
\binom{d+n-1}n
&< \binom{M+n-1}n \abs{S} + \binom{(1-\eps) M + n - 1}n\abs{\conj{S}} \\
&=  \parens*{\binom{M+n-1}n-\binom{(1-\eps) M + n - 1}n} \cdot \abs{S} + q^n\binom{(1-\eps) M+n-1}n.
\end{align*}
Rearranging this gives
a lower bound for $\abs{S}$
for every choice of $M$, including in the limit $M \to \infty$:
\begin{gather}
\label{eq:eta-tau-bound}
\abs{S}
\ge \frac{\eta - q^n \tau}
{1-\tau},\\
\eta \coloneq \frac{\binom{d+n-1}n}{\binom{M+n-1}n},\quad
\tau \coloneq \frac{\binom{(1-\eps) M+n-1}n}{\binom{M+n-1}n}.
\end{gather}
We can analyze the behavior of $\eta$ and $\tau$ as follows.
\begin{gather}
\label{eq:tau-limit}
\tau = \frac{(1-\eps) M + n - 1}{M + n - 1} \cdot
\frac{(1-\eps) M + n - 2}{M + n - 2}
\cdots
\frac{(1-\eps) M}{M}
\implies \lim_{M \to \infty} \tau = (1-\eps)^n.\\
\label{eq:eta-lower-bound}
\eta = \frac{d + n - 1}{M + n - 1} \cdot
\frac{d + n - 2}{M + n - 2}
\cdots
\frac{d}{M} \ge \parens*{\frac{d}{M}}^n.
\end{gather}
\eqref{eq:degree-lower-bound} gives a lower bound
for $d/M$:
\begin{equation}
\label{eq:ratio-lower-bound}
    \frac{d}{M} > \frac{\delta \beta_\eps q^{r+1}}{\delta q^r + q - 1}.
\end{equation}
Plugging %
into \eqref{eq:eta-tau-bound}
and taking the limit $M \to \infty$,
we get our bound.
\begin{align*}
\abs{S}
&\ge \lim_{M \to \infty} \frac{\eta - q^n \tau}
{1-\tau}\\
&\ge q^n\cdot\frac{\parens*{\frac{\delta \beta_\eps q^{r}}{\delta q^r+q-1}}^n -(1-\eps)^n}{1-(1-\eps)^n}.\qedhere
\end{align*}
\end{proof}

Next,
we choose $\eps$
to get a good lower bound.
This bound is nearly tight;
see \cref{apx:tight}.

\begin{corollary}
\label{cor:furst-constant-density}
    Let $S\subset \F_q^n$ be $(\delta, n, r, \alpha q^r)$-Furstenberg.
    For all $1 < C < \alpha \delta q^{r-1}/2$,
    \[
    \frac{\abs{S}}{q^n} \ge \exp\parens*{-\frac{2Cn}{\delta \alpha q^{r-1}}}
    \cdot \parens*{1 - \frac{1}{C}}
    \cdot \alpha.
    \]
\end{corollary}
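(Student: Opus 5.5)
The plan is to start from the complement form of \cref{thm:furstenberg-bound} and make one explicit choice of $\eps$. Write
\[
u \coloneq \frac{\delta\eps(1-\alpha)q^r + q-1}{\delta q^r + q - 1}.
\]
Then the theorem gives
\[
\frac{\abs{S}}{q^n} \ge 1 - \frac{1-(1-u)^n}{1-(1-\eps)^n} = \frac{(1-u)^n - (1-\eps)^n}{1-(1-\eps)^n}.
\]
This right-hand side is decreasing in $u$, so any upper bound on $u$ may be substituted.

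First I will bound $u$. Dropping $q-1$ from the denominator gives
\[
u \le \eps(1-\alpha) + \frac{q-1}{\delta q^r} \le \eps(1-\alpha) + \frac{1}{\delta q^{r-1}}.
\]
I will choose
\[
\eps \coloneq \frac{C}{\delta\alpha q^{r-1}},
\]
so that $\frac{1}{\delta q^{r-1}} = \eps\alpha/C$. This yields
\[
u \le \eps\bigl(1-\alpha(1-1/C)\bigr), \qquad\text{so}\qquad \eps - u \ge \eps\,\alpha(1-1/C) > 0,
\]
where positivity uses $C>1$. The hypothesis $C < \alpha\delta q^{r-1}/2$ gives exactly $\eps < 1/2$. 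In particular $0<\eps\le 1$, so the theorem applies.

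Next I will lower bound the ratio. Since $t\mapsto (1-t)^n$ has derivative of magnitude at least $n(1-\eps)^{n-1}$ on $[u,\eps]$, the mean value theorem gives
\[
(1-u)^n - (1-\eps)^n \ge n(\eps-u)(1-\eps)^{n-1}.
\]
For the denominator, Bernoulli's inequality gives $1-(1-\eps)^n \le n\eps$. Combining the two,
\[
\frac{\abs{S}}{q^n} \ge \frac{\eps-u}{\eps}(1-\eps)^{n-1} \ge \alpha\left(1-\frac1C\right)(1-\eps)^n.
\]

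Finally, since $\eps \le 1/2$ we have $1-\eps \ge e^{-2\eps}$. Hence $(1-\eps)^n \ge \exp(-2\eps n) = \exp\bigl(-2Cn/(\delta\alpha q^{r-1})\bigr)$, which is the claimed bound.

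The only delicate point is the choice of $\eps$. It must be large enough to absorb the $(q-1)/(\delta q^r)$ term into an $\alpha/C$ fraction of $\eps$, yet small enough that $e^{-2\eps n}$ stays within the target. The mean-value comparison avoids any need for finer asymptotics.
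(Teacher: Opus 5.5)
Your proof is correct and takes essentially the paper's route. Both arguments start from the complement form of \cref{thm:furstenberg-bound}, apply the mean value theorem to the numerator and Bernoulli's inequality to the denominator, choose $\eps$ as $C$ times the $(q-1)/(\delta q^r)$ correction term divided by $\alpha$ (up to lower-order adjustments), and finish with an elementary exponential inequality valid for $\eps\le 1/2$. The only difference is cosmetic: you bound $(q-1)/(\delta q^r)\le 1/(\delta q^{r-1})$ first and set $\eps = C/(\delta\alpha q^{r-1})$, whereas the paper sets $\eps = C\gamma/(\alpha+\gamma)$ with $\gamma=(q-1)/(\delta q^r)$ and simplifies afterwards.
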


\begin{proof}
    Set $\gamma \coloneq \frac{q-1}{\delta q^r}$. From \cref{thm:technical-main}, we have for every $\eps\in (0,1)$ that \begin{align*}\frac{|S|}{q^n} & \ge \frac{\left(1 - \frac{\delta\eps(1-\alpha)q^r + q - 1}{\delta q^r + q-1}\right)^n-(1-\eps)^n}{1-(1-\eps)^n} = \frac{\left(1 - \frac{\eps(1-\alpha) +\gamma}{1+\gamma}\right)^n-(1-\eps)^n}{1-(1-\eps)^n}. \end{align*}

    By the mean value theorem, we have that for $x > y > 0$, $x^n-y^n\ge ny^{n-1}(x-y)$, and by Bernoulli's inequality, we have $1-(1-\eps)^n\le n\eps$. Hence, 
\begin{align}
\frac{|S|}{q^n} & \ge \frac{n(1-\eps)^{n-1}\left(\eps - \frac{\eps(1-\alpha) +\gamma}{1+\gamma}\right)}{n\eps}  = \frac{\alpha + \gamma -  \gamma/\eps}{1+\gamma}\cdot\left(1 - \eps\right)^{n-1}.\label{eq:MVT}
\end{align}
As $\eps$ gets smaller, the second factor becomes larger, but the first factor decreases, and even becomes negative once $\eps \le {\gamma}/({\alpha + \gamma})$. However, we note that for any $\eps \ge {C\gamma}/({\alpha + \gamma})$ that is growing, the first term remains the same asymptotically, while the second term decays. This hints that we should set $\eps = {C\gamma}/({\alpha + \gamma})$ for some constant-bounded $C > 1$. Plugging this in, we have

\begin{align}
     \left(\frac{\alpha + \gamma -  \gamma/\eps}{1+\gamma}\right)\left(1 - \eps\right)^{n-1} & = \left(1-\frac{1}C\right)\frac{\alpha + \gamma}{1+\gamma}\left(1 - \frac{C\gamma}{\alpha + \gamma}\right)^{n-1} \\ & \ge \left(1-\frac{1}C\right)\alpha\left(1 - \frac{C\gamma}{\alpha}\right)^{n-1}.
\end{align}
Assuming $1 - {C\gamma}/\alpha \ge 1/2$, %
we use $1 - x \ge \exp(-x/(1-x))$ to obtain \[\frac{|S|}{q^n} \ge \left(1-\frac{1}C\right)\alpha\exp\parens*{-\frac{2Cn\gamma}{\alpha}} \ge \left(1-\frac{1}C\right)\alpha\exp\parens*{-\frac{2Cn}{\alpha\delta q^{r-1}}}.\qedhere\]
\end{proof}

\begin{remark}
    The constant $2$ inside 
    the exponential term in
    \cref{cor:furst-constant-density},
    which will affect the load bound
    for hashing
    and the list size
    for random linear codes,
    can be set arbitrarily close to $1$
    as long as the correponding
    assumption on $C$ is updated.
    For clarity,
    we will not optimize this and other
    constants throughout.
\end{remark}

\subsection{Furstenberg Acrobatics}
\subsubsection{Random Rotations}
\label{sec:random-rotations}

 We now improve the Furstenberg set lower bound for small $\delta$ by using a ``random rotations'' argument, just like how Ordentlich, Regev, and Weiss did for Kakeya sets \cite[Lemma 2.6]{orw22}. This is crucial to getting tail bounds sharp enough to obtain optimal expectation and also to get optimal probability of achieving list decoding capacity.

\begin{lemma}
\label{lem:random-rotation}
    Let $0 < \eps < \delta < 1$. Assume $S\subset \F_q^n$ is $(\eps,n, r, T)$-Furstenberg. There exists $S'$ which is $(\delta, n, r, T)$-Furstenberg with
    \begin{equation}
        \label{eq:random-rotations}\abs{S'} \le \ceil*{\frac{\log(1-\delta)}{\log(1-\eps)}} \cdot \abs{S}.
    \end{equation}
\end{lemma}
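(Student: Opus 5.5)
We will build $S'$ as a union of $k \coloneq \ceil*{\log(1-\delta)/\log(1-\eps)}$ images of $S$ under independent uniformly random invertible linear maps $A_1,\dots,A_k \in \mathrm{GL}_n(\F_q)$, namely $S' \coloneq \bigcup_{j=1}^k A_j S$. Since each $A_j$ is a bijection, $\abs{A_j S} = \abs{S}$, so the union bound gives $\abs{S'} \le k\abs{S}$, which is exactly \eqref{eq:random-rotations}. It remains to show that for some choice of the $A_j$, the set $S'$ is $(\delta,n,r,T)$-Furstenberg.

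Call an $r$-dimensional subspace $V$ \emph{good for $X$} if some shift $x+V$ has $\abs{(x+V)\cap X}\ge T$. Let $\mathcal{G}$ denote the set of subspaces that are good for $S$; by hypothesis $\mathcal{G}$ contains at least an $\eps$ fraction of all $V\le_r\F_q^n$. The key observation is equivariance: $V$ is good for $AS$ if and only if $A^{-1}V$ is good for $S$. This holds because $A^{-1}$ maps $x+V$ bijectively onto $A^{-1}x + A^{-1}V$ and maps $(x+V)\cap AS$ onto $(A^{-1}x+A^{-1}V)\cap S$. Goodness is also monotone, so any $V$ good for some $A_jS$ is good for $S'$.

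Now fix any $V\le_r \F_q^n$. Since $\mathrm{GL}_n(\F_q)$ acts transitively on $r$-dimensional subspaces, for uniformly random $A$ the subspace $A^{-1}V$ is uniformly distributed among all $r$-dimensional subspaces. Hence $\Pr[V \text{ good for } A_jS]\ge\eps$ for each $j$, and by independence across $j$,
\[
\Pr[V \text{ not good for } S'] \le (1-\eps)^k \le 1-\delta,
\]
where the last inequality follows from the choice of $k$, because $\log(1-\eps)<0$. Consequently the expected fraction of subspaces that are good for $S'$ is at least $\delta$, so some fixed choice of $A_1,\dots,A_k$ attains a fraction of at least $\delta$. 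This gives the lemma.

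There is no serious obstacle. The only points to check carefully are the equivariance of goodness under invertible linear maps (translations are not even needed), the transitivity of $\mathrm{GL}_n$ on $r$-dimensional subspaces, which makes $A^{-1}V$ uniform, and the direction of the inequality when dividing by the negative quantity $\log(1-\eps)$.
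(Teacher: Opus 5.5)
Your proposal is correct and follows essentially the same argument as the paper: take the union of $k$ images of $S$ under independent uniformly random invertible linear maps, use transitivity of $\mathrm{GL}_n(\F_q)$ on $r$-dimensional subspaces to get $\Pr[V \text{ not good for } S'] \le (1-\eps)^k \le 1-\delta$, and average over the choice of maps. You are also more careful than the paper at a few points: you spell out the equivariance $V$ good for $AS \iff A^{-1}V$ good for $S$, you write $\le$ rather than $=$ for $(1-\eps)^k$, and you note the sign flip when dividing by $\log(1-\eps)$.
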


\begin{proof}
    The idea will be to take the union of random rotations of $S$. These random rotations will introduce new ``rich'' directions, thereby boosting $\eps$ to $\delta$.
    We will then upper bound the number of random rotations needed, giving the upper bound on the size of $S'$.

    Let $M_1,\dots, M_k$ be random and independent $n\times n$ invertible matrices over $\F_q^n$. Consider $S' = \bigcup_{i\in [k]} M_iS$, where $MS = \{Ms: s\in S\}$. Fix an $r$-dimensional subspace $V\le \F_q^n$. For a fixed $i$, the probability that $V$ is not a rich direction
    in $M_iS$
    (does not intersect $M_i S$ in at least $\alpha q^r$ points)
    is at most $1-\eps$. This is because at least $\eps$ fraction of subspaces are rich for $S$, and an $M_i$ transitively permutes these rich subspaces. Hence the probability $V$ is not rich when we take the union of $k$ independent rotations is $(1-\eps)^k$. Thus, there exists a choice of $M_1,\dots, M_k$ such that the number of rich directions of $S'$ comprise of at least $1-(1-\eps)^k$. For this quantity to be at least $\delta$, we need $k = \ceil{ {\log(1-\delta)}/{\log(1-\eps)} }$. The result now follows by noting $|S'|\le k|S|$.
\end{proof}

Combine this random rotations lemma with our prior lower bound \cref{cor:furst-constant-density}.
This shifts the dependence on $\delta$
from inside the exponential to outside.

\begin{corollary}
\label{cor:furst-improved-delta}
    Let $S\subset \F_q^n$ be $(\delta,n, r, T)$-Furstenberg.
    As long as $T \ge 8q$,
    \[
    {\abs{S}}\ge
    \frac{\delta T}{4} \cdot  q^{n-r} \exp\parens*{-\frac{8qn}{T}}\]
\end{corollary}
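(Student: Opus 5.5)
The plan is to combine \cref{lem:random-rotation} with \cref{cor:furst-constant-density}. The rotation lemma raises the rich-direction density from $\delta$ to a constant (I take $1/2$) at a multiplicative cost of about $1/\delta$ in size. Applying the corollary at constant density then puts only an absolute constant inside the exponential, and $\delta$ appears only as the prefactor. Throughout I set $\alpha \coloneq T/q^r$, so $S$ is $(\delta, n, r, \alpha q^r)$-Furstenberg.

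\emph{Case $\delta \ge 1/2$.} Here $S$ is already $(1/2, n, r, T)$-Furstenberg, since having rich shifts for a $\delta$-fraction of directions implies it for a $1/2$-fraction. I apply \cref{cor:furst-constant-density} with density $1/2$ and $C = 2$. Its only real hypothesis, used in the proof, is $1 - C\gamma/\alpha \ge 1/2$ with $\gamma = (q-1)/(\tfrac12 q^r)$. This holds because
\[
\frac{C\gamma}{\alpha} < \frac{2\cdot 2 q^{r-1}}{T} = \frac{4q}{T} \le \frac12
\]
when $T \ge 8q$. The stated strict inequality $C < \alpha\delta q^{r-1}/2$ may fail at the endpoint $T = 8q$, so I will point out that the proof only needs the nonstrict version, or equivalently the bound $1 - C\gamma/\alpha \ge 1/2$ checked above. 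The corollary then gives
\[
\abs{S} \ge q^n \cdot \frac{T}{q^r}\cdot\frac12 \cdot \exp\parens*{-\frac{4n}{\tfrac12 T/q}} = \frac{T}{2}\, q^{n-r}\exp\parens*{-\frac{8qn}{T}},
\]
which is at least $\frac{\delta T}{4} q^{n-r}\exp(-8qn/T)$ because $\delta \le 1$.

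\emph{Case $\delta < 1/2$.} I apply \cref{lem:random-rotation} with the lemma's $\eps$ set to $\delta$ and its target density set to $1/2$. This gives a set $S'$ that is $(1/2, n, r, T)$-Furstenberg with $\abs{S'} \le k\abs{S}$, where
\[
k = \ceil*{\frac{\ln 2}{-\ln(1-\delta)}}.
\]
Using $-\ln(1-\delta) \ge \delta$, I get $k \le \ln 2/\delta + 1 \le (\ln 2 + 1)/\delta \le 2/\delta$, where the middle step uses $\delta < 1$. The first case applied to $S'$ gives $\abs{S'} \ge \frac{T}{2} q^{n-r}\exp(-8qn/T)$. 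Therefore
\[
\abs{S} \ge \frac{\abs{S'}}{k} \ge \frac{\delta}{2}\cdot\frac{T}{2} q^{n-r} e^{-8qn/T},
\]
which is the claimed bound.

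The only delicate point is the boundary condition on $C$ in \cref{cor:furst-constant-density} at $T = 8q$, which I handle by the nonstrict check above. The rest is bookkeeping of constants.
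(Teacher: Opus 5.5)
Your proposal is correct and follows essentially the same route as the paper. Both arguments use random rotations to boost the density to $1/2$ with the estimate $\lceil \ln 2/(-\ln(1-\delta))\rceil \le 2/\delta$, then apply the constant-density corollary with $C=2$. Your explicit case $\delta \ge 1/2$ and your endpoint check at $T = 8q$ (using the $q-1$ slack in $\gamma$) fill two small points that the paper's proof by contradiction passes over silently.
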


\begin{proof}
We will use the following claim:
\begin{claim}
For $x \in (0,1)$,
    \[
    \ceil*{\frac{\log(1/2)}{\log(1-x)}}
    \le \frac 2 x.
    \]
\end{claim}
\begin{proof}[Claim proof]
First,
\begin{align}
    \ln(1-x) \le -x \implies
    \frac{\ln(1/2)}{\ln(1-x)}
    \le \frac{\ln 2}{x}.
\end{align}
Then,
\begin{align}
    \ceil*{\frac{\log(1/2)}{\log(1-x)}}
    &\le
    {\frac{\log(1/2)}{\log(1-x)}} + 1\\
    &\le \frac{x + \ln 2}{x}\le \frac 2 x,
\end{align}
since $x < 1 < 2 - \ln 2$.
\end{proof}
Now, assume for the sake of contradiction that there exists set $S$ contradicting the corollary. By \cref{lem:random-rotation}, there exists a $(1/2,n, r, T)$-Furstenberg set $S'$ of size
\begin{align}
    {|S'|}
    &< \ceil*{\frac{\log(1/2)}{\log(1-\delta)}}\cdot
    \frac{\delta T }{4} \cdot q^{n-r} \exp\parens*{-\frac{8qn}{T}}\\
    &\le
    \frac{T}{2} \cdot q^{n-r} \exp\parens*{-\frac{8qn}{T}}.
\end{align}
However, this directly contradicts \cref{cor:furst-constant-density}
with $C=2$.
\end{proof}

\subsubsection{Random Projections}
Jumping ahead a bit,
\cref{cor:furst-improved-delta}
is strong enough
to show optimal expected max load 
for hash functions
on ambient dimension $n = {O(\ell)}$,
which is the expository case
described in \cref{sec:proof-overview},
as well as
optimal list size for random linear codes
with rate bounded away from zero.
Unfortunately, this lower bound is meaningless
when $n \gg T/q$,
which is the setting of hashing from large universes
or random linear codes with vanishing rate.

We will use a dimension reduction idea
to improve the 
Furstenberg set lower bound
in the case where $r = n-k$
for small $k$,
at the cost of 
a leading constant factor.
We can do so by taking
a random projection of 
a $(\delta, n, n-k, T)$-Furstenberg set
which reduces the ambient dimension
drastically to something which 
only depends on $k$ and $T$. This is inspired by a compression step done in \cite{admpt}.
The first step
is to show that a random linear map
that compresses
\emph{any} arbitrarily large vector space
to $2t$ dimensions
will be surjective on
an arbitrary fixed set of size $q^t$
with constant probability.
This is optimal
as it is the converse of
the birthday paradox.

\begin{lemma}[Dimension reduction]
\label{lem:collision}
    Let $V$ be a vector space, and let $f\colon V\to \F_q^{2t}$ be a uniformly random linear map. For every set $B\subseteq V$ of size at most $q^t$, \[\Pr_f\braks*{\abs{f(B)}
    = \abs{B}}
    \ge \frac{1}2.\]
\end{lemma}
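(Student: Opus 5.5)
This is a union bound over pairs combined with pairwise collision probabilities of a random linear map. The event $\abs{f(B)} = \abs{B}$ fails exactly when some pair of distinct $x,y \in B$ has $f(x) = f(y)$. By linearity, this is equivalent to $f(x-y) = 0$. So it suffices to bound, for each fixed nonzero vector $v = x-y \in V$, the probability that $f(v)=0$, and then sum over all $\binom{\abs{B}}{2}$ pairs.

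For the first step, fix a nonzero $v \in V$. I will argue that $f(v)$ is uniformly distributed on $\F_q^{2t}$. If $V$ is infinite-dimensional or unspecified, I first restrict attention to the finite-dimensional span of $B$, since only the values of $f$ there matter. Extend $v$ to a basis of that span. A uniformly random linear map is determined by independent uniform images of the basis vectors, so $f(v)$ is a uniform element of $\F_q^{2t}$. Hence
\[
\Pr_f[f(v) = 0] = q^{-2t}.
\]
Equivalently, $f$ is a product of $2t$ independent uniform linear functionals, and each one vanishes on $v$ with probability $1/q$.

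For the second step, the union bound gives
\[
\Pr_f\bigl[\abs{f(B)} < \abs{B}\bigr] \le \binom{\abs{B}}{2} q^{-2t} \le \frac{q^{2t}}{2} q^{-2t} = \frac{1}{2},
\]
using $\abs{B} \le q^t$ and $\binom{N}{2} \le N^2/2$. Taking complements yields the claim.

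The only point needing care is the measure-theoretic meaning of a "uniformly random linear map" on an arbitrary vector space $V$. Restricting to the finite-dimensional span of $B$ handles this cleanly. The rest is routine.
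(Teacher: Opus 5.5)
Your proposal is correct and is essentially the paper's proof. The paper uses 2-universality to get collision probability $q^{-2t}$ per pair, bounds the expected number of colliding pairs in $B$ by $\binom{|B|}{2}q^{-2t} \le |B|^2/(2q^{2t}) \le 1/2$, and concludes that there are no collisions with probability at least $1/2$, which is exactly your union bound (your restriction to the span of $B$ is a harmless bit of extra care the paper leaves implicit).
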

\begin{proof}
    By 2-universality,
    any fixed pair of vectors collide with probability $1/q^{2t}$.
    There are $\binom{\abs{B}}2$ pairs of vectors from $B$, so the expected number of collisions within $B$ is at most \[q^{-2t}\binom{\abs{B}}2 \le \frac{\abs{B}^2}{2q^{2t}}\le 1/2.\]
    Therefore, with probability $\ge 1/2$, there are no collisions, which implies $\abs{f(B)} = \abs{B}$.\qedhere
\end{proof}

If we take a $(\delta, n, n-k, T)$-Furstenberg set $S$ over $\F_q^n$
and apply a random projection,
the resulting set
will still be meaningfully Furstenberg,
because this lemma shows that
many affine subspaces that 
intersect $S$ at $T$ points
will continue to do so after projection.
But our new Furstenberg set is over a  significantly smaller ambient dimension,
so we can apply our Furstenberg lower bound
which implies a much stronger lower bound
for $S$.

\begin{theorem}[formal version of \cref{thm:intro-main}]
\label{thm:furst-dim-reduce}
Let $\delta > 0$, 
$n\ge k\ge 1$, $q$ be a prime power, $8q\le T\le q^{n-k}$, and $t = \ceil{\log_q T}$. 
Suppose $S$ is $(\delta, n, n-k, T)$-Furstenberg.
Then, there exists a $(\delta/2, k+2t, 2t, T)$-Furstenberg set of size at most $|S|$,
which implies
\[
\abs{S} \ge \frac{\delta T}{8}
\cdot q^k 
\exp\parens*{-\frac{8q(k+2t)}{T}}.
\]
\end{theorem}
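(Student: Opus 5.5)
Let $f\colon \F_q^n \to \F_q^{k+2t}$ be a uniformly random linear map. I will show that for some fixed choice of $f$ the image $f(S)$ is $(\delta/2, k+2t, 2t, T)$-Furstenberg. Since $\abs{f(S)}\le\abs{S}$, the size bound then follows from \cref{cor:furst-improved-delta} applied in ambient dimension $k+2t$ with $r=2t$. That corollary requires $T\ge 8q$, which is a hypothesis, and it gives
\[
\abs{S}\ge \abs{f(S)}\ge \frac{(\delta/2)T}{4}\, q^{k}\exp\parens*{-\frac{8q(k+2t)}{T}}.
\]
One should check that $2t\le k+2t$ and that $T\le q^{2t}$ (true since $t=\ceil{\log_q T}$), so the Furstenberg notion with $r=2t$ makes sense.

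\textbf{Step 1: the relevant directions.} Condition on $f$ being surjective; this fails with probability at most $q^{-(n-k-2t)}$-ish. I will treat this more carefully by handling the non-surjective case inside the averaging, or by noting that $f$ is surjective with high probability when $n\ge k+2t$. The kernel $K=\ker f$ has dimension $n-k-2t$. For a surjective $f$ and a subspace $V\le_{n-k}\F_q^n$ with $K\subseteq V$, the image $f(V)$ has dimension $(n-k)-(n-k-2t)=2t$. The key distributional fact is the following: if we take $V$ uniform among $(n-k)$-subspaces and $f$ uniform, then conditioned on $K\subseteq V$, the image $f(V)$ is a uniform $2t$-subspace of $\F_q^{k+2t}$. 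This holds by symmetry under $GL$ actions.

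\textbf{Step 2: a second, independent randomization.} Containment $K\subseteq V$ is a rare event, so I will instead reverse the sampling. Fix a rich $V$ with a shift $x$ such that $B\subseteq (x+V)\cap S$ and $\abs{B}=T\le q^t$. Then choose $f$ by first picking a random linear map $g\colon V\to\F_q^{2t}$ and extending it to all of $\F_q^n$ randomly. Equivalently, write $f=\pi\circ\phi$, where $\phi$ is a uniformly random invertible map of $\F_q^n$ and $\pi$ is a fixed coordinate projection. Then $f$ restricted to the direction space $V$ is a uniformly random linear map $V\to \F_q^{k+2t}$.

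\textbf{Step 3: the main obstacle.} I need the image of $x+V$ to be a $2t$-dimensional affine subspace containing $T$ distinct points of $f(S)$, and $f(V)$ to be a fixed target $W\le_{2t}\F_q^{k+2t}$. My plan is to pick $W$ uniformly and to restrict attention to $f$ with $f(V)= W$. Fix a complement $U$ of $\ker(f|_V)$ in $V$, or more concretely, use \cref{lem:collision}: translate $B$ to $B-x\subseteq V$, so that $f$ is injective on $B$ with probability $\ge 1/2$ provided $f|_V$ is distributed as a random map into a $2t$-dimensional space. To arrange this, I will sample $f$ as follows. Choose a uniform $2t$-subspace $W$. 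Choose a uniform linear $g\colon \F_q^n\to W$ (this is the map to which \cref{lem:collision} applies). Independently choose a uniform linear $h\colon\F_q^n\to \F_q^{k+2t}$ with $h(V')$... This is where I expect difficulty: decoupling "$f|_V$ lands in $W$" from "$f$ is uniform".

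The cleanest route I would try is a double-counting (averaging) argument over pairs. Count pairs $(f,W)$ such that $W$ is rich for $f(S)$. For each $W$, the expectation over $f$ is at least $\Pr_{V}[V\text{ rich for }S]\cdot\Pr[\text{no collision}]\ge \delta/2$. To justify this, realize each $W$ as $f(V)$ for $V=f^{-1}(W)$. For surjective uniform $f$ and fixed $W$, the preimage $f^{-1}(W)$ is a uniformly random $(n-k)$-subspace, so with probability $\ge\delta$ it is rich for $S$ with witness $B$. Conditioned on this, $f$ restricted to $V\to W$ is, by symmetry, still uniform among linear maps $V\to W$, which is an independent random map into $\F_q^{2t}$. \cref{lem:collision} then gives injectivity on $B$ with probability $\ge 1/2$, and the image of $x+V$ lies in the affine space $f(x)+W$. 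Averaging over $W$ and $f$, some $f$ makes at least a $\delta/2$ fraction of $W$ rich, with at least $T$ points of $f(S)$. The hardest part to verify rigorously is this conditional uniformity of $f|_V$ given $f^{-1}(W)=V$, together with handling the non-surjective $f$.
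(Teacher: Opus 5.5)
Your final double-counting argument is exactly the paper's proof, and Steps 1--3 are unnecessary. That argument fixes $W$, notes that $V=f^{-1}(W)$ is a uniform $(n-k)$-subspace for uniform surjective $f$, uses richness with probability $\delta$, applies \cref{lem:collision} to $f|_V$ on the translated witness set, fixes an $f$ achieving the average, and then invokes \cref{cor:furst-improved-delta}. The two points you flag are easy: since only the existence of a good $f$ is needed, sample $f$ uniformly among surjective maps from the start (as the paper does); then $f|_V$ is uniform among \emph{surjective} maps $V\to W$, not all linear maps, but its pairwise collision probability is still $(q^{n-k-2t}-1)/(q^{n-k}-1)\le q^{-2t}$, so the proof of \cref{lem:collision} goes through unchanged.
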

\begin{proof}

    Let  $h: \F_q^n\to\F_q^{k+2t}$ be a uniformly random \emph{surjective} linear map.
    Consider a
    uniformly random subspace $W\le_{2t}\F_q^{k+2t}$. We will show that 
    \begin{equation}\label{eq:pr-rich}\Pr_{h,W}[\exists b\in \F_q^{k+2t}, \abs{(b + W)\cap h(S)} \ge T]\ge \frac{\delta}2.\end{equation} Then it will follow that the proportion of directions $W\le_{2t}\F_q^{k+2t}$ which have a shift that intersects $h(S)$ on at least $T$ points is at least $\delta/2$ in expectation. We can then fix $h$ that achieves at least this average, and obtain that $h(S)$ is $(\delta/2, k+2t, 2t, T)$-Furstenberg of size $\abs{h(S)}\le \abs{S}$.
    Apply \cref{cor:furst-improved-delta}
    to get a lower bound.

    Define the random subspace $V\coloneqq h^{-1}(W)$. We note that $V$ will be a random subspace of dimension $n-k$ in $\F_q^n$. Define the event \[{\cal E} \coloneq 
    (\exists b_V\in \F_q^n, \abs{(b_V+V)\cap S}\ge T).\] By definition of $S$, we know that $\Pr_{h,W}[{\cal E}]\ge \delta$. Condition on $\mathcal{E}$
    which fixes some subspace $V$ which is heavy for $S$,
    and let $S_V$ be an arbitrary subset of size $T$ from $(b_V+V)\cap S$. 
    Despite this conditioning, $g \coloneq h\rvert_V\colon V\to W$ is still a uniformly random surjective map.
    And because $S_V - b_V\subseteq V$ is a set of size $T\le q^t$,
    \cref{lem:collision} implies that
    $g$ will preserve its size often:
    \[\Pr[\abs{g(S_V - b_V)} = T \mid \mathcal{E}]\ge 1/2.\]
    But this implies that
    \[T = |g(S_V - b_V)| \le |h((S-b_V)\cap V))| =  |h(S-b_V)\cap W| = |h(S)\cap (h(b_V)+W)|,\] and thus there is a shift of $W$ that intersects $h(S)$ in at least $T$ places. Hence, \[\Pr_h[\exists b\in \F_q^{k+2t}, |(b + W)\cap h(S)|\ge T]\ge \Pr[{\cal E}] \cdot \Pr[|g(S_V - b_V)| = T \mid {\cal E}] \ge \frac{\delta}2,\] establishing \eqref{eq:pr-rich}.
\end{proof}

\section{Linear Hashing over \texorpdfstring{$\bm{\F_q}$}{Fq}}

We will explicitly denote 
dependence on $q$
anywhere it appears.
\begin{definition}[Max load]
    For a function $h: A\to B$ and subset $S\subset A$, we define \[M(S,h)\coloneqq \max_{b\in B} |h^{-1}(b)\cap S|\] to be the \emph{max load} when $h$ is viewed as a hash function mapping $S\subset A$ into $B$.
\end{definition}  

\subsection{Max Load Tail Bounds}
A tail bound for the max load of a 
random (surjective) linear hash function
is a special case of
a Furstenberg lower bound.
The contents of each bin
are determined by shifts
of the kernel of the hash function,
so a bad set of balls $S$
is precisely one where many subspaces
have shifts that heavily intersect $S$.
If we directly apply our Furstenberg
lower bound to the hashing setting,
we get the following tail bound:

\begin{theorem}[Load tail bound from Furstenberg]
\label{thm:mn-tail-direct}
Let $q$ be a prime power, $u\ge\ell\ge 1$ be integers,
and $h:\F_q^u\to\F_q^\ell$ be a uniformly random \emph{surjective} $\F_q$-linear map.
For any $T \ge 8q$ and
any $S\subseteq \F_q^u$,
\[\Pr\braks*{M(S,h)\ge T} \leq
  \frac{8\abs{S}}{q^\ell T}  \exp\parens*{\frac{8q(\ell+2\ceil{\log_qT})}{T}}.\]
\end{theorem}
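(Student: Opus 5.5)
Set $\delta \coloneq \Pr[M(S,h)\ge T]$. The plan is to show that $S$ is then $(\delta, u, u-\ell, T)$-Furstenberg, and afterwards invert the lower bound of \cref{thm:furst-dim-reduce} to bound $\delta$.

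\emph{Step 1 (hashing to Furstenberg).} For a uniformly random surjective linear $h\colon \F_q^u\to\F_q^\ell$, the kernel $\ker h$ is a uniformly random subspace of dimension $u-\ell$. This holds because $\mathrm{GL}_u(\F_q)$ acts transitively on such subspaces, and the law of $h$ is invariant under precomposition with invertible maps. The fibers $h^{-1}(b)$ are exactly the cosets $x+\ker h$. Hence $M(S,h)\ge T$ holds if and only if some shift of $\ker h$ meets $S$ in at least $T$ points. So the fraction of $(u-\ell)$-dimensional subspaces admitting such a rich shift is exactly $\delta$, and $S$ is $(\delta, u, u-\ell, T)$-Furstenberg.

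\emph{Step 2 (apply the lower bound).} Apply \cref{thm:furst-dim-reduce} with $n=u$ and $k=\ell$. This gives
\[
\abs{S}\ge \frac{\delta T}{8}\, q^{\ell}\exp\parens*{-\frac{8q(\ell+2\ceil{\log_q T})}{T}}.
\]
Solving for $\delta$ yields exactly the claimed bound. The hypothesis $T\ge 8q$ is assumed in the statement.

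\emph{Step 3 (edge cases).} The only other hypotheses of \cref{thm:furst-dim-reduce} are $\delta>0$, $u\ge\ell\ge1$, and $T\le q^{u-\ell}$. If $\delta=0$, there is nothing to prove. If $T>q^{u-\ell}$, then every fiber has size $q^{u-\ell}<T$, so $\Pr[M(S,h)\ge T]=0$ and the bound holds trivially. The degenerate case $u=\ell$ falls under this as well, since every fiber then has size $1<T$. These checks are the only (minor) obstacle; the substance is entirely in the earlier theorem.
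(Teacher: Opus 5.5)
Your proposal is correct and matches the paper's proof. Like the paper, you identify $\delta=\Pr[M(S,h)\ge T]$ with the fraction of rich $(u-\ell)$-dimensional directions via the uniformly random kernel, conclude that $S$ is $(\delta,u,u-\ell,T)$-Furstenberg, and invert \cref{thm:furst-dim-reduce}. You also go slightly further than the paper: you check the hypothesis $T\le q^{u-\ell}$ explicitly (when it fails the probability is trivially zero), and you keep the factor $8$ in the exponent that the paper's displayed computation drops.
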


\begin{proof}
Denote $\delta \coloneqq \Pr\left[M(S,h)\ge T\right]$ and $t\coloneqq \ceil{\log_q T}$.
The kernel of $h$ is a uniformly random subspace of $\F_q^{u}$ of dimension $u-\ell$. If $M(S,h)\ge T$, then some shift of this kernel intersects $S$ in at least $T$ points. Therefore, $\Pr_h[M(S,h)\ge T] = \delta$ implies that 
$S$ is $(\delta, u, u-\ell, T)$-Furstenberg. By \cref{thm:furst-dim-reduce}, this implies that 
\begin{gather*}
|S|
\ge \frac{\delta T}{8} \cdot q^\ell\exp\parens*{-\frac{q(\ell + 2t)}{T}}\\
     \implies
     \delta \le \frac{8\abs{S}}{q^\ell T} \exp\parens*{\frac{q(\ell+2t)}{T}}.\qedhere
   \end{gather*}
\end{proof}

For linear hashing, we are interested in a uniform random linear map, rather than a uniform surjective linear map.
We can generalize the tail bound by showing that linear maps are surjective with high probability. 
If $u$ is sufficiently large,
the probability that $h$ is not surjective is small enough that we can union bound it with the tail.

\begin{theorem}[General tail bound for large $u$]
\label{thm:remove-surject-large-u}
Let $q$ be a prime power, $u\ge\ell\ge 1$ be integers,
and $h:\F_q^u\to\F_q^\ell$ be a uniformly random $\F_q$-linear map.
In addition,
suppose that $u$
is sufficiently large,
i.e.,
$u\ge u_0(\ell, |S|, T,q)$.
Then for any $T \ge 8q$ and
any $S\subseteq \F_q^u$, we have
\[\Pr\braks*{M(S,h)\ge T} \leq
  \frac{9\abs{S}}{q^\ell T}  \exp\parens*{\frac{8q(\ell+2\ceil{\log_qT})}{T}}.\]
\end{theorem}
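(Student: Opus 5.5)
The plan is to condition on whether $h$ is surjective and then apply \cref{thm:mn-tail-direct} to the surjective case. A uniformly random linear map $h\colon\F_q^u\to\F_q^\ell$, conditioned on being surjective, is exactly a uniformly random surjective linear map. This holds because the uniform distribution on all linear maps is invariant under precomposition with $\mathrm{GL}_u(\F_q)$, and that group acts transitively on surjections: two surjections with the same kernel dimension differ by an invertible change of coordinates on the domain. Writing $\mathcal{S}$ for the event that $h$ is surjective, we get
\[
\Pr[M(S,h)\ge T] \le \Pr[M(S,h)\ge T \mid \mathcal{S}] + \Pr[\conj{\mathcal{S}}].
\]
\Cref{thm:mn-tail-direct} bounds the first term by $\frac{8\abs{S}}{q^\ell T}\exp\parens*{\frac{8q(\ell+2\ceil{\log_q T})}{T}}$.

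It then suffices to show that $\Pr[\conj{\mathcal{S}}]$ is at most $\frac{\abs{S}}{q^\ell T}\exp\parens*{\frac{8q(\ell+2\ceil{\log_q T})}{T}}$. Since the exponential factor is at least $1$, it is enough to have $\Pr[\conj{\mathcal{S}}]\le \frac{\abs{S}}{q^\ell T}$. The right-hand side is positive whenever $S\ne\emptyset$; if $S=\emptyset$, the probability on the left of the theorem is $0$ and there is nothing to prove. The map $h$ is non-surjective exactly when its $\ell\times u$ matrix has rank less than $\ell$, which happens exactly when some nonzero $y\in\F_q^\ell$ satisfies $y^\top H=0$. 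For each fixed nonzero $y$, the vector $y^\top H$ is uniform on $\F_q^u$, so it vanishes with probability $q^{-u}$. A union bound over the fewer than $q^\ell$ choices of $y$ gives $\Pr[\conj{\mathcal{S}}]\le q^{\ell-u}$.

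Finally, we choose $u_0$ so that $q^{\ell-u}\le \abs{S}/(q^\ell T)$, for example $u_0 = 2\ell+\ceil{\log_q T}$. This works because $\abs{S}\ge1$ and $T\le q^{\ceil{\log_q T}}$. Adding the two terms gives the constant $9$. The only step that needs any care is the claim that conditioning a uniform linear map on surjectivity yields a uniform surjective map; the transitivity argument above settles it. The rest is routine.
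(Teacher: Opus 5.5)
Your proposal is correct and matches the paper's proof. Both split on whether $h$ is surjective, apply the surjective-map tail bound, and absorb $\Pr[h \text{ not surjective}] \le q^{\ell-u}$ into the extra $1$ in the constant $9$; the only difference is that you get $q^{\ell-u}$ by a union bound over nonzero $y$ with $y^\top H = 0$, while the paper uses the product formula for full-rank matrices.

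Your explicit $u_0 = 2\ell + \lceil \log_q T\rceil$ and your treatment of $S=\emptyset$ are slightly more precise than the paper's ``$u_0$ large enough.'' The transitivity argument is correct but unnecessary, since conditioning a uniform distribution on an event already gives the uniform distribution on that event.
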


\begin{proof}
When $u$ is sufficiently large
relative to the other parameters,
then the probability that $h$
is not surjective is tiny.
Thus, the max load tail bound
conditioned on surjectiveness
extends to this general case
with a tiny loss.
Let $\cal E$ be the probability that $h$ is not surjective. Viewing $h$ as an $\F_q^{\ell \times u}$ matrix, $\cal E$ is equivalent to the probability this matrix is not full rank, which is \[1-(1-q^{-u})(1-q^{-u+1})\cdots(1-q^{-u+\ell - 1})\le \sum_{i=0}^{\ell - 1} q^{-u + i}\le q^{-u+\ell}.\]

Conditioned on $\neg {\cal E}$, $h$ is a random linear surjective map for which \cref{thm:mn-tail-direct} applies. Hence, 
\begin{align*}\Pr[M(S,h)\ge T] & \le \Pr[{\cal E}] + \Pr[M(S,h)\ge T \mid {\lnot\cal E}] \\ & \le q^{-u+\ell} + \frac{8\abs{S}}{q^\ell T}  \exp\parens*{\frac{8q(\ell+2\ceil{\log_qT})}{T}} \\ & \le \frac{9\abs{S}}{q^\ell T}  \exp\parens*{\frac{8q(\ell+2\ceil{\log_qT})}{T}}\end{align*}
as long as $u_0(\ell, \abs{S}, T, q)$
is set large enough.
\end{proof}

Finally, we can extend this tail bound to every $u$ with an appropriate embedding.
If the tail bound holds over
large $u' > u$,
we can lift our set $S$
and our random hash function $h$
from $\F_q^u$ to $\F_q^{u'}$
while preserving
which balls go into which bins
(and hence preserving the max load).
\cref{thm:remove-surject-large-u} then applies, and this bound
has no dependence on $u'$ so 
it also applies to our original setting.
\begin{corollary}[General tail bound for all $u$]
\label{thm:remove-surject}
Let $q$ be a prime power, $u\ge\ell\ge 1$ be integers,
and $h:\F_q^u\to\F_q^\ell$ be a uniformly random $\F_q$-linear map.
Then for any $T \ge 8q$ and
any $S\subseteq \F_q^u$, we have
\[\Pr\braks*{M(S,h)\ge T} \leq
  \frac{9\abs{S}}{q^\ell T}  \exp\parens*{\frac{8q(\ell+2\ceil{\log_qT})}{T}}.\]
    
\end{corollary}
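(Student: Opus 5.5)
The plan is to embed the given instance into a larger ambient space $\F_q^{u'}$, where $u' \ge \max(u, u_0(\ell,\abs{S},T,q))$. In that space \cref{thm:remove-surject-large-u} applies, and its bound does not depend on $u'$.

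Fix $u' \ge u$ large enough. Let $\iota\colon \F_q^u \to \F_q^{u'}$ be the standard embedding $x \mapsto (x,0)$, and set $S' \coloneq \iota(S)$, so $\abs{S'} = \abs{S}$. Couple the two random maps as follows. Draw a uniformly random linear map $h'\colon \F_q^{u'} \to \F_q^\ell$, viewed as an $\ell \times u'$ matrix with i.i.d.\ uniform entries. Define $h \coloneq h' \circ \iota$. This is the restriction of $h'$ to the first $u$ coordinates, namely the first $u$ columns of the matrix. Those columns are i.i.d.\ uniform, so $h$ is exactly a uniformly random linear map $\F_q^u \to \F_q^\ell$.

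Under this coupling, for every bin $b \in \F_q^\ell$ we have $\abs{h^{-1}(b)\cap S} = \abs{h'^{-1}(b) \cap S'}$, because $\iota$ is a bijection from $S$ onto $S'$ and $h'(\iota(x)) = h(x)$. Hence $M(S,h) = M(S',h')$ pointwise, and so $\Pr[M(S,h)\ge T] = \Pr[M(S',h')\ge T]$.

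Now apply \cref{thm:remove-surject-large-u} to $S' \subseteq \F_q^{u'}$ and $h'$. This is legitimate because $u' \ge u_0(\ell,\abs{S'},T,q) = u_0(\ell,\abs{S},T,q)$, $u' \ge \ell$, and $T \ge 8q$. The resulting bound
\[
\frac{9\abs{S}}{q^\ell T}\exp\parens*{\frac{8q(\ell+2\ceil{\log_q T})}{T}}
\]
involves only $\abs{S}$, $\ell$, $T$ and $q$, so it transfers verbatim to the original setting.

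The only step needing care is checking that the coupling really produces the uniform distribution on linear maps $\F_q^u \to \F_q^\ell$. The column-restriction argument handles this, and I expect no real obstacle.
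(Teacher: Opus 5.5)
Your proposal is correct and follows essentially the same route as the paper. Pad $S$ with zeros into $\F_q^{u'}$ with $u' \ge u_0(\ell,\abs{S},T,q)$, extend $h$ by appending uniformly random columns (equivalently, restrict a uniform $h'$ to its first $u$ columns), note that bin assignments and hence the max load are unchanged, and invoke \cref{thm:remove-surject-large-u}, whose bound does not depend on $u'$. Your explicit check that this coupling gives the uniform distribution on linear maps $\F_q^u\to\F_q^\ell$ is the step the paper leaves implicit.
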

\begin{proof}
If $u \ge u_0(\ell,\abs{S},T,q)$
from \cref{thm:remove-surject-large-u},
then we are done.
Otherwise, pick $u' = u_0(\ell, \abs{S}, T, q)$,
    and lift $S$ and $h$ to $\F_q^{u'}$:
    add zeroes to the end of each tuple in $S$,
    and add uniformly random columns
    to the matrix of $h$,
    turning it into
    a uniformly random linear map
    on a larger universe.
\Cref{thm:remove-surject-large-u}
gives a tail bound for the larger universe,
but the bin assignments
(and hence max load)
are the same as if we
had evaluated
$h$ on $S$
in the original, smaller universe.
\end{proof}

\subsection{Optimal Expected Max Load}

Combining our tail bounds with tail bounds of prior art \cite{admpt} allows us to show that linear hashing over $\F_q$ achieve optimal expected max load for any constant $q$.

\begin{theorem}[formal \cref{thm:intro-eml}]
\label{thm:eml-final}
There is an absolute constant $\ell_0$
such that for all integers
$\ell \ge \ell_0$, the following holds. For \emph{any} prime power $q$,
integer $u\ge \ell$,  subset $S\subset \F_2^u$ of size $n \coloneq q^\ell$, and uniformly random linear map $h:\F_q^u\to\F_q^\ell$,
we have \[\E_h [M(S,h)]\le {\frac{10q\ell}{\ln\ell}}
= O\parens*{\frac{q \ln \ln q}{\ln q} \cdot {\frac{\ln n}{\ln \ln n}}}.\]
\end{theorem}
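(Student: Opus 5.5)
The plan is to write $\E_h[M(S,h)] = \sum_{T \ge 1} \Pr[M(S,h) \ge T]$ and split the sum into three ranges.
\begin{itemize}
\item A bottom range $T < T_0$, where each probability is bounded by $1$.
\item A middle range $T_0 \le T \le T_1$, handled by \cref{thm:remove-surject}.
\item A top range $T > T_1$, handled by a polylogarithmic-type tail bound in the style of \cite{admpt}, which I expect is the one reconstructed in \cref{sec:admpt}.
\end{itemize}
I would take $T_0 \coloneq c\,q\ell/\ln \ell$ with $c$ somewhat below $10$ (say $c = 9$), so that the bottom range contributes at most $T_0$. The hypothesis $T \ge 8q$ of \cref{thm:remove-surject} holds on the middle range once $\ell \ge \ell_0$, since $c\ell/\ln\ell \ge 8$ for large $\ell$.

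For the middle range, plug $\abs{S} = q^\ell$ into \cref{thm:remove-surject} to get
\[
\Pr[M(S,h) \ge T] \le \frac{9}{T}\exp\parens*{\frac{8q(\ell + 2\ceil{\log_q T})}{T}}.
\]
For $T \ge T_0$, the exponent is at most $\frac{8}{c}\ln \ell\,(1+o(1))$. This uses that $\ceil{\log_q T} = O(\log_q(q\ell)) = O(1+\log \ell)$ is negligible next to $\ell$. Hence each term is at most $9\ell^{8/c+o(1)}/T$, and summing over $T_0 \le T \le T_1$ gives at most
\[
9\ell^{8/c+o(1)} \ln(T_1/T_0) + O(1).
\]
This is harmless provided $\ln T_1 = O(\log q \cdot \log \ell)$, i.e.\ $T_1$ is quasi-polynomial in $\ell$ with a $\log q$ in the exponent. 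Indeed, $\ell^{8/9+o(1)}\log q\log\ell$ is $o(q\ell/\ln \ell)$ uniformly in $q$, because $\log q \le q$. So the bottom and middle ranges together contribute at most roughly $9q\ell/\ln\ell + o(q\ell/\ln\ell)$.

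The main obstacle is the top range. The $1/T$ decay from the Furstenberg bound is only harmonic, so it cannot be summed all the way up to $n$. What is needed is a bound of the form
\[
\Pr\bigl[M(S,h) \ge T_1\bigr] \le n^{-1}\cdot o\parens*{q\ell/\ln \ell},
\]
with $T_1 = q^{O(\log\log q)}\ell^{O(\log q)}$ or similar. Then the top range contributes at most $n \cdot \Pr[M \ge T_1] = o(q\ell/\ln\ell)$, using the trivial bound $M \le n$. I would try to get this from the \cite{admpt}/\cite{babka18} argument, which is the one the paper reconstructs. That argument proceeds by composing $\ell$ one-dimensional hashing steps and bounds the load by a product of per-step blowups, with a failure probability that can be driven to $n^{-2}$ at the cost of a polylogarithmic factor. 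If its stated form only controls the expectation, I would instead apply the analogous high-probability version at $T_1$ times a slightly larger polylogarithmic factor.

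Finally I would check the constant bookkeeping: $c + o(1) \le 10$ for $\ell \ge \ell_0$. For the asymptotic form I would use $\ell = \ln n/\ln q$ and $\ln \ell = \ln\ln n - \ln\ln q$. In the relevant regime this gives $q\ell/\ln\ell = O\parens*{\frac{q\ln\ln q}{\ln q}\cdot\frac{\ln n}{\ln\ln n}}$, with a short case analysis according to whether $\ln\ln q$ is at most half of $\ln\ln n$.
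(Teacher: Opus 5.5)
Your three-range skeleton is the same as the paper's: a trivial bound, then the Furstenberg tail of \cref{thm:remove-surject}, then an ADMPT-type tail. Your bottom- and middle-range estimates are fine as long as $\ln T_1 = O(\log q\log\ell)$. The gap is in the top range. You bound it by $n\cdot\Pr[M(S,h)\ge T_1]$, so you need $\Pr[M(S,h)\ge T_1]\lesssim 1/n$ at a polylogarithmic threshold $T_1$. You support this by asserting that the ADMPT argument drives the failure probability to $n^{-2}$ at polylogarithmic cost.

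No tail bound of that form can hold for linear hashing. Take $S$ to be an $\ell$-dimensional subspace of $\F_q^u$. Then $h|_S$ is a uniformly random linear map $\F_q^\ell\to\F_q^\ell$, and its kernel has dimension $k$ with probability $\Omega(q^{-k^2})$. In that event every nonempty bin has load exactly $q^k$. Hence $\Pr[M(S,h)\ge T]\ge\Omega(q^{-\lceil\log_q T\rceil^2})$, and requiring $\Pr[M(S,h)\ge T_1]\lesssim q^{-\ell}\cdot q\ell$ forces $\log_q T_1\gtrsim\sqrt{\ell}$. This is consistent with \cref{thm:ADMPT-bound}, whose bound $2w^{-\log_q w}$ is only $q^{-O(\log^2\ell)}$ at polylogarithmic $w$. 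With $\ln T_1\approx\sqrt{\ell}\ln q$, your middle-range estimate $9\ell^{8/9+o(1)}\ln(T_1/T_0)$ becomes $\ell^{25/18+o(1)}\ln q$, which is not $O(q\ell/\ln\ell)$ for bounded $q$. So the two requirements you place on $T_1$ are incompatible, and the sum does not close as written.

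The paper's fix is to \emph{integrate} the quasi-polynomial tail of \cref{thm:ADMPT-bound} over the top range, rather than multiplying a single tail value by $n$. For $w\ge W$ one has $w^{-\log_q w}\le w^{-\log_q W}$. So the top range contributes roughly $q^{O(\log\ell+\log\log q)}\cdot q^{-(\log_q W)^2}$, which is negligible once $(\log_q W)^2$ beats the prefactor's exponent. The paper starts this range at $\gamma=\exp(q\ell^{1/20})$, so its middle range costs only $8\ell^{9/10}\ln\gamma=8q\ell^{19/20}$. Integrating would in fact let you keep a $T_1$ with $\ln T_1=O(\ln q\,(\log\ell+\log\log q))$.

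Alternatively, you can keep your trivial top-range bound with $T_1\approx q^{\sqrt{\ell}+O(\log\ell+\log\log q)}$ and sharpen the middle range instead:
\begin{itemize}
\item Once $T\ge q\ell\ln\ell$, the factor $\exp(8q(\ell+2\lceil\log_q T\rceil)/T)$ is $1+o(1)$. That stretch then contributes only $O(\ln T_1)=O(\ln q\,(\sqrt{\ell}+\log\ell+\log\log q))=o(q\ell/\ln\ell)$.
\item The stretch $[T_0,q\ell\ln\ell]$ costs $\ell^{8/9+o(1)}\cdot O(\log\log\ell)$.
\end{itemize}
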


\begin{proof}
   We will write the expectation as the integral of the upper-tail bound.
   \begin{align}
       \E_h[M(S,h)] & = \int_0^\infty \Pr[M(S,h)\ge t]\,dt.
   \end{align}
    Evaluate this integral by splitting it into three regions:
    \begin{enumerate}[(1)]
        \item
        In the first region when $t$ is small, we expect $\Pr[M(S,h)\ge
        t] = \Omega(1)$. We will thus use a trivial bound of 1 until our
        tail bound starts to predict tail decay.
        Let us integrate up to $t\le 9q\ell/\ln \ell$.
        This term will dominate the integral.
        \begin{equation}
            \int_0^{9q\ell/{\ln\ell}} \Pr[M(S,h)\ge t]\,dt
            \le \int_0^{9q\ell/{\ln\ell}} 1 \, dt = \frac{9 q \ell}{\ln \ell}.
            \label{eq:integral-1}
        \end{equation}

        \item
        In the next region,
        $t$ is now large enough that
        our tail bound
        becomes meaningful.
        However, this integral
        does not converge,
        so we will need an appropriate upper limit.
        This second region
        bridges the gap
        until very strong tail bounds hold.
        \Cref{thm:remove-surject} applies in this regime
        if $9 \ell/{\ln \ell} \ge 8$,
        which is true for $\ell \ge \Omega(1)$.
        \begin{align}
        \int_{9q\ell/{\ln\ell}}^{\gamma}
        \Pr[M(S,h)\ge t]\,dt
            &\le
        \int_{9q\ell/{\ln\ell}}^{\gamma}
        \frac{8}{t} \exp\parens*{\frac{8 q (\ell + 2\ceil{\log_q
              t})}{t}} \, dt.
              \end{align}
      The exponential term
          is nonincreasing for $t \ge e$,
          and our integral's limits clearly fall in this regime.
          So, let us simplify the integral
          by moving the exponential term out
          with the lower limit substituted.
          This term on its own is upper bounded by
          \begin{align}
              \exp\parens*{\frac{8(\ell + 2\log_q(8 \ell/{\ln
              \ell}) + 4)}{9\ell/{\ln \ell}}}
              \le           \exp\parens*{\frac{8 \ln \ell}{9}
              + \frac{16 \log_2(8 \ell/{\ln \ell})}{9\ell/{\ln \ell}}
              + \frac{32}{9\ell/{\ln \ell}}}
              \le
              \ell^{9/10},
          \end{align}
          for $\ell$
          larger than an absolute constant.
          Therefore,
          \begin{align}
          \int_{9q\ell/{\ln\ell}}^{\gamma}
        \Pr[M(S,h)\ge t]\,dt
            &\le 8 \ell^{9/10}
              \int_{9q\ell/{\ln \ell}}^\gamma \frac 1 t \, dt\\
        &\le 8 \ell^{9/10} \ln \gamma.\label{eq:integral-2}
        \end{align}
        Choose
        $\gamma \coloneq \exp(q \ell^{1/20})$
        so that this
        is upper bounded by
        $8 q \ell^{19/20}$.

        \item Lastly,
        once $t$ is large enough,
        we can utilize the extremely strong tail bound
        \cref{thm:ADMPT-bound}
        from \cite{admpt}.
        In this region,
        the integral
        will easily converge
        to something vanishing.
 \begin{align}
            &\int_{\gamma}^\infty
            \Pr[M(S,h)\ge t]\,dt\\
            &\le
            \int_{W}^\infty
            \Pr[M(S,h)\ge wq^{15 + \log_2\log_2 q }\ell^{\log_2 q}] 
            (q^{15}\ell^{\log_2 q})
            \, dw\\
            &\le
            2q^{15 + \log_2 \log_2 q}\ell^{\log_2 q}
            \int_{W}^\infty           \frac{1}{w^{\log_q w}} \, dw\\
            &\le
            2q^{15 + \log_2 \log_2 q}\ell^{\log_2 q}
            \int_{W}^\infty
            \frac{1}{w^{\log_q W}} \, dw\\
            &\le \frac{
            \exp(O( \ln q(\ln \ell + \log_2\log_2 q)))
            }{(\log_q W - 1) W^{\log_q W - 1}}.\\
            \intertext{
        Here, $W \coloneq \gamma/(q^{15+\log_2\log_2 q} \ell^{\log_2 q}) = \exp(q\ell^{1/20} - O(\ln q(\ln \ell + \log_2\log_2 q))) \ge q^{\ell^{1/20} - O(\ln \ell )}$, which gives}
        &\le \frac
        {q^{O(\ln \ell)}}
        {
        q^{(\ell^{1/20} - O(\ln \ell))^2}},
        \end{align}
        which is at most $1$
        for $\ell$ larger than an absolute constant.
    \end{enumerate}
    In total, whenever $\ell$ is larger than some absolute constant $\ell_0$,
    the expectation is upper bounded by
    \[
\E_h[M(S,h)] \le
\frac{9 q \ell}{\ln\ell}
+ 8q\ell^{19/20}
+ 1
\le {\frac{10q \ell}{\ln \ell}}.
    \]
    We can also write
    this bound in terms of $n$
    instead of $\ell$ as follows:
    \begin{align}
        \frac{10q \ell}{\ln \ell}
        &= \frac{10q}{\ln q}
        \cdot \frac{\ln n}{\ln \ln n - \ln \ln q}\\
        &= \frac{10q}{\ln q}
        \cdot \frac{\ln n}{\ln \ln n}
        \cdot \frac{\ln (\ell \ln q)}{\ln (\ell \ln q) - \ln \ln q}\\
        &= \frac{10q}{\ln q}
        \cdot \frac{\ln \ell + \ln \ln q}{\ln \ell} \cdot \frac{\ln n}{\ln \ln n}\\
        &= O\parens*{\frac{q \ln \ln q}{\ln q} \cdot {\frac{\ln n}{\ln \ln n}}}.
    \end{align}
    (When $q=2$,
    $\ln \ln q$
    is actually negative, but we still recover a $O(\ln n/{\ln \ln n})$
    upper bound.)
\end{proof}

\section{List Decodability of Random $\F_q$-Linear Codes}

An error-correcting code $C \subseteq \F_q^n$
(with relative distance $\delta$, say)
has the property that every two codewords 
differ in at least $\delta n$ places.
Equivalently, every Hamming ball of radius $\delta n$ contains at most one codeword of $C$.
List decoding generalizes this property to larger Hamming balls.

\begin{definition}
    For $0 \le p < 1-1/q$,
    a \emph{$(p, L)$-list decodable code}
    $C \subseteq \F_q^n$
    satisfies the property that
    every Hamming ball of radius $pn$
    contains at most $L$ codewords of $C$.
\end{definition}

Let $H_q(p)$ denote the $q$-ary entropy function.
The \emph{list decoding capacity} $1 - H_q(p)$
is a special rate threshold
for when
list decodability with
constant list size is achievable.
We recommend Chapter~7 of 
Guruswami, Rudra, and Sudan \cite{GRS}
for more details.

\begin{theorem}[List decoding capacity,
see {\cite[Section 7.4]{GRS}}]
   Let $0 \le p < 1- 1/q$
   and let $\eps > 0$ be sufficiently small.
   Then for every sufficiently large $n$,
   \begin{enumerate}[(1)]
   \item
   there exists a
   $(p, O(1/\eps))$-list decodable code
   with rate
   $1 - H_q(p) - \eps$, and
   \item 
   every $(p, L)$-list decodable code with rate $1 - H_q(p) + \eps$
   must have $L \ge q^{\Omega(\eps n)}$.
   \end{enumerate}
\end{theorem}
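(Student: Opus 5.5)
The two parts are independent, and for both I would use the standard estimate on Hamming ball volumes: for $0 \le p < 1-1/q$,
\[
q^{H_q(p)n - o(n)} \le \mathrm{Vol}_q(n, pn) \le q^{H_q(p)n},
\]
where the lower bound comes from the single largest term $\binom{n}{pn}(q-1)^{pn}$ together with Stirling's formula. Part (2) is a pure counting argument. Part (1) is a probabilistic existence argument using a \emph{fully random} (not linear) code, in the style of Elias; at this point the paper does not need linearity.

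For (1), I would fix the rate $R = 1 - H_q(p) - \eps$ and set $L = \lceil 1/\eps \rceil$. I would choose $q^{Rn}$ codewords independently and uniformly from $\F_q^n$, say $N$ of them. The code fails to be $(p,L)$-list decodable only if some center $y \in \F_q^n$ and some $(L+1)$-subset of the codeword indices all land in $B(y, pn)$. For a fixed center and subset, this event has probability at most $(\mathrm{Vol}_q(n,pn)/q^n)^{L+1} \le q^{-(1-H_q(p))n(L+1)}$. A union bound over the $q^n$ centers and the at most $N^{L+1}$ subsets bounds the failure probability by
\[
q^{n} \cdot q^{(L+1)Rn} \cdot q^{-(L+1)(1-H_q(p))n} = q^{n - (L+1)\eps n},
\]
which is $q^{-\Omega(\eps n)}$ and in particular $<1$ once $L+1 > 1/\eps$. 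Collisions among the chosen points are harmless: they only shrink the code by a negligible amount, or one can simply condition on distinctness, which changes the rate by $o(1)$. This yields a code of rate $1-H_q(p)-\eps$ with list size $O(1/\eps)$ for all large $n$.

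For (2), I would take any code $C$ with $|C| = q^{(1-H_q(p)+\eps)n}$ and pick a center $y$ uniformly at random. By linearity of expectation, the expected number of codewords within distance $pn$ of $y$ is
\[
|C| \cdot \mathrm{Vol}_q(n,pn)/q^n \ge q^{(1-H_q(p)+\eps)n} \cdot q^{H_q(p)n - o(n) - n} = q^{\eps n - o(n)}.
\]
Some center therefore achieves at least this count, so $L \ge q^{\eps n - o(n)} = q^{\Omega(\eps n)}$ for $n$ sufficiently large relative to $\eps$.

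The only step that needs genuine care is the volume lower bound used in (2). The $o(n)$ loss coming from Stirling's formula must be absorbed by $\eps n$, which is exactly why the statement requires $n$ to be sufficiently large. I would also check that the bound still holds for $p$ close to $1-1/q$: there the single largest term still has size $q^{H_q(p)n}/\mathrm{poly}(n)$, so the argument goes through. The rest is routine.
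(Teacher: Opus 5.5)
Your proposal is correct and is essentially the standard proof from Guruswami--Rudra--Sudan, Section~7.4, which the paper cites rather than reproving: a uniformly random code with a union bound over centers and $(L+1)$-subsets for (1), and averaging over a uniformly random center using $\mathrm{Vol}_q(n,pn) \ge q^{H_q(p)n - o(n)}$ for (2). One small point: sampling the codewords without replacement (equivalently, conditioning on distinctness) leaves the rate exactly unchanged and still gives the bound $(\mathrm{Vol}_q(n,pn)/q^n)^{L+1}$ for each fixed $(L+1)$-subset, so you do not need to mention any $o(1)$ rate loss.
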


We now show that random $\F_q$-linear codes achieve list decoding capacity---in particular, have optimal list size near this rate threshold---for $q = O(1)$, by directly showing list size is a special case 
of hashing max load.
We will define a random linear code (RLC) $C \subseteq \F_q^n$ of rate $R$
as the kernel of a uniformly random
parity check matrix 
of size $(1-R)n \times n$.
Up to an
exponentially small probability, this is equivalent to choosing the code
as a uniformly random
generator matrix $n \times Rn$ or as a uniformly random subspace of dimension $Rn$. 
First, show that bounding the list size of a linear code is equivalent to bounding the max load when linearly hashing a Hamming ball. This connection was briefly mentioned in \cite{dlmnrr}, and we formally prove it here.

\begin{lemma}
\label{lem:ls-hashing}
    Let $h:\F_q^n\to\F_q^{(1-R)n}$ be a linear map, $C\coloneqq h^{-1}(0)$, and $B\subset \F_q^n$ be the Hamming ball of radius $pn$ centered at $0$. $C$ is $(p,L)$-list decodable if and only if $M(B,h)\le L$. 
\end{lemma}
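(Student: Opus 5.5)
The plan is to unfold both sides into statements about cosets of $C$ and observe that they are literally the same quantity. The fibers of $h$ are exactly the cosets of its kernel: for any $b$ in the image of $h$, pick $x_0$ with $h(x_0)=b$, and then $h^{-1}(b) = x_0 + C$. For $b$ outside the image, $h^{-1}(b)=\emptyset$. It follows that
\[
M(B,h) = \max_{x_0 \in \F_q^n} \abs{(x_0 + C)\cap B}.
\]
Empty fibers contribute $0$ and do not affect the maximum, since the fiber containing $0$ already contains $0 \in B$.

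Next, rewrite the list-decoding condition in the same terms. By definition, $C$ is $(p,L)$-list decodable if and only if for every center $y\in\F_q^n$,
\[
\abs{\{c\in C : d(c,y)\le pn\}}\le L .
\]
Hamming distance is translation invariant, so $d(c,y)\le pn$ holds exactly when $c-y\in B$. Hence the map $c\mapsto c-y$ is a bijection from $\{c \in C: d(c,y)\le pn\}$ onto $(C-y)\cap B = (-y + C)\cap B$.

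Putting these together, the list size at center $y$ equals the load of the bin $h(-y)$. As $y$ ranges over $\F_q^n$, the shift $-y$ ranges over all of $\F_q^n$, and therefore over every coset of $C$. So
\[
\max_y \abs{\{c\in C : d(c,y)\le pn\}} = M(B,h),
\]
and $C$ is $(p,L)$-list decodable if and only if $M(B,h)\le L$. Both directions follow at once from this equality.

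There is no real obstacle here. The only points needing care are that linearity is used precisely to identify fibers with cosets of $C$, and that non-surjectivity of $h$ causes no issue because empty bins have load zero.
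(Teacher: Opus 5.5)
Your proof is correct and follows essentially the same route as the paper: you identify the fibers of $h$ with cosets of $C=\ker h$, where empty fibers are harmless, and you use translation invariance of Hamming distance to turn the list at center $y$ into $(C-y)\cap B$. Your explicit bijection $c\mapsto c-y$ spells out the step the paper calls ``relativity of motion.''
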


\begin{proof}
     The definition of $(p,L)$-list decodability of $C$
     can be written as 
     \[
     \max_{x \in \F_q^n} {\abs{(x + B) \cap C}}
     \le L.
     \]
     However, by relativity of motion,
     this is
     equivalent to 
    \begin{equation}
    \label{eq:list-size}
     \max_{x \in \F_q^n} {\abs{B \cap (x + C)}}
     \le L.
     \end{equation}
     Finally, 
     the left hand side
     is equal
     to the max load $M(B,h)$.
     We can tile $\F_q^n$
     with shifts of $C$,
     and each shift 
     corresponds to a different
     nonempty preimage of $h$.
     We can extend the
     quantifier to consider
     all preimages
     without changing
     the maximum value.
     \begin{align}
        \max_{x \in \F_q^n} {\abs{B \cap (x + C)}}
        &= \max_{y \in \F_q^n}
        {\abs{B \cap h^{-1}(y)}} \le L.\qedhere
     \end{align}

\end{proof}
Now that we know list size
is equivalent to max load,
we can plug the set $B$
into our tail bound.

\begin{theorem}[formal \cref{thm:intro-list-decoding}]
\label{thm:list-decoding-final}
   Consider all $0 \le p < 1- 1/q$
   and all $0 < \eps < 1-H_q(p)$.
    Let $C\le \F_q^n$
    be a random linear code 
    of rate $1-H_q(p)-\eps$,
    and let 
    \[
    L \coloneq 
    {16q}%
    \cdot \parens*{\frac{H_q(p)}{\eps}
    + 1}.
    \]
    If $n \ge 40/\eps \cdot \log_q (1/\eps)$,
    then $C$ is $(p, L)$-list decodable with probability $1-q^{-\Omega(\eps n)}$.
\end{theorem}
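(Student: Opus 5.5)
Via \cref{lem:ls-hashing}, list decodability of $C=h^{-1}(0)$ is the same as $M(B,h)\le L$, where $h\colon\F_q^n\to\F_q^{\ell}$ is a uniformly random linear map with $\ell\coloneq(1-R)n=(H_q(p)+\eps)n$ and $B$ is the Hamming ball of radius $pn$. We apply \cref{thm:remove-surject} with $u=n$, this $\ell$, $S=B$ and threshold $T=L$. Note $L\ge 16q\ge 8q$, so the hypothesis holds. Using the standard volume bound $\abs{B}\le q^{H_q(p)n}$, we get
\[
\Pr[M(B,h)\ge L]\le \frac{9}{L}\, q^{H_q(p)n-\ell}\exp\parens*{\frac{8q(\ell+2\ceil{\log_q L})}{L}}
=\frac{9}{L}\,q^{-\eps n}\exp\parens*{\frac{8q(\ell+2\ceil{\log_q L})}{L}}.
\]
Strictly, $M\le L$ follows from $\Pr[M\ge L+1]$, but applying the bound at $L+1$ changes nothing.

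The remaining task is to show that the exponential factor eats at most half of the $q^{-\eps n}$ savings. The main term is
\[
\frac{8q\ell}{L}=\frac{8q(H_q(p)+\eps)n}{16q(H_q(p)/\eps+1)}=\frac{\eps n}{2}.
\]
This is exactly why $L$ is chosen in that form. So the exponential contributes $e^{\eps n/2}\le q^{\eps n/2}\cdot q^{\eps n (1/(2\ln q)-1/2)}$. Rather than writing it that way, it is cleaner to state it directly: $e^{\eps n/2}=q^{\eps n/(2\ln q)}$, and since $\ln q\ge\ln 2$ this is at most $q^{0.73\eps n}$. That leaves $q^{-0.27\eps n}$. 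If that constant feels too tight, one can shrink the constant $8$ via the remark after \cref{cor:furst-constant-density}, but $\Omega(\eps n)$ only needs some positive constant.

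The lower-order term is $16q\ceil{\log_q L}/L\le \ceil{\log_q L}$, since $L\ge16q$, which gives a factor at most $e^{\ceil{\log_q L}}\le eL^{1/\ln q}\le eL^{1.45}$. Here we must use $L=O(q/\eps)$ together with the hypothesis $n\ge 40\log_q(1/\eps)/\eps$, so that this polynomial factor in $L$ is absorbed into $q^{-c\eps n}$ for a small constant $c$.

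This bookkeeping is the step I expect to be the main obstacle. A factor of $q$ lives inside $L$, so $\log_q L\le 1+\log_q(32/\eps)$. Hence $L^{O(1)}$ is $q^{O(1)}\cdot(1/\eps)^{O(1)}$, and the hypothesis on $n$ makes $(1/\eps)^{O(1)}\le q^{\eps n/20}$-ish. The constant $q^{O(1)}$ is harmless only when $\eps n$ is large compared with a constant. That is guaranteed, since $\eps n\ge 40\log_q(1/\eps)$, although it can be small when $\eps$ is close to $1$. In that edge case I would absorb constants into the $\Omega(\cdot)$ or note that the probability statement is trivial. Combining everything gives failure probability $q^{-\Omega(\eps n)}$.
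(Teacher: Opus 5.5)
Your proposal is correct and is essentially the paper's proof: you reduce to the max load $M(B,h)$ via \cref{lem:ls-hashing}, apply \cref{thm:remove-surject} with $T=L\ge 16q$, and use the choice of $L$ to make $8q\ell/L=\eps n/2$, so the exponential eats only a constant fraction of the $q^{-\eps n}$ savings. You then bound the lower-order term by $e^{\lceil\log_q L\rceil}$ and absorb this polynomial-in-$1/\eps$ factor using $n\ge 40\log_q(1/\eps)/\eps$, as the paper does.

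One simplification: the $q^{O(1)}$ factor you worry about comes only from the crude step $L^{1/\ln q}\le L^{1.45}$. Directly, $e^{\lceil\log_q L\rceil}\le e^{2}(32/\eps)^{1/\ln q}\le e^{7}\eps^{-1/\ln 2}$, because the factor $q$ inside $L$ contributes only $e^{\log_q q}=e$. That leaves just an absolute constant in front, which the paper absorbs into $q^{-\Omega(\eps n)}$ as loosely as your edge-case remark does.
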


\begin{proof}
   Let $B\subset \F_q^n$ be the Hamming ball of radius $p n$ centered about zero, so that $\abs{B}\le q^{H_q(p) n}$, and let $R\coloneq 1-H_q(p)-\eps$.
    A random linear code $C$ of rate $R$
    is defined as the kernel of a 
    uniformly random linear function 
    $h\colon \F_q^n \to \F_q^{(1-R)n}$.

    We can bound the probability that $C$ is \emph{not} $(p, L)$-list decodable.
    By \cref{lem:ls-hashing}, 
    this is equivalent to the max load 
    $M(B,h)$ being larger than $L$,
    and we can show this is rare using
    \cref{thm:remove-surject}.
    \begin{align*}
    \Pr[\text{$C$ is not $(p,L)$-list decodable}] & =  \Pr\left[M(B,h) > L\right] \\ 
    &\leq
  \frac{9\abs{B}}{q^{(1-R)n} L} \cdot \exp\parens*{\frac{8q((1-R)n + 2\ceil{\log_q L})}{L}}\\
    & \le \frac{9q^{H_q(p) n}}{q^{(H_q(p) + \eps)n }L}\cdot \exp\parens*{\frac{8q((H_q(p) + \eps)n + 2\ceil{\log_q L})}{L}} \\
    &  = \frac{9}{q^{\eps n}L}\cdot \exp\parens*{\frac{8q((H_q(p) + \eps)n + 2\ceil{\log_q L})}{L}}.\end{align*} 
    To control the exponential term,
    let us set list size to be
    \[L = 
    \frac{16q(H_q(p) + \eps)}
    {\eps}%
    = %
    {16q}%
    \cdot \parens*{\frac{H_q(p)}{\eps}
    + %
    {1}
    },\]
    which is at least $8q$
    as needed to apply
    \cref{thm:remove-surject}. %
    Then,
    \begin{align*}
    &\Pr\braks*{\text{$C$ is not $(p,L)$-list decodable}}\\
    &\le \frac{9 \eps}{16 q H_q(p) + 16q \eps}
    \exp\parens*{\frac{16 \eps \ceil{\log_q L}}{16H_q(p) + 16 \eps}}\cdot e^{\eps n / 2} \cdot q^{-\eps n}\\
    &\le \frac{9\eps}{16q\eps}
    \exp\parens{\ceil{\log_q L}}\cdot q^{-\eps n/4}\\
    &\le 
    \frac{144 e^2}{q \eps^2}
    \cdot q^{-\eps n / 4}.
    \intertext{If
    $n \ge 40/\eps \cdot \log_q (1/\eps)$, then
    $\eps^2 q^{\eps n / 4} \ge q^{\eps n / 5}$, so
    this is
    upper bounded by}
    &\le  \frac{144e^2}{q^{\eps n / 5}}
        = q^{-\Omega(\eps n)}.\qedhere
    \end{align*}
    
\end{proof}

\section*{Acknowledgements}
VMK thanks Ray Li for various early conversations regarding list decodability and Kakeya sets.
We thank Sidhanth Mohanty for many conversations during Summer 2026
and for hosting us at Northwestern.
We thank Venkatesan Guruswami for discussions and pointers to literature.

\paragraph{AI involvement.}
All ideas, analysis, and writing
here
are due to the authors.
No AI was used.

\bibliographystyle{alphaurl}
\bibliography{references}

\appendix

\section{Upper Bound via Probabilistic Method}
\label{apx:tight}
When $\delta=1$,
we can show that
\cref{cor:furst-constant-density}
is almost tight
(up to a $q$ factor
in the exponential term)
using random sets.
We begin with a generic upper bound,
and then set parameters appropriately
in both lower and upper bounds. Any density-$\alpha$ set is a $(1,n,r,\alpha q^r)$-Furstenberg set by pigeonhole over the shifts of each subspace,
    so we will hope to beat this using the probabilistic method.
    We do not attempt whatsoever to
    optimize the constants here.
\begin{lemma}
\label{lem:divergence}
For sufficiently large $n$,
let $r < n/2$ and
let $S\subseteq \F_q^n$
be drawn i.i.d.\ randomly
with probability $p < \alpha$.
For every $\eta < 1$,
suppose that
\begin{equation}
    \label{eq:divergence-bound}
    q^r D(\alpha \mathbin{\|} p)
    <
    {(n - 3r/2) \ln q - \ln{\sqrt{2}}
    - \ln \ln(q^{r(n-r)}/\eta)}.
\end{equation}
Then
$S$ is a $(1,n,r,\alpha q^r)$-Furstenberg set with probability at least $1-\eta$.
\end{lemma}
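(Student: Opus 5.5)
We argue by the first moment method and a union bound over directions. Fix a direction $V \le_r \F_q^n$. The $q^{n-r}$ cosets $x+V$ partition $\F_q^n$ and are disjoint, so the events
\[
\abs{(x+V)\cap S} < \alpha q^r
\]
for different cosets are independent. Each $\abs{(x+V)\cap S}$ is distributed as $\mathrm{Bin}(q^r,p)$. Direction $V$ fails to be rich exactly when every coset of $V$ is light, so
\[
\Pr[V \text{ fails}] = \parens*{1-\Pr[\mathrm{Bin}(q^r,p)\ge \alpha q^r]}^{q^{n-r}} \le \exp\parens*{-q^{n-r}\,\Pr[\mathrm{Bin}(q^r,p)\ge \alpha q^r]}.
\]

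The key step is a \emph{lower} bound on the binomial upper tail of the form
\[
\Pr[\mathrm{Bin}(N,p)\ge \alpha N] \ge \frac{1}{\sqrt{2N}}\exp\parens*{-N D(\alpha\mathbin{\|}p)}, \qquad N = q^r.
\]
This is the standard reverse-Chernoff estimate: keep only the single term $k=\ceil{\alpha N}$ and apply Stirling's bounds to $\binom{N}{k}$. This is the main technical point, because the exact constant $\sqrt 2$ in \eqref{eq:divergence-bound} has to be matched. Rounding $\alpha N$ up to an integer can shift the exponent slightly, and the factor $\sqrt{8k(N-k)/N}\le\sqrt{2N}$ from Stirling is exactly where $\ln\sqrt2$ and a $\tfrac{r}{2}\ln q$ loss enter. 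If the rounding causes trouble, I would assume $\alpha q^r$ is an integer, which is harmless because the threshold only depends on $\ceil{\alpha q^r}$. Plugging in $N=q^r$ gives
\[
\Pr[V\text{ fails}]\le \exp\parens*{-\frac{q^{n-r}}{\sqrt2\, q^{r/2}}\,e^{-q^rD(\alpha\|p)}} = \exp\parens*{-\exp\parens*{(n-3r/2)\ln q - \ln\sqrt2 - q^rD(\alpha\|p)}}.
\]

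Finally, the number of $r$-dimensional subspaces is the Gaussian binomial $\binom{n}{r}_q \le q^{r(n-r)}$; to be careful one can use $\binom{n}{r}_q\le C q^{r(n-r)}$ and absorb the constant using $n$ large, or simply note that the bound $q^{r(n-r)}$ holds up to such a constant. A union bound shows that $S$ fails to be $(1,n,r,\alpha q^r)$-Furstenberg with probability at most
\[
q^{r(n-r)}\exp\parens*{-\exp\parens*{(n-3r/2)\ln q-\ln\sqrt2-q^rD(\alpha\|p)}}.
\]
Hypothesis \eqref{eq:divergence-bound} says precisely that the inner exponent exceeds $\ln\ln(q^{r(n-r)}/\eta)$. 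The double exponential is therefore at least $\ln(q^{r(n-r)}/\eta)$, and the total failure probability is at most $\eta$. The assumption $r<n/2$ guarantees that $n-3r/2>0$, so the hypothesis can be satisfied nontrivially, and "sufficiently large $n$" is used for the Stirling and Gaussian-binomial constants.
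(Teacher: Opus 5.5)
Your proposal is correct and follows the paper's proof essentially step for step. Both use a union bound over $r$-dimensional directions, independence across the $q^{n-r}$ disjoint cosets, the reverse-Chernoff estimate $\Pr[\mathrm{Bin}(q^r,p)\ge \alpha q^r]\ge (2q^r)^{-1/2}e^{-q^rD(\alpha\mathbin{\|}p)}$ (the paper cites it from Ash and uses $\alpha(1-\alpha)\le 1/4$), and $1-x\le e^{-x}$.

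One caveat: $\binom{n}{r}_q$ is at least $q^{r(n-r)}$, not at most (it is below $4q^{r(n-r)}$). The paper also silently drops this constant, and since the hypothesis has no slack it cannot be absorbed by taking $n$ large, so the argument really gives failure probability below $4\eta$ (equivalently, $\eta/4$ inside the $\ln\ln$). This is harmless for the later application with $\eta=1/3$.
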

\begin{proof}
   The probability that $S$
   is \emph{not} Furstenberg
   can be estimated
    by union bounding
    over each dimension-$r$ subspace $V$
    and considering the probability
    of every shift of $V$
    being small in $S$.
    \begin{align}
    \label{eq:random-furstenberg}
        \Pr[S\text{ is not Furstenberg}]
        &\le q^{r(n-r)}(1-\Pr[\abs{V \cap S} \ge \alpha q^r])^{q^{n-r}}.
    \end{align}
    $\abs{V \cap S}$
    is binomially distributed with expectation $p q^r < \alpha q^r$,
    and its tail satisfies~\cite[(4.7.2)]{Ash}
    \begin{align}
    \Pr[\abs{V \cap S} \ge \alpha q^r]
    &\ge \frac{\exp\parens*{
    - q^r D( \alpha \mathbin{\|} p )
    }
    }{\sqrt{8 q^r \alpha (1-\alpha) }}\\
    &\ge
    \frac{1
    }{\sqrt{2 q^r}}
    \cdot
    \exp\parens*{
    - q^r D( \alpha \mathbin{\|} p )
    }.
    \end{align}
    Substitute into \eqref{eq:random-furstenberg}
    and upper bound using
    the exponential inequality.
    \begin{align}
        \Pr[\text{$S$ is not Furstenberg}]
        &\le
        q^{r(n-r)}
        \exp\parens*{
        -\frac{q^{n-r}}
        {\sqrt{2 q^r}}
        \exp(-q^r D(\alpha \mathbin{\|} p))
        }.
    \end{align}
    If $p$
        satisfies
        \eqref{eq:divergence-bound},
        then
    this entire expression is strictly less than $\eta$.
\end{proof}

\begin{proposition}[Tightness]
    Let $\alpha\in (0,1)$, $q,r\in \Z$,
     $r < n/2$,
     and $n$ sufficiently large.
    \begin{enumerate}[(1)]
        \item If $n = \Omega(\alpha q^{r-1})$, the smallest $(1,n, r, \alpha q^r)$-Furstenberg set $S$ satisfies
        \[
        \frac{\alpha}{2} \exp\parens*{-\frac{4n}{\alpha q^{r-1}}}
        \le
        \frac{\abs{S}}{q^n} \le 3\alpha\exp\parens*{-\frac{n-O(r)}{\alpha q^r} \cdot \ln q}.\]

        \item If $n= o(\alpha q^{r-1})$, the smallest $(1,n,r,\alpha q^r)$-Furstenberg set $S$ satisfies
        \[
        \alpha - 3\sqrt{\frac{\alpha n}{q^{r-1}}} \le
        \frac{\abs{S}}{q^n} \le\alpha - \sqrt{\frac{2 \alpha(n - O(r)) \ln q}{q^r}} + o(1).\]
    \end{enumerate}
\end{proposition}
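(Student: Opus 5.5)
Each of the two items has a lower bound and an upper bound. I will get the lower bounds from \cref{cor:furst-constant-density} with $\delta=1$, choosing $C$ differently in each case. I will get the upper bounds from \cref{lem:divergence}, taking $\eta$ to be a constant such as $1/2$ and choosing $p$ below $\alpha$ just small enough that \eqref{eq:divergence-bound} holds. A random set then exists with $|S|\le 2pq^n$ (say), since $|S|$ concentrates around $pq^n$ and the probability of failing Furstenberg is at most $1/2$. Write $N := (n-3r/2)\ln q - \ln\sqrt2 - \ln\ln(2q^{r(n-r)})$. This is $(n-O(r))\ln q - O(\ln n)$, and since $r<n/2$ the $\ln\ln$ term is absorbed into the $O(r)\ln q$ slack for $n$ large.

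\textbf{Regime (1)}, $n=\Omega(\alpha q^{r-1})$. For the lower bound I take $C=2$ in \cref{cor:furst-constant-density}. This gives $\tfrac{\alpha}{2}\exp(-4n/(\alpha q^{r-1}))$ directly, provided $2<\alpha q^{r-1}/2$. If that condition fails, the claimed lower bound is at most $\tfrac{\alpha}{2}e^{-4}$, and I would still need a separate trivial argument; this edge case has to be checked.

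For the upper bound I set $p=\alpha e^{-x}$ and use $D(\alpha\|p)\ge \alpha\ln(\alpha/p)-\alpha+p$. In the direction I actually need, I use instead $D(\alpha\|p)\le \alpha\ln(\alpha/p)+(1-\alpha)\ln\frac{1-\alpha}{1-p}\le \alpha x$. To satisfy \eqref{eq:divergence-bound} it then suffices that $q^r\alpha x<N$, so I choose $x=(N-1)/(\alpha q^r)$. This yields $|S|/q^n\lesssim 2\alpha\exp(-(n-O(r))\ln q/(\alpha q^r))$, and constants like $3$ absorb the concentration loss.

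\textbf{Regime (2)}, $n=o(\alpha q^{r-1})$. For the lower bound I optimize $C$ in \cref{cor:furst-constant-density}. With $y:=2n/(\alpha q^{r-1})=o(1)$, the bound reads $\alpha(1-1/C)e^{-Cy}\ge \alpha(1-1/C-Cy)$. I choose $C=1/\sqrt y$, which satisfies $1<C<\alpha q^{r-1}/2$ for large parameters. This gives $\alpha(1-2\sqrt y)=\alpha-2\sqrt{2\alpha n/q^{r-1}}\ge \alpha-3\sqrt{\alpha n/q^{r-1}}$.

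For the upper bound I take $p=\alpha-\Delta$ with $\Delta$ small. Using $D(\alpha\|p)\le \Delta^2/(p(1-p))$ (the $\chi^2$ bound), it suffices that $q^r\Delta^2/(p(1-p))<N$. So I take $\Delta\approx\sqrt{\alpha N/q^r}$, adjusting for the factor $(1-p)$ and the second-order terms. The $(1-p)$ factor and the fluctuation of $|S|$ around $pq^n$, which is $O(q^{n/2})$, go into the $+o(1)$ term. Since $N=(n-O(r))\ln q$, this gives $\alpha-\sqrt{2\alpha(n-O(r))\ln q/q^r}$ once the sharper local bound $D\approx \Delta^2/(2p(1-p))$ is used. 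The factor $2$ inside the square root requires this second-order Taylor expansion of $D$ rather than the $\chi^2$ bound; the leftover $p$ versus $\alpha$ and $(1-p)$ discrepancies are exactly what the $o(1)$ absorbs.

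The main obstacle is this careful Taylor estimate of $D(\alpha\|\alpha-\Delta)$. It must be accurate enough to get the constant $\sqrt2$ exactly while the error terms remain $o(1)$. I will need to track that $\Delta/\alpha=o(1)$, which holds because $n=o(\alpha q^{r-1})$. Everything else is plugging constants into the two earlier results.
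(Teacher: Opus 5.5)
Your plan is the paper's: lower bounds from \cref{cor:furst-constant-density} with $C=2$ in regime (1) and $C\asymp\sqrt{\alpha q^{r-1}/n}$ in regime (2), and upper bounds from \cref{lem:divergence} applied to a random set, plus a Markov/Chernoff bound on its size. (Your $C=\sqrt{\alpha q^{r-1}/(2n)}$ gives the constant $2\sqrt2\le 3$; the paper's $C=\sqrt{\alpha q^{r-1}/n}$ gives exactly $3$.) The one substantive difference is how \eqref{eq:divergence-bound} is verified, and there your version is the correct one. That hypothesis needs an \emph{upper} bound on $D(\alpha\,\|\,p)$, which is what you use: $D(\alpha\,\|\,p)\le\alpha\ln(\alpha/p)$ for $p<\alpha$. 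The paper's write-up instead invokes the lower bounds $D\ge\alpha\ln(\alpha/p)-\ln\frac{1}{1-\alpha}$ and $D\ge(\alpha-p)^2/(2\alpha)$, which cannot certify it. Indeed, for the paper's choice $p=\alpha(1-\alpha)^{1/\alpha}e^{-N/(\alpha q^r)}$ one gets $q^rD(\alpha\,\|\,p)=N-q^r\alpha\ln(1-\alpha)-q^r(1-\alpha)\ln(1-p)>N$, so the stated hypothesis is not met. Your $p=\alpha e^{-(N-1)/(\alpha q^r)}$ repairs this and costs only constants.

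There are three caveats.

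(i) Your ``main obstacle'' in regime (2) is not one. Since $\partial_s D(\alpha\,\|\,s)=(s-\alpha)/(s(1-s))$, you have exactly
\[
D(\alpha\,\|\,p)=\int_p^\alpha\frac{\alpha-s}{s(1-s)}\,ds\le\frac{(\alpha-p)^2}{2\min\{p(1-p),\,\alpha(1-\alpha)\}},
\]
which gives the factor $2$ with no Taylor error terms. What it yields, though, is a deficit $(1-o(1))\sqrt{2\alpha(1-\alpha)N/q^r}$. The factor $1-\alpha$ is not lower order relative to the deficit. It disappears into the stated $+o(1)$ only because in regime (2) the whole deficit $\sqrt{2\alpha n\ln q/q^r}=\alpha\sqrt{2\cdot\frac{n}{\alpha q^{r-1}}\cdot\frac{\ln q}{q}}$ is itself $o(\alpha)$. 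That upper bound is therefore nearly trivial as stated, and you should say so rather than suggest the $\sqrt2$ is being matched.

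(ii) The term $\ln\ln(2q^{r(n-r)})\approx\ln(rn\ln q)$ is not absorbed into $O(r)\ln q$ ``because $r<n/2$''. That requires $\log_q n=O(r)$; in general the exponent is $n-O(r)-O(\log_q n)$. The paper hides the same term, so this is an imprecision in the statement, but your justification is wrong.

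(iii) In regime (1) the paper also applies $C=2$ without checking $C<\alpha q^{r-1}/2$. For $2<\alpha q^{r-1}\le 4$, a smaller $C\in(1,\alpha q^{r-1}/2)$ still yields the bound once $n$ is large. For $\alpha q^{r-1}\le 2$ (e.g.\ every instance with $r=1$) the corollary is vacuous. The pigeonhole bound $|S|\ge\alpha q^r$ is too weak there for large $q$, so there is no ``trivial argument''. You must either restrict to the range where the corollary applies, as the paper implicitly does, or import a different Furstenberg bound.
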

\begin{proof}
    Let us begin by calculating
  a Furstenberg upper bound
    using the probabilistic method,
    and then set parameters appropriately
    for lower and upper bounds
    in both regimes.

    \paragraph{Large $n$.}
    When $n = \Omega(\alpha q^{r-1})$,
    the exponential term of
    \cref{cor:furst-constant-density}
    dominates.
    Set $C = 2$
    to get
    \[
    \frac{\abs{S}}{q^n}
    \ge \frac{\alpha}{2} \cdot \exp\parens*{-\frac{4n}{\alpha q^{r-1}}}.
    \]
    For the upper bound,
    choose $p$
    to be
    \begin{align}
    p
    &= \alpha (1-\alpha)^{1/\alpha}
    \exp\parens*{
    - \frac{(n - 3r/2) \ln q - \ln{\sqrt{2}}
    - \ln \ln (3q^{r(n-r)})}
    {\alpha q^r}
    }\\
    &\le \frac{\alpha}{2} \cdot \exp\parens*{
    -\frac{n - O(r)}{\alpha q^{r}}
    \cdot \ln q
    }.
    \end{align}
    This choice of $p$
    satisfies \cref{lem:divergence}
    for $\eta = 1/3$,
    using the fact that
    \begin{align}
    D(\alpha \mathbin{\|} p)
    &= {\alpha} \ln{\frac{\alpha}{p}}
    + (1-\alpha) \ln{\frac{1-\alpha}{1-p}}
    \ge
    \alpha \ln{\frac{\alpha}{p}}
    - \ln\frac{1}{1-\alpha}.
    \end{align}
    So, $S$ chosen i.i.d.\ with
    probability $p$ will be Furstenberg with probability at least $2/3$.
    Next,
    by Markov's inequality,
    $\abs{S} \le 3p$
    with probability at least $2/3$.
    Hence,
    $\abs{S} \le 3p$
    and $S$ is Furstenberg
    with probability at least $1/3$,
    so there exists
    a
    $(1,n,r,\alpha q^r)$-Furstenberg
    set with density at most $3p$.

    \paragraph{Small $n$.}
    When $n = o(\alpha q^{r-1})$,
apply the exponential inequality
to
\cref{cor:furst-constant-density}
and balance the first two terms by setting $C = \sqrt{\alpha q^{r-1} / n}$
to get
\begin{align}
    \frac{\abs{S}}{q^n}
    &\ge
    \alpha - 3\sqrt{\frac{\alpha n}{\delta q^{r-1}}}.
\end{align}
For the upper bound,
use the approximation
\[
D(\alpha \mathbin{\|} p)
\ge \frac{(\alpha - p)^2}{2\alpha}.
\]
Then,
we can satisfy \cref{lem:divergence}
as long as $p$ satisfies
\begin{align}
    \alpha - p
    &\le
    \sqrt{\frac{(2\alpha)((n-3r/2)\ln q
    - \ln{\sqrt{2}}
    -\ln \ln(3q^{r(n-r)}))}{q^r}}.
\end{align}
Hence, it suffices to choose
\[
p = \alpha - \sqrt{\frac{2\alpha (n - O(r)) \ln q}{q^r}} + o(1),
\]
and $S$ chosen i.i.d.\ with probability $p$
will be Furstenberg with probability
at least $2/3$.
In addition,
via the Chernoff bound,
the density of $S$
will be at most $p + O(\sqrt{p/q^n})$
with probability at least $2/3$.
So, $S$ will be Furstenberg and also
have size
\[\abs{S} \le \alpha - \sqrt{\frac{2\alpha (n - O(r)) \ln q}{q^r}} + O\parens*{\sqrt{\frac{\alpha}{q^n}}} + o(1).\qedhere\]
\end{proof}

\section{Tail Bound of Alon--Dietzfelbinger--Miltersen--Petrank--Tardos}
\label{sec:admpt}
\begin{lemma}
\label{lem:square-decay-tail}
Let $\alpha_i$ for $1\le i\le k$ be random variables and let $0 < \alpha_0 < 1$ be a constant.. Suppose that for $0\le i < k$ we have $0\le \alpha_{i+1}\le \alpha_i$ and conditioned on any set of values for $\alpha_1,\dots, \alpha_i$, we have $\E[\alpha_{i+1} \mid \alpha_1,\dots, \alpha_i]\le \alpha_i^2$. Then
for any threshold $0 < t < 1$ we have
\[\Pr[\alpha_k\ge t] \le \alpha_0^{k - \log_2 \log_2(1/t) + \log_2\log_2(1/\alpha_0)}.\]
\end{lemma}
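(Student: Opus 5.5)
We argue by induction on $k$, strengthening the statement so that $\alpha_0$ can be any value in $(0,1)$. This lets us condition on $\alpha_1$ and restart the process from it. Write $L \coloneq \log_2\log_2(1/t)$ and $\ell(a) \coloneq \log_2\log_2(1/a)$, and let
\[
f_k(a) \coloneq a^{\,k - L + \ell(a)} \qquad \text{for } a \in (0,1).
\]
The exponent is chosen so that $a^{2^{L-\ell(a)}} = t$. In words, $L - \ell(a)$ is the number of exact squarings needed to go from $a$ down to $t$, and each extra step beyond that should cost a factor $a$. Two cases are immediate. If $k - L + \ell(\alpha_0) \le 0$, then $f_k(\alpha_0)\ge 1$ and there is nothing to prove. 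If $\alpha_0 < t$, then monotonicity gives $\alpha_k \le \alpha_0 < t$, so the probability is $0$. The base case $k=0$ is covered by these two observations, since $\alpha_0\ge t$ forces $\ell(\alpha_0)\le L$.

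For the inductive step, the process $\alpha_1,\dots,\alpha_k$ conditioned on $\alpha_1$ satisfies the same hypotheses with $k-1$ steps and starting value $\alpha_1$. Applying the induction hypothesis conditionally gives
\[
\Pr[\alpha_k \ge t] \le \E[g(\alpha_1)], \qquad g(x) \coloneq \min\bigl(1, f_{k-1}(x)\bigr),
\]
with $g(x)=0$ for $x<t$. The only information about $\alpha_1$ is that $0\le\alpha_1\le\alpha_0$ and $\E[\alpha_1]\le\alpha_0^2$. So it suffices to find a majorant of the form $g(x) \le c\,x$ on $[0,\alpha_0]$ with $c\,\alpha_0^2 \le f_k(\alpha_0)$.

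The natural choice is $c = f_k(\alpha_0)/\alpha_0^2$. At the point $x=\alpha_0^2$ we have $\ell(\alpha_0^2)=\ell(\alpha_0)+1$, so
\[
f_{k-1}(\alpha_0^2) = \alpha_0^{2(k-L+\ell(\alpha_0))} \le f_k(\alpha_0),
\]
because $\alpha_0<1$ and the exponent is nonnegative in the nontrivial case. Hence the majorant is tight at $x=\alpha_0^2$, which confirms that the exponent in the lemma is calibrated correctly.

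The main obstacle is verifying $g(x)\le c x$ on all of $[t,\alpha_0]$, not just at $x=\alpha_0^2$. Substitute $y = \log_2(1/x)$. Then $\log_2(g(x)/x)$ is, up to sign, $y\bigl(k-2-L+\log_2 y\bigr)$, and this is monotone in $y$ only once $\log_2 y$ is large enough. On $[\alpha_0^2,\alpha_0]$ the linear bound may fail, and near $x=t$ the cap $g\le 1$ is the binding constraint. I would first try to handle the whole range using the cap $1$ together with the convexity of $x\mapsto x^{\,\text{const}+\log_2\log_2(1/x)}$. If that does not suffice, I would replace the linear majorant by $g(x)\le \min(1,cx)$ and use $\Pr[\alpha_1 > s]\le \alpha_0^2/s$ (Markov) to treat the region $x\in[\alpha_0^2,\alpha_0]$ separately. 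This might lose a constant factor, which would then have to be absorbed by a slightly more careful bookkeeping of $\ell(\alpha_0)$. If neither works, a fallback is a counting argument: call step $i$ \emph{good} if $\alpha_{i+1}\le \alpha_i^{2}/\alpha_0$, note that $\alpha_k\ge t$ forces at most about $L-\ell(\alpha_0)$ good steps, and bound each bad step by probability $\alpha_0$ via conditional Markov.
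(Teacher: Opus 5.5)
The paper quotes this lemma from Alon et al.\ without proof, so I am judging your argument on its own. Your main route is the right one: induction on $k$, conditioning on $\alpha_1$, and the linear majorant $g(x)\le cx$ with $c=f_k(\alpha_0)/\alpha_0^2=f_{k-1}(\alpha_0)/\alpha_0$. But the proposal stops at the only step with real content. You check the majorant at $x=\alpha_0^2$ and never prove it on $[t,\alpha_0]$.

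Your sketch of where the inequality is delicate is also partly off. Write $s=k-L+\ell(\alpha_0)$. The cap only matters where $f_{k-1}(x)>1$, i.e.\ for $x>\alpha_0^{2^{1-s}}$, next to $\alpha_0$. On the rest of $(0,\alpha_0]$ the uncapped inequality $f_{k-1}(x)\le cx$ already holds, so for $k\ge 2$ nothing special happens near $t$. Next to $\alpha_0$ the cap is essential: for $s<2-1/\ln 2$ the uncapped $f_{k-1}$ lies strictly above its chord $cx$ just below $\alpha_0$. So no convexity argument for the uncapped $f_{k-1}$ can work.

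Your fallbacks would not recover the exact bound either.
\begin{itemize}
\item In the counting argument, the criterion $\alpha_{i+1}\le\alpha_i^2/\alpha_0$ is automatic at $i=0$. Writing $\beta_i=\alpha_i/\alpha_0$, a good step only gives $\beta_{i+1}\le\beta_i^2$ starting from $\beta_0=1$. So good steps force no decay, and $\alpha_k\ge t$ does not limit their number. A union bound over the positions of bad steps would also add binomial factors that the statement has no room for.
\item In the Markov variant, a constant-factor loss at each inductive step compounds over $k$ steps, and the statement has no slack to absorb it.
\end{itemize}

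The missing inequality is elementary. Assume $s>0$ (otherwise the bound is at least $1$) and write $x=\alpha_0^u$ with $u\ge 1$. Then $\ell(x)=\ell(\alpha_0)+\log_2 u$, so $f_{k-1}(x)=\alpha_0^{u(s-1+\log_2 u)}$ and $cx=\alpha_0^{u+s-2}$. Since $\alpha_0<1$, $\min(1,f_{k-1}(x))\le cx$ holds if either of the following holds:
\begin{itemize}
\item $u\le 2-s$, since then $cx\ge 1$;
\item $u\log_2 u\ge (u-1)(2-s)$.
\end{itemize}
The second condition is trivial when $u=1$ or $s\ge 2$. Otherwise, use that $u\mapsto u\log_2 u/(u-1)$ is increasing on $(1,\infty)$ (its derivative has the sign of $u-1-\ln u$) and tends to $1/\ln 2>1$ as $u\to 1^+$. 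If $2-s\le 1$, this already suffices. If $w=2-s\in(1,2)$, then for $u>w$ we get $u\log_2 u/(u-1)\ge w\log_2 w/(w-1)\ge w$, because $\log_2 w\ge w-1$ on $[1,2]$ by concavity.

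Hence $g(x)\le cx$ on all of $[0,\alpha_0]$. Then $\Pr[\alpha_k\ge t]\le\E[g(\alpha_1)]\le c\,\alpha_0^2=f_k(\alpha_0)$, which completes the induction.
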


From this, a weak version of coupon collection can be shown.

\begin{lemma}
\label{lem:covering}

    \begin{enumerate}[(a)]
    \item Let $S\subset \F_q^u$ of density $\frac{|S|}{q^u}\coloneq 1-\alpha < 1$ and $0\le \ell \le u$ is an integer, then for uniform random surjective linear map $h: \F_q^u\to\F_q^\ell$, \[\Pr[h(S)\neq \F_q^\ell]\le  \alpha^{u - \ell - \log_2 \ell - \log_2 \log_2 q + \log_2\log_2(1/\alpha)}.\]

    \item There exists constant $c > 0$ such that the following holds. Let $S$ be a subset of a vector space $V$ of arbitrary dimension over $\F_q$. Let $\ell > 0$ be an integer. If $|S|\ge q^{15}\ell^{\log_2 q}q^\ell $, then for a uniform random linear $h: V\to \F_q^\ell$, \[\Pr[h(S) = \F_q^\ell] \ge \frac{1}2.\]
    \end{enumerate}
\end{lemma}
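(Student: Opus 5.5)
The plan is to prove (a) by peeling off one dimension at a time and feeding the resulting sequence of densities into \cref{lem:square-decay-tail}. Part (b) is then reduced to (a) by a first random compression that makes the image of $S$ nearly full.

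\textbf{Part (a).} First realize a uniformly random surjective linear map $h\colon\F_q^u\to\F_q^\ell$ as a composition $h=g_k\circ\cdots\circ g_1$ with $k=u-\ell$. Here each $g_{i+1}\colon \F_q^{u-i}\to\F_q^{u-i-1}$ is an independent uniformly random surjection, and a composition of independent uniform surjections is a uniform surjection. Let $A_i$ be the complement of the image of $S$ after $i$ steps, and let $\alpha_i$ be its density, so $\alpha_0=\alpha$.

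A point $y$ is missed after step $i+1$ exactly when the whole coset $g_{i+1}^{-1}(y)$ lies in $A_i$. This coset is a line $x+\mathrm{span}(v)$, and $v$ is a uniformly random nonzero vector. Two consequences follow.
\begin{itemize}
\item Monotonicity: a coset inside $A_i$ has all $q$ of its points in $A_i$, so $\alpha_{i+1}\le\alpha_i$.
\item Squaring in expectation: the number of cosets inside $A_i$ is at most $\frac1q\sum_{x\in A_i}\mathbf 1[x+v\in A_i]$. Its expectation over $v$ is at most $\frac1q|A_i|(|A_i|-1)/(q^{m}-1)\le \alpha_i^2q^{m-1}$, where $m=u-i$. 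Dividing by $q^{m-1}$ gives $\E[\alpha_{i+1}\mid\alpha_1,\dots,\alpha_i]\le\alpha_i^2$.
\end{itemize}
Finally, $h(S)\ne\F_q^\ell$ is the event $\alpha_k\ge q^{-\ell}$. Applying \cref{lem:square-decay-tail} with $t=q^{-\ell}$, and using $\log_2\log_2 q^\ell=\log_2\ell+\log_2\log_2 q$, gives exactly the bound stated in (a).

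\textbf{Part (b).} Fix $S_0\subseteq S$ with $|S_0|=N\coloneq q^{15}\ell^{\log_2 q}q^\ell$, and an integer $j\ge 2$ (chosen below). Set $m\coloneq\lfloor\log_q N\rfloor-j$, so that $q^m\le N/q^j$ and $m-\ell\ge 14-j+\log_2\ell$; the last inequality uses $\log_q\ell^{\log_2 q}=\log_2\ell$.

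Write the uniformly random linear map as $h=g\circ f$. Here $f\colon V\to\F_q^m$ is uniformly random linear, and $g\colon\F_q^m\to\F_q^\ell$ is an independent uniform surjection. Since $g$ is surjective, $g\circ f$ is a uniform linear map, so this decomposition is legitimate.

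\begin{itemize}
\item \emph{Step 1 (the first map makes $f(S_0)$ nearly full).} By pairwise universality, the expected number of colliding pairs is $\E[\mathrm{Col}]\le N^2/q^m$. By Markov, $\mathrm{Col}\le 4N^2/q^m$ with probability at least $3/4$. On that event, Cauchy--Schwarz gives $|f(S_0)|\ge N^2/(N+2\,\mathrm{Col})$. Comparing with $q^m$ via $N\ge q^jq^m$, the missing density of $f(S_0)$ in $\F_q^m$ is $\alpha\le c_0q^{-j}$ for an absolute constant $c_0$.
\item \emph{Step 2 (apply (a) to the second map).} Apply (a) to $f(S_0)\subseteq\F_q^m$ and $g$. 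Since $\log_2\log_2(1/\alpha)\ge\log_2(j\log_2 q-\log_2 c_0)$, the exponent in (a) is at least
\[
14-j+\log_2\!\bigl(j\log_2q-\log_2c_0\bigr)-\log_2\log_2 q .
\]
For large $q$ this is about $14-j+\log_2 j$, uniformly in $q$.
\item \emph{Step 3 (the constant).} Fix $j$ as an absolute constant large enough that $c_0q^{-j}\le 1/16$ for every $q\ge 2$; for example, $j$ with $2^j\ge 16c_0$. Then $\log_2(j\log_2q-\log_2c_0)-\log_2\log_2q\ge 0$ for every $q\ge2$, because the ratio $(j\log_2q-\log_2c_0)/\log_2q$ is at least $j-\log_2c_0\ge 4$. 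Since $j$ is fixed and the constant $15$ leaves room, the exponent stays at least $1$ for all $q$. The failure probability is therefore at most $1/4+1/16<1/2$.
\end{itemize}

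The main obstacle is Step 3. The $-\log_2\log_2 q$ loss in (a) must be cancelled uniformly in $q$. Simply making $f(S_0)$ have constant density fails for large $q$; the fix is to sacrifice a constant $j$ extra dimensions so that the missing density drops to $q^{-\Theta(1)}$. Then $\log_2\log_2(1/\alpha)$ grows like $\log_2\log_2 q$ and cancels the loss. I expect that only the bookkeeping of $c_0$ and $j$ against the constant $15$ needs care.
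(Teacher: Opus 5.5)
Your part (a) is correct and is essentially the paper's argument. Quotienting by one uniformly random line at a time is the same as the paper's sumsets $A_i+\mathrm{span}\{v_{i+1}\}$, viewed in the quotient. The squaring estimate and the application of \cref{lem:square-decay-tail} at threshold $q^{-\ell}$ also coincide.

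The gap is in Step 1 of part (b), which is the step your whole strategy rests on. Markov on $\mathrm{Col}$ at threshold $4N^2/q^m$ is already about eight times its mean $\binom N2 q^{-m}$. With $r\coloneq N/q^m\ge q^j$, it only yields
\[
|f(S_0)| \ge \frac{N^2}{N+8N^2/q^m} = q^m\cdot\frac{r}{1+8r}.
\]
So the image density is only about $1/8$, and the missing density can be about $7/8$: a constant, not $c_0q^{-j}$. No sharper comparison with $q^m$ can rescue this. Write $n_y\coloneq|f^{-1}(y)\cap S_0|$; an image with $n_y\approx 8r$ on one eighth of $\F_q^m$ is fully consistent with the event $\mathrm{Col}\le 4N^2/q^m$. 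With constant $\alpha$, the exponent in Step 2 is about $11.6-j-\log_2\log_2 q$, which is negative for large $q$. That is precisely the failure the extra $j$ dimensions were meant to avoid, so Step 3 does not go through as written.

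The conclusion of Step 1 is nonetheless true, and the repair is short. Apply Markov not to $\mathrm{Col}$ but to the nonnegative centered quantity $\sum_y(n_y-N/q^m)^2=\sum_y n_y^2-N^2/q^m$. By pairwise universality its expectation is $N(1-q^{-m})<N$, so with probability at least $3/4$ it is at most $4N$. Cauchy--Schwarz then gives $|f(S_0)|\ge N^2/(N^2/q^m+4N)=q^m\cdot N/(N+4q^m)$, i.e.\ $\alpha\le 4q^m/N\le 4q^{-j}$. Take $c_0=4$ and $j=6$. Then $m-\ell\ge 8+\log_2\ell$ and $\log_2\log_2(1/\alpha)\ge 2+\log_2\log_2 q$, so the exponent from (a) is at least $10$ and the total failure probability is below $1/4+1/16$.

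Once repaired, your route genuinely differs from the paper's. The paper compresses into $\F_q^t$ with $q^t\approx 4|S|$ and settles for constant missing density $\alpha\le 7/8$. It then pays for the $-\log_2\log_2 q$ in (a) through the size requirement. So its argument actually establishes (b) only for $|S|\ge q^{15+\log_2\log_2 q}\ell^{\log_2 q}q^\ell$, which is the form later used in \cref{thm:admpt-t-tail}. Spending $O(1)$ extra dimensions to push $\alpha$ down to $q^{-\Theta(1)}$ is exactly what reaches the threshold $q^{15}\ell^{\log_2 q}q^\ell$ as stated. Your Cauchy--Schwarz bookkeeping also sidesteps the paper's collision estimate, which needs $q^t\ge 4|S|$ even though $t$ is defined with a floor.
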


We note that when $q=2$, item (b) recovers the coupon collector prediction, but for larger $q$, the set $S$ will be asymptotically larger (even for $q=3$). This is the cause behind the max load degradation for larger $q$.

\begin{proof}
\begin{enumerate}[(a)]
    \item Denote $k\coloneqq u-\ell$. We pick $h$ by picking $k$ vectors $v_1,\dots, v_k$ uniformly at random from $\F_q^u$, and choosing $h$ to be a random surjective linear map conditioned on the kernel of $h$ containing $v_1,\dots, v_k$. Denote $A_0 = A$,  $A_{i+1} = A_i + \text{span}\{v_i\} $, $\conj{A_i} = \F_q^u\setminus A_i$, and $\alpha_i\coloneq |\conj{A_i}|/q^u$. Observe that \[\conj{A_{i+1}} = \bigcap_{\gamma\in\text{span}\{v_i\}}(\conj{A_i} + \gamma)\subset \conj{A_i}\cap (\conj{A_i}+v_{i+1}).\] Hence, conditioned on a choice of $v_1,\dots, v_i$, we can bound \[\E[\alpha_{i+1}] = \Pr_{x\sim \F_q^u} [x\in \conj{A_{i+1}}] \le  \Pr[x\in  \conj{A_i}\cap (\conj{A_i} + v_{i+1})]= \frac{1}{q^u}\sum_{x\in \conj{A_i}} \Pr[x-v\in \conj{A_i}] = \alpha_i^2.\]
    Therefore, the $\alpha_i$ satisfy the assumption of \cref{lem:square-decay-tail}, and thus \[\Pr[\alpha_k \ge q^{-\ell}]\le \alpha^{k-\log_2\log_2(q^\ell)+ \log_2\log_2(1/\alpha)} = \alpha^{u - \ell - \log_2 \ell - \log_2 \log_2 q + \log_2\log_2(1/\alpha)}.\] It suffices to show that $\alpha_k < q^{-\ell}$ implies $h(S) = \F_q^\ell$. Now define $B = h(A)$. Since $h$ is surjective, we have $|h^{-1}(B)| = q^{u-\ell} |B|.$ Now note that $h(A_k) = h(A + \text{Span}(v_1,\dots, v_k)) =  h(A + \ker h) = B$. Therefore $A_k\subset h^{-1}(B)$, and so $|A_k|\le q^{u-\ell}|B|$. However, the bound on $\alpha_k$ implies $|A_k| > q^u - q^{u-\ell}$. Hence, we must have $|B| > q^{\ell} - 1$, and thus $|B| = q^\ell$. This implies $h(A) = \F_q^\ell$ as desired.
    \item Let $W = \F_q^{t}$, where $t = \lfloor \log_q(4m)\rfloor$. We factor $h$ into $h_1\circ h_0$, where $h_0: V\to W$ is a random linear map, and $h_1: W\to \F_q^\ell$ is a random surjective linear map. For any fixed distinct $s,t\in S$, $h_0$ will collide the two with probability $1/|W|$. Therefore, the expected number of collisions caused by $h_0$ is at most \[\binom{m}2\cdot \frac{1}{|W|}\le \frac{m^2}2 \cdot \frac{1}{4m} = \frac{m}8.\] Hence, by Markov's inequality, the number of collisions exceeds $m/2$ with probability at most $1/4$. With that many collisions, it will be that $|h_0(S)|\le m/2.$ Condition on the event that this doesn't happen, and $S_0\coloneqq h_0(S)$ is of size $\ge m/2$. Denote $\alpha = 1-|S_0|/|W|$. We observe \[\alpha \le 1-\frac{m/2}{4m} = \frac{7}8.\] Hence, by part (a), it follows \begin{align*}\Pr[h_1(S_0) \neq  \F_q^\ell] & \le \left(\frac{7}8\right)^{t - \ell - \log_2 \ell - \log_2\log_2 q + \log_2\log_2(8/7)} \\ & \le \exp\left(-\frac{1}8(\log(4m) - \ell - \log_2 \ell - \log_2\log_2 q + \log_2\log_2(8/7))\right).\end{align*} Let $C_1 \coloneqq \log_2\log_2(8/7)$ and $C_2\coloneqq 8\ln 4$. Hence, if we pick $m$ such that \[\log_q(4m) - \ell - \log_2 \ell - \log_2\log_2 q + C_1 = C_2,\] it follows that $\Pr[h_1(S_0)\neq \F_q^\ell]\le \exp(-\frac{1}8 \cdot 8\ln 4) < 1/2$. This means\begin{align*}\log_q(4m) & = C_2-C_1+\log_2 \ell + \ell + \log_2\log_2 q \\ \implies m & = \frac{1}4 q^{C_2-C_1+\log_2 \ell + \ell + \log_2\log_2 q}\le q^{15 + \log_2\log_2 q}q^{\log_2 \ell + \ell}.\end{align*} For such an $m$, it follows \[\Pr[h(S)\neq \F_q^\ell] = \Pr[|S_0| < m/2] + \Pr\left[h_1(S_0)\neq \F_q^\ell \;\bigg|\; |S_0| \ge m/2\right]\le \frac{1}4 + \frac{1}4 = \frac{1}2.\qedhere\]
\end{enumerate}
\end{proof}

With these covering properties, a max load tail bound can be derived.

\begin{theorem} Let $t\ge 1$, $u\ge \ell\ge 1$ be integers, and let $S\subset \F_q^u$ be an arbitrary subset of size $q^\ell$. For uniformly random linear $h:\F_q^u\to\F_q^\ell$, 
\label{thm:admpt-t-tail}
    \[
    \Pr[ M(S,h) \ge q^{15+\log_2\log_2 q + t + \log_2 t}] < 2q^{-t(t - \log_2 \ell + \log_2 t)}.
    \]
\end{theorem}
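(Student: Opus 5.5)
The plan is to combine the two covering facts of \cref{lem:covering}. A heavy bin of $h$ forces a large full subspace in the image of a \emph{finer} random map. That in turn forces a sparse set to contain an entire coset of a random $t$-dimensional subspace, and the squaring argument behind \cref{lem:square-decay-tail} shows this is very unlikely.

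\textbf{Refining $h$.} Let $L\coloneq q^{15+\log_2\log_2 q+t+\log_2 t}$. Draw an independent uniformly random linear map $g\colon \F_q^u\to\F_q^t$ and set $H\coloneq (h,g)\colon \F_q^u\to\F_q^{\ell+t}$, which is uniformly random linear. Suppose $M(S,h)\ge L$, witnessed by some $y$ with $F_y\coloneq h^{-1}(y)\cap S$ of size at least $L$. Then $F_y$ depends only on $h$, and $g$ is independent of $h$.
\begin{itemize}
\item By the proof of \cref{lem:covering}(b) applied with $\ell$ replaced by $t$, the set $g(F_y)$ equals all of $\F_q^t$ with probability at least $1/2$. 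The proof gives the threshold $q^{15+\log_2\log_2 q}t^{\log_2 q}q^t$, which is exactly $L$.
\item On this event, $H(S)$ contains the whole flat $\{y\}\times\F_q^t$.
\end{itemize}
Hence
\[
\Pr[M(S,h)\ge L]\le 2\Pr_H\bigl[\exists y:\ \{y\}\times \F_q^t\subseteq H(S)\bigr].
\]

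\textbf{Symmetrizing the direction.} Precompose the coordinates of $\F_q^{\ell+t}$ with a uniformly random invertible matrix $A$; this leaves the distribution of $H$ unchanged. The event becomes that the image $A_0\coloneq H(S)$ contains a full coset of a uniformly random $t$-dimensional subspace $W$. Conditioning on $H$, the set $A_0$ is fixed and has density
\[
\alpha_0\coloneq \frac{\abs{A_0}}{q^{\ell+t}}\le \frac{q^\ell}{q^{\ell+t}}=q^{-t}.
\]
Generate $W$ as the span of independent uniform vectors $v_1,\dots,v_t$. (Degenerate spans only make containment easier, so this is an upper bound; a small correction handles the conditioning on full rank.) Define
\[
A_{i+1}\coloneq \{x : x+\mathrm{span}(v_1,\dots,v_{i+1})\subseteq A_0\},
\]
with density $\alpha_{i+1}$.

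\textbf{Squaring argument.} As in the proof of \cref{lem:covering}(a), but with intersections in place of unions, we have $A_{i+1}\subseteq A_i\cap(A_i-v_{i+1})$. Therefore $\alpha_{i+1}\le \alpha_i$ and $\E[\alpha_{i+1}\mid v_1,\dots,v_i]\le\alpha_i^2$. A coset of $W$ lies in $A_0$ iff $A_t\neq\emptyset$, i.e.\ iff $\alpha_t\ge q^{-(\ell+t)}$. \Cref{lem:square-decay-tail} with threshold $q^{-(\ell+t)}$ then gives
\[
\Pr[\alpha_t\ge q^{-(\ell+t)}]\le q^{-t\,(t-\log_2((\ell+t)\log_2 q)+\log_2(t\log_2 q))}=q^{-t(t-\log_2(\ell+t)+\log_2 t)}.
\]
Multiplying by $2$ gives a bound of the claimed shape.

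\textbf{Main obstacle.} The step I expect to be most delicate is matching the exact exponent, which has $\log_2\ell$ rather than $\log_2(\ell+t)$. Two fixes look plausible, and I would try the first one first.
\begin{itemize}
\item Define the refinement via a map to $\F_q^{\ell}$ whose fibers already split $h$'s bins. Concretely, take $h$ itself to be $(h',g)$ with $h'\colon\F_q^u\to\F_q^{\ell-t}$, so the ambient dimension stays $\ell$ and the threshold becomes $q^{-\ell}$ with density $q^{-t}$. This requires adjusting the load threshold by the $q^t$ factor that is already built into $L$.
\item Alternatively, note that when $t>\ell$ the claimed bound is trivial or absorbed, and otherwise $\log_2(\ell+t)\le\log_2\ell+1$. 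The resulting extra factor $q^{-t}$-slack can be absorbed using the strict inequality and the constant $15$.
\end{itemize}
I would also double-check the independence bookkeeping: the bound from \cref{lem:covering}(b) must hold for the specific heavy fiber chosen as a function of $h$, which is valid since $g$ is independent of $h$.
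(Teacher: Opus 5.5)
\textbf{Genuine gap: the final step proves a weaker exponent, and neither proposed repair fixes it.}

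Your reduction is sound, and its first half is arguably cleaner than the paper's. With $H=(h,g)$ and $g$ independent of $h$, \cref{lem:covering}(b) applies verbatim to the heavy fiber $F_y\subseteq\F_q^u$. The paper instead factors $h=h_2\circ h_1$ and needs the restriction of $h_1$ to a fiber to be random linear or affine.

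The gap is in the last step, exactly where you flagged it. Using only $A_t\neq\emptyset\iff\alpha_t\ge q^{-(\ell+t)}$, you obtain $2q^{-t(t-\log_2(\ell+t)+\log_2 t)}$. This exceeds the claimed bound by the factor $q^{t\log_2(1+t/\ell)}$.
\begin{itemize}
\item Your first repair fails. If $h=(h',g)$, then $g$ is constant on every bin of $h$, so no independent randomness is left to spread a heavy bin. Moreover, $h(S)\subseteq\F_q^\ell$ can have density $1$ rather than $q^{-t}$, since $\abs{S}=q^\ell$.
\item Your second repair also fails. Discarding large $t$ is fine: for $t>\ell-15$ one has $L\ge q^{\ell+1}>\abs{S}\ge M(S,h)$, so the probability is $0$. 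But for $t\le\ell$, the estimate $\log_2(\ell+t)\le\log_2\ell+1$ still leaves a multiplicative loss of up to $q^{t}$ in the probability (about $q^{0.58t}$ at $t=\ell/2$). That loss cannot be absorbed. The $15$ enters only the load threshold, where it is what \cref{lem:covering}(b) needs to make $g$ onto $\F_q^t$, and refining further would need a larger load. Strictness of the inequality buys nothing multiplicative.
\end{itemize}

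\textbf{The missing observation: the set of good base points is a union of cosets.} Couple the uniform $t$-dimensional $W$ with your vectors so that $W\supseteq\mathrm{span}(v_1,\dots,v_t)$. If the span is degenerate, complete it uniformly at random; invariance under invertible maps makes $W$ uniform. This also makes your full-rank correction unnecessary.

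Then $B\coloneq\{x : x+W\subseteq A_0\}$ satisfies $B\subseteq A_t$ and $B+W=B$. So if some coset of $W$ lies in $A_0$, then $\abs{A_t}\ge\abs{B}\ge q^t$, i.e.\ $\alpha_t\ge q^{-\ell}$ rather than merely $q^{-(\ell+t)}$.

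Now apply \cref{lem:square-decay-tail} with threshold $q^{-\ell}$ and $\alpha_0=q^{-t}$. Pad $A_0$ to exactly $q^\ell$ points first; this only enlarges the event. The result is $q^{-t(t-\log_2(\ell\log_2 q)+\log_2(t\log_2 q))}=q^{-t(t-\log_2\ell+\log_2 t)}$, which is the claimed exponent.

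This is precisely what the paper's proof extracts from \cref{lem:covering}(a), applied to $h_2\colon\F_q^{\ell+t}\to\F_q^\ell$ and the complement of $h_1(S)$. There, $\alpha_k<q^{-\ell}$ forces full coverage because $A_k\subseteq h^{-1}(h(A))$, which is a union of full cosets of $\ker h\supseteq\mathrm{span}(v_1,\dots,v_k)$.
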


\begin{proof}
Denote $T\coloneqq q^{15+\log_2\log_2 q + \log_2 t}q^t$. We will factor through $\F_q^{\ell + t}$. Let $h_1:\F_q^u\to\F_q^{\ell + t}$ be a random linear map, and let $h_2: \F_q^{\ell + t}\to \F_q^\ell$ be a random surjective linear map.
Denote the event \[\calE = \{\exists y\in \F_q^\ell: h_2^{-1}(y)\subset h_1(S)\}.\] We first show that $\Pr[\calE \mid M(S,h)\ge T]\ge 1/2$. Condition on $h$ and $h_2$ such that  $|h^{-1}(y)\cap S|\ge T$ for some $y$. Consider the distribution over all $h_1$ after this conditioning. We will show that the probability $\calE$ is satisfied is at least $1/4$, which implies the same lower bound in the weaker conditioning of $M(S,h)\ge T$.  Denote $V \coloneqq h^{-1}(y)$ and $W \coloneqq h^{-1}_2(y)\cong \F_q^t$.  Let $h_1' = h_1|_{V}$, which we note is a map taking $V$ to $W$. By Proposition~3.4 of \cite{admpt}, $h_1'$ is either a random linear or random affine linear map. Let $S' = S\cap V$. Applying \cref{lem:covering} to $h_1'$ and $S'\subset V$, it follows in this conditional distribution, \[\Pr[\calE] \ge \Pr\braks{h_2^{-1}(y)\subset h_1(S)} \ge \Pr\braks{h_1'(S') = \F_q^t}\ge \frac{1}2.\]

    Therefore, $\Pr[\calE \mid M(S,h)\ge T]\ge 1/2$, and so we have \begin{equation}\label{eq:admpt-tail}\Pr[M(S,h)\ge T] = \frac{\Pr[\calE]}{\Pr[\calE \mid M(S,h)\ge T]}\le 2 \Pr[\calE].\end{equation} To estimate $\Pr[\calE]$, we phrase it as a covering problem. Let $S_1\coloneqq \F_q^{\ell + t}\setminus h_1(S)$. We see \[\calE\iff h_2(S_1) \neq \F_q^\ell.\]  Therefore, by \cref{lem:covering} and the fact that $1-|S_1|/q^{\ell + t} \le q^{\ell}/q^{\ell + t} = q^{-t} $, we have \[\Pr[\calE]\le q^{-t(t - \log_2 \ell - \log_2\log_2 q+ \log_2\log_2(q^{t}))} = q^{-t(t - \log_2 \ell + \log_2 t)}.\] Combining this with \eqref{eq:admpt-tail} yields the desired result.
\end{proof}

\begin{corollary}
\label{thm:ADMPT-bound}
 Let $u\ge \ell\ge 1$ be integers, and let $h: \F_q^u\to\F_q^\ell$ be a random linear map. For arbitrary $S\subset \F_q^u$ of size $q^\ell$, we have
    \[\Pr[M(S,h) > wq^{15 + \log_2 \log_2 q}\ell^{\log_2 q}] \le 2w^{-\log_q w}.\]
\end{corollary}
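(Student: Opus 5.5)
We derive \cref{thm:ADMPT-bound} from \cref{thm:admpt-t-tail} by choosing $t$ in terms of $w$ and $\ell$.

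\textbf{Threshold comparison.} Set $t \coloneq \lceil \log_q w \rceil + \lceil \log_2 \ell\rceil$, or more simply $t$ with $q^t \approx w\,\ell^{\log_2 q}$. Then $q^{t} \lesssim q^2 \cdot w \cdot q^{\log_2 \ell} = q^2 w\ell^{\log_2 q}$. The threshold in \cref{thm:admpt-t-tail} also contains the factor $q^{\log_2 t}$. To make it at most $w q^{15+\log_2\log_2 q}\ell^{\log_2 q}$, I would instead take
\[
t \coloneq \lfloor \log_q w\rfloor + \lfloor \log_2 \ell\rfloor - \lceil\log_2 t\rceil,
\]
defined implicitly, or just $t = \lfloor \log_q w + \log_2\ell - \log_2(\log_q w+\log_2\ell)\rfloor$. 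With this choice, $t + \log_2 t \le \log_q w + \log_2 \ell$. Hence $q^{15+\log_2\log_2 q+t+\log_2 t}\le w q^{15+\log_2\log_2 q}\ell^{\log_2 q}$, using $q^{\log_2\ell}=\ell^{\log_2 q}$. Monotonicity of the event in the threshold then gives
\[
\Pr[M > wq^{15+\log_2\log_2 q}\ell^{\log_2 q}] \le 2q^{-t(t-\log_2\ell+\log_2 t)}.
\]

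\textbf{Exponent comparison.} It remains to show $t(t-\log_2\ell+\log_2 t)\ge (\log_q w)^2$. By the choice of $t$, $t - \log_2\ell + \log_2 t \ge \log_q w - 1$ (up to floor losses). So the exponent is at least about $t(\log_q w - 1)$. Since $t\ge \log_q w$ whenever $\log_2\ell \gtrsim \log_2 t + 1$, this is roughly $(\log_q w)^2 - \log_q w$.

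\textbf{Main obstacle.} The main difficulty is absorbing these floor and ``$-1$'' losses. The plan is to shift the constant, using spare slack in the leading $q^{15}$ (replacing $15$ with something slightly smaller inside the argument), so that $t$ can be taken one larger and exactly $(\log_q w)^2$ is recovered.

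\textbf{Edge cases.} When $w\le q$ the claimed bound $2w^{-\log_q w}\ge 2q^{-1}\cdot$ is at least trivial-ish; more precisely for $w\le 1$ the bound is $\ge 2$ and trivial. Small $w$ (so $t<1$) is handled by noting the right-hand side exceeds $1$, or by directly applying \cref{thm:admpt-t-tail} with $t=1$. The case where $\log_2\ell$ is small relative to $\log_2 t$ needs a separate check: there $t \approx \log_q w$ dominates, and the term $\log_2 t - \log_2\ell \ge 0$ only helps.
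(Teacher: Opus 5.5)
There is a genuine gap. Your threshold comparison is correct, but the proposal stops exactly at the step that matters, and that step cannot be carried out from \cref{thm:admpt-t-tail} as stated. Write $L=\log_q w$ and $A=L+\log_2\ell$. Call an integer $t$ admissible if $t+\log_2 t\le A$. For every admissible $t$ the exponent $t(t-\log_2\ell+\log_2 t)$ is at most $tL$. So you would need $t+\log_2 t$ to land essentially on $A$, and integrality forbids this. Concretely, take $q=2$, $L=100$, $\ell=125$:
\begin{itemize}
\item the threshold is $2^{115}\cdot 125\approx 2^{121.97}<2^{125}=|S|$, so the case is nontrivial;
\item the largest admissible $t$ is $100$;
\item the best exponent available is $100\,(100+\log_2 100-\log_2 125)\approx 9968<10^4=L^2$.
\end{itemize}
So ``using spare slack in $q^{15}$'' is not bookkeeping inside the corollary. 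It means reproving \cref{thm:admpt-t-tail}, down to Lemma~\ref{lem:covering}(b), with a smaller constant. You would also have to check that the gain covers a rounding loss in $t+\log_2 t$ that can approach $1+\log_2(1+1/t)$. You do not do any of this.

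Your edge-case paragraph is also wrong. We have $w^{-\log_q w}=q^{-L^2}\le 1$ for every $w>0$, so the right-hand side is never $\ge 2$. For $w<1$ it is tiny, and the statement is actually false there: if the threshold is below $1$, the probability is $1$. You must assume $w\ge 1$, which is the only case the paper uses.

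For comparison, the paper takes $t$ to be the smallest integer with $t+\log_2 t\ge \log_2\ell+\log_q w+\log_2\log_q w$. This makes the exponent bound immediate, since $t\ge L$ and $t-\log_2\ell+\log_2 t\ge L$. But with that choice the threshold in \cref{thm:admpt-t-tail} is at least $(\log_q w)^{\log_2 q}$ times the corollary's, so the inclusion of events it needs runs the wrong way. The two difficulties are mirror images. The example above shows that, with the constant $15$ on both sides, no choice of $t$ in \cref{thm:admpt-t-tail} yields the corollary exactly as stated.

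What your approach does prove cleanly, for $w\ge 1$, is the bound with $q^{17}$ in place of $q^{15}$:
\begin{itemize}
\item If $wq^{17+\log_2\log_2 q}\ell^{\log_2 q}\ge q^\ell$, the probability is $0$.
\item Otherwise let $t\ge 1$ be the largest integer with $t+\log_2 t\le A+2$. Then the theorem's threshold is at most the new one.
\item Maximality gives $t+\log_2 t>A+1-\log_2(1+1/t)\ge A$, so $X:=t-\log_2\ell+\log_2 t>L\ge 0$.
\item Nontriviality gives $t+\log_2 t\le A+2<\ell$, hence $t<\ell$ and therefore $t>X$.
\item The exponent is then $tX>X^2>L^2$, which gives the bound $2w^{-\log_q w}$.
\end{itemize}
This weaker form is all that region (3) in the proof of \cref{thm:eml-final} needs.
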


\begin{proof}
    Set $t$ to be the smallest integer such that $t + \log_2 t \ge \log_2\ell + \log_q w + \log_2\log_q w$. Thus, $t \ge \log_q w$, and plugging this $t$ into \cref{thm:admpt-t-tail},
    \[\Pr[M(S,h) \ge wq^{15 + \log_2 \log_2 q}\ell^{\log_2 q}]\le 2q^{-t \log_q w} = 2w^{-t}\le 2w^{-\log_q w}.\qedhere\]
\end{proof}
It is not hard to see that integrating this tail bound yields the following expectation bound.
\begin{corollary}

    \[\E_h[M(S,h)] = O(q^{15+\log_2\log_2 q}\ell^{\log_2 q}).\]
\end{corollary}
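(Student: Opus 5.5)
The plan is to integrate the tail bound of \cref{thm:admpt-t-tail} directly. I will not use \cref{thm:ADMPT-bound}, because its $w$-integral costs an extra $q^{\Theta(1)}$ factor. Write $A_q \coloneq q^{15+\log_2\log_2 q}$ and note that $\ell^{\log_2 q} = q^{\log_2 \ell}$, so the goal is $\E_h[M(S,h)] = O(A_q\, q^{\log_2\ell})$ with an absolute constant.

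\textbf{Step 1 (discretize).} For integer $t\ge1$ let $\theta_t \coloneq A_q\, q^{t+\log_2 t}$. Since $M(S,h)\le |S| = q^\ell$, we have
\[
\E[M] \le \theta_{t_0} + \sum_{t\ge t_0} \theta_{t+1}\Pr[M\ge \theta_t],
\]
where $t_0$ is the least integer with $t_0+\log_2 t_0\ge \log_2\ell$.

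\textbf{Step 2 (sum the tail).} Write $t = t_0+j$. Then $t-\log_2\ell+\log_2 t \ge j$, so \cref{thm:admpt-t-tail} gives $\Pr[M\ge\theta_t]\le 2q^{-tj}$. Moreover $\theta_{t+1}/\theta_{t_0}\le q^{2(j+1)}$, because each unit increment of $t$ raises $t+\log_2 t$ by at most $2$. Hence
\[
\E[M]\le \theta_{t_0}\Bigl(1+\sum_{j\ge0} 2q^{2(j+1)-(t_0+j)j}\Bigr).
\]
For $t_0\ge 4$ the $j\ge 1$ terms form a geometric series bounded by an absolute constant. The $j=0$ term is $2q^{2}$, and for $\ell$ below an absolute constant the bound is trivial.

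\textbf{Step 3 (main obstacle).} Two sources of slack remain: $\theta_{t_0}$ itself can be as large as $q^{2}A_q\,\ell^{\log_2 q}$ because of rounding $t_0$ to an integer, and the $j=0$ term contributes a further $q^2$. Both are polynomial in $q$ and so are not absorbed by an absolute $O(\cdot)$.

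To handle them, I would reopen \cref{lem:covering}(b). There, $m = \frac14 q^{C_2-C_1+\log_2\ell+\ell+\log_2\log_2 q}$ with $C_2-C_1 = 8\ln 4-\log_2\log_2(8/7)\approx 13.99$, and the factor $\frac14$ contributes a further $q^{-\log_q 4}$. So the true requirement is about $q^{13.99}$ against the stated $q^{15}$, and this slack must be made to pay for the rounding.

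Concretely, I would rerun \cref{thm:admpt-t-tail} with a real-valued $t$ in place of the integer $t$. This is done by factoring through a dimension chosen with $q^{t}$ replaced by the exact threshold, and it removes the rounding loss. I would also start the sum at $j=1$, since for $t\le t_0$ the trivial bound $\Pr\le1$ is already charged to $\theta_{t_0}$. After these changes the leading constant should be $A_q$ times an absolute constant, giving $O(q^{15+\log_2\log_2 q}\ell^{\log_2 q})$.

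If this slack accounting turns out to be insufficient, the fallback is to track the exact constant $C_2-C_1$ through \cref{lem:covering} and \cref{thm:admpt-t-tail}. That yields the same statement with $15$ interpreted as an upper bound on that constant plus the rounding cost.
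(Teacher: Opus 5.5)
Your Steps 1--2 are correct, and they are a reasonable alternative to the paper's proof. The paper's proof is only the remark that one integrates the $w$-form tail of \cref{thm:ADMPT-bound}. You are right that this one-liner does not give an absolute constant: $\int_1^\infty 2w^{-\log_q w}\,dw=\int_0^\infty 2q^{s-s^2}\ln q\,ds=\Theta(q^{1/4}\sqrt{\ln q})$. Moreover, the proof of \cref{thm:ADMPT-bound} takes $t$ large enough that $\theta_t$ is (for $w\ge q$) at least the target threshold. Bounding $\Pr[M\ge \text{target}]$ by the tail at $\theta_t$ needs the opposite inequality. Repairing this reintroduces exactly the integer-rounding loss you found. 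So the paper's route and your Steps 1--2 both prove the statement only up to a factor polynomial in $q$; your Steps 1--2 give $\E[M]=O(q^{4}\cdot q^{15+\log_2\log_2 q}\ell^{\log_2 q})$. That is also the strength in which the bound is used (cf.\ the entry $q^{O(\log_2\log_2 q)}$ in \cref{tab:hashing}). Under that reading you can stop after Step 2.

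The gap is Step 3, i.e.\ the absolute-constant version you set out to prove.

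First, $t$ cannot be made real-valued. It is the dimension of the intermediate space $\F_q^{\ell+t}$ and of the fiber $W\cong\F_q^t$ to which \cref{lem:covering}(b) is applied. So the only thresholds carrying a tail bound are the $\theta_t$, with $\theta_{t+1}/\theta_t=q(1+1/t)^{\log_2 q}$. A target between consecutive $\theta_t$ must use the smaller $t$, which loses up to $q^{1+o(1)}$ however the dimension is ``chosen.''

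Second, starting the sum at $j=1$ while keeping $\theta_{t_0}$ as the leading term silently drops the event $\{\theta_{t_0}\le M<\theta_{t_0+1}\}$. On that event the $t_0$-tail bound $2q^{-t_0x}$, with $x=t_0+\log_2 t_0-\log_2\ell$, can be vacuous because $x$ may be near $0$. That event must therefore be charged at $\theta_{t_0+1}$.

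Hence the rounding loss is intrinsic to this argument, and the only remedy is genuine slack in the constant $15$. You do not carry out that accounting, and your numbers are off. The lemma's constant is $8\ln 4-\log_2\log_2(8/7)\approx 13.47$, not $13.99$. The factor $\frac14$ is an absolute constant and buys nothing polynomial in $q$. The proof of \cref{lem:covering}(b) also ignores the floor in $\lfloor\log_q(4m)\rfloor$, which costs up to another power of $q$. So the remaining slack is comparable to the $q^{1+o(1)}$ you need to absorb. Whether it suffices uniformly in $q$ and $\ell$ is exactly the unproved step, and your fallback changes the statement rather than proving it.
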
